\newtheorem{theorem}{Theorem}[section]
\newtheorem{axiom}[theorem]{Axiom}
\newtheorem{conjecture}[theorem]{Conjecture}
\newtheorem{corollary}[theorem]{Corollary}
\newtheorem{definition}[theorem]{Definition}
\newtheorem{example}[theorem]{Example}
\newtheorem{exercise}[theorem]{Exercise}
\newtheorem{lemma}[theorem]{Lemma}
\newtheorem{notation}[theorem]{Notation}
\newtheorem{proposition}[theorem]{Proposition}
\newtheorem{remark}[theorem]{Remark}
\newenvironment{proof}[1][Proof]{\noindent\textbf{#1.} }{\ \rule{0.5em}{0.5em}}
\renewcommand{\theequation}{\thesection.\arabic{equation}}
\let\pdfoutput=\undefined\fi
\chardef\@x10\chardef\@xv60
\def\tcitime{
\def\@time{%
  \@minute\time\@hour\@minute\divide\@hour\@xv
  \ifnum\@hour<\@x 0\fi\the\@hour:%
  \multiply\@hour\@xv\advance\@minute-\@hour
  \ifnum\@minute<\@x 0\fi\the\@minute
  }}%
\def\x@hyperref#1#2#3{%
   \catcode`\~ = 12
   \catcode`\$ = 12
   \catcode`\_ = 12
   \catcode`\# = 12
   \catcode`\& = 12
   \y@hyperref{#1}{#2}{#3}%
}
\def\y@hyperref#1#2#3#4{%
   #2\ref{#4}#3
   \catcode`\~ = 13
   \catcode`\$ = 3
   \catcode`\_ = 8
   \catcode`\# = 6
   \catcode`\& = 4
}
\def\QCTOpt[#1]#2{%
  \def\QCTOptB{#1}
  \def\QCTOptA{#2}
}
\def\QCTNOpt#1{%
  \def\QCTOptA{#1}
  \let\QCTOptB\empty
}
\def\Qct{%
  \@ifnextchar[{%
    \QCTOpt}{\QCTNOpt}
}
\def\QCBOpt[#1]#2{%
  \def\QCBOptB{#1}%
  \def\QCBOptA{#2}%
}
\def\QCBNOpt#1{%
  \def\QCBOptA{#1}%
  \let\QCBOptB\empty
}
\def\Qcb{%
  \@ifnextchar[{%
    \QCBOpt}{\QCBNOpt}%
}
\def\PrepCapArgs{%
  \ifx\QCBOptA\empty
    \ifx\QCTOptA\empty
      {}%
    \else
      \ifx\QCTOptB\empty
        {\QCTOptA}%
      \else
        [\QCTOptB]{\QCTOptA}%
      \fi
    \fi
  \else
    \ifx\QCBOptA\empty
      {}%
    \else
      \ifx\QCBOptB\empty
        {\QCBOptA}%
      \else
        [\QCBOptB]{\QCBOptA}%
      \fi
    \fi
  \fi
}
\def\GRAPHICSPS#1{%
 \ifcase\GRAPHICSTYPE
   \special{ps: #1}%
 \or
   \special{language "PS", include "#1"}%
 \fi
}%
\def\graffile#1#2#3#4{%
    \bgroup
	   \@inlabelfalse
       \leavevmode
       \@ifundefined{bbl@deactivate}{\def~{\string~}}{\activesoff}%
        \raise -#4 \BOXTHEFRAME{%
           \hbox to #2{\raise #3\hbox to #2{\null #1\hfil}}}%
    \egroup
}%
\def\draftbox#1#2#3#4{%
 \leavevmode\raise -#4 \hbox{%
  \frame{\rlap{\protect\tiny #1}\hbox to #2%
   {\vrule height#3 width\z@ depth\z@\hfil}%
  }%
 }%
}%
\let\nographics=\@msidraft
\newif\ifwasdraft
\def\GRAPHIC#1#2#3#4#5{%
   \ifnum\@msidraft=\@ne\draftbox{#2}{#3}{#4}{#5}%
   \else\graffile{#1}{#3}{#4}{#5}%
   \fi
}
\def\addtoLaTeXparams#1{%
    \edef\LaTeXparams{\LaTeXparams #1}}%
\newif\ifBoxFrame \BoxFramefalse
\newif\ifOverFrame \OverFramefalse
\newif\ifUnderFrame \UnderFramefalse
\def\BOXTHEFRAME#1{%
   \hbox{%
      \ifBoxFrame
         \frame{#1}%
      \else
         {#1}%
      \fi
   }%
}
\def\doFRAMEparams#1{\BoxFramefalse\OverFramefalse\UnderFramefalse\readFRAMEparams#1\end}%
\def\readFRAMEparams#1{%
 \ifx#1\end%
  \let\next=\relax
  \else
  \ifx#1i\dispkind=\z@\fi
  \ifx#1d\dispkind=\@ne\fi
  \ifx#1f\dispkind=\tw@\fi
  \ifx#1t\addtoLaTeXparams{t}\fi
  \ifx#1b\addtoLaTeXparams{b}\fi
  \ifx#1p\addtoLaTeXparams{p}\fi
  \ifx#1h\addtoLaTeXparams{h}\fi
  \ifx#1X\BoxFrametrue\fi
  \ifx#1O\OverFrametrue\fi
  \ifx#1U\UnderFrametrue\fi
  \ifx#1w
    \ifnum\@msidraft=1\wasdrafttrue\else\wasdraftfalse\fi
    \@msidraft=\@ne
  \fi
  \let\next=\readFRAMEparams
  \fi
 \next
 }%
\def\IFRAME#1#2#3#4#5#6{%
      \bgroup
      \let\QCTOptA\empty
      \let\QCTOptB\empty
      \let\QCBOptA\empty
      \let\QCBOptB\empty
      #6%
      \parindent=0pt
      \leftskip=0pt
      \rightskip=0pt
      \setbox0=\hbox{\QCBOptA}%
      \@tempdima=#1\relax
      \ifOverFrame
          \typeout{This is not implemented yet}%
          \show\HELP
      \else
         \ifdim\wd0>\@tempdima
            \advance\@tempdima by \@tempdima
            \ifdim\wd0 >\@tempdima
               \setbox1 =\vbox{%
                  \unskip\hbox to \@tempdima{\hfill\GRAPHIC{#5}{#4}{#1}{#2}{#3}\hfill}%
                  \unskip\hbox to \@tempdima{\parbox[b]{\@tempdima}{\QCBOptA}}%
               }%
               \wd1=\@tempdima
            \else
               \textwidth=\wd0
               \setbox1 =\vbox{%
                 \noindent\hbox to \wd0{\hfill\GRAPHIC{#5}{#4}{#1}{#2}{#3}\hfill}\\%
                 \noindent\hbox{\QCBOptA}%
               }%
               \wd1=\wd0
            \fi
         \else
            \ifdim\wd0>0pt
              \hsize=\@tempdima
              \setbox1=\vbox{%
                \unskip\GRAPHIC{#5}{#4}{#1}{#2}{0pt}%
                \break
                \unskip\hbox to \@tempdima{\hfill \QCBOptA\hfill}%
              }%
              \wd1=\@tempdima
           \else
              \hsize=\@tempdima
              \setbox1=\vbox{%
                \unskip\GRAPHIC{#5}{#4}{#1}{#2}{0pt}%
              }%
              \wd1=\@tempdima
           \fi
         \fi
         \@tempdimb=\ht1
         \advance\@tempdimb by -#2
         \advance\@tempdimb by #3
         \leavevmode
         \raise -\@tempdimb \hbox{\box1}%
      \fi
      \egroup%
}%
\def\DFRAME#1#2#3#4#5{%
  \vspace\topsep
  \hfil\break
  \bgroup
     \leftskip\@flushglue
	 \rightskip\@flushglue
	 \parindent\z@
	 \parfillskip\z@skip
     \let\QCTOptA\empty
     \let\QCTOptB\empty
     \let\QCBOptA\empty
     \let\QCBOptB\empty
	 \vbox\bgroup
        \ifOverFrame 
           #5\QCTOptA\par
        \fi
        \GRAPHIC{#4}{#3}{#1}{#2}{\z@}%
        \ifUnderFrame 
           \break#5\QCBOptA
        \fi
	 \egroup
  \egroup
  \vspace\topsep
  \break
}%
\def\FFRAME#1#2#3#4#5#6#7{%
  \@ifundefined{floatstyle}
    {
     \begin{figure}[#1]%
    }
    {
	 \ifx#1h
      \begin{figure}[H]%
	 \else
      \begin{figure}[#1]%
	 \fi
	}
  \let\QCTOptA\empty
  \let\QCTOptB\empty
  \let\QCBOptA\empty
  \let\QCBOptB\empty
  \ifOverFrame
    #4
    \ifx\QCTOptA\empty
    \else
      \ifx\QCTOptB\empty
        \caption{\QCTOptA}%
      \else
        \caption[\QCTOptB]{\QCTOptA}%
      \fi
    \fi
    \ifUnderFrame\else
      \label{#5}%
    \fi
  \else
    \UnderFrametrue%
  \fi
  \begin{center}\GRAPHIC{#7}{#6}{#2}{#3}{\z@}\end{center}%
  \ifUnderFrame
    #4
    \ifx\QCBOptA\empty
      \caption{}%
    \else
      \ifx\QCBOptB\empty
        \caption{\QCBOptA}%
      \else
        \caption[\QCBOptB]{\QCBOptA}%
      \fi
    \fi
    \label{#5}%
  \fi
  \end{figure}%
 }%
\def\makeactives{
  \catcode`\"=\active
  \catcode`\;=\active
  \catcode`\:=\active
  \catcode`\'=\active
  \catcode`\~=\active
}
   \gdef\activesoff{%
      \def"{\string"}%
      \def;{\string;}%
      \def:{\string:}%
      \def'{\string'}%
      \def~{\string~}%
    }
\def\FRAME#1#2#3#4#5#6#7#8{%
 \bgroup
 \ifnum\@msidraft=\@ne
   \wasdrafttrue
 \else
   \wasdraftfalse%
 \fi
 \def\LaTeXparams{}%
 \dispkind=\z@
 \def\LaTeXparams{}%
 \doFRAMEparams{#1}%
 \ifnum\dispkind=\z@\IFRAME{#2}{#3}{#4}{#7}{#8}{#5}\else
  \ifnum\dispkind=\@ne\DFRAME{#2}{#3}{#7}{#8}{#5}\else
   \ifnum\dispkind=\tw@
    \edef\@tempa{\noexpand\FFRAME{\LaTeXparams}}%
    \@tempa{#2}{#3}{#5}{#6}{#7}{#8}%
    \fi
   \fi
  \fi
  \ifwasdraft\@msidraft=1\else\@msidraft=0\fi{}%
  \egroup
 }%
\def\TEXUX#1{"texux"}
\def\NEG#1{\leavevmode\hbox{\rlap{\thinspace/}{$#1$}}}%
\def\func#1{\mathop{\rm #1}\nolimits}%
\long\def\QQQ#1#2{%
     \long\expandafter\def\csname#1\endcsname{#2}}%
\long\def\QQA#1#2{}%
\def\QTR#1#2{{\csname#1\endcsname {#2}}}%
\def\EXPAND#1[#2]#3{}%
\def\NOEXPAND#1[#2]#3{}%
\def\LaTeXparent#1{}%
\def\ChildStyles#1{}%
\def\ChildDefaults#1{}%
\def\QTagDef#1#2#3{}%
  \providecommand{\UNICODE}[2][]{\protect\rule{.1in}{.1in}}
  \providecommand{\U}[1]{\protect\rule{.1in}{.1in}}
\def\QQfnmark#1{\footnotemark}
 \def\abstract{%
  \if@twocolumn
   \section*{Abstract (Not appropriate in this style!)}%
   \else \small 
   \begin{center}{\bf Abstract\vspace{-.5em}\vspace{\z@}}\end{center}%
   \quotation 
   \fi
  }%
   \def\registered{\relax\ifmmode{}\r@gistered
                    \else$\m@th\r@gistered$\fi}%
 \def\r@gistered{^{\ooalign
  {\hfil\raise.07ex\hbox{$\scriptstyle\rm\text{R}$}\hfil\crcr
  \mathhexbox20D}}}}{}%
\newdimen\theight
\def\newfmtname{LaTeX2e}
  \DeclareOldFontCommand{\rm}{\normalfont\rmfamily}{\mathrm}
  \DeclareOldFontCommand{\sf}{\normalfont\sffamily}{\mathsf}
  \DeclareOldFontCommand{\tt}{\normalfont\ttfamily}{\mathtt}
  \DeclareOldFontCommand{\bf}{\normalfont\bfseries}{\mathbf}
  \DeclareOldFontCommand{\it}{\normalfont\itshape}{\mathit}
  \DeclareOldFontCommand{\sl}{\normalfont\slshape}{\@nomath\sl}
  \DeclareOldFontCommand{\sc}{\normalfont\scshape}{\@nomath\sc}
\def\alpha{{\Greekmath 010B}}%
\def\beta{{\Greekmath 010C}}%
\def\gamma{{\Greekmath 010D}}%
\def\delta{{\Greekmath 010E}}%
\def\epsilon{{\Greekmath 010F}}%
\def\zeta{{\Greekmath 0110}}%
\def\eta{{\Greekmath 0111}}%
\def\theta{{\Greekmath 0112}}%
\def\iota{{\Greekmath 0113}}%
\def\kappa{{\Greekmath 0114}}%
\def\lambda{{\Greekmath 0115}}%
\def\mu{{\Greekmath 0116}}%
\def\nu{{\Greekmath 0117}}%
\def\xi{{\Greekmath 0118}}%
\def\pi{{\Greekmath 0119}}%
\def\rho{{\Greekmath 011A}}%
\def\sigma{{\Greekmath 011B}}%
\def\tau{{\Greekmath 011C}}%
\def\upsilon{{\Greekmath 011D}}%
\def\phi{{\Greekmath 011E}}%
\def\chi{{\Greekmath 011F}}%
\def\psi{{\Greekmath 0120}}%
\def\omega{{\Greekmath 0121}}%
\def\varepsilon{{\Greekmath 0122}}%
\def\vartheta{{\Greekmath 0123}}%
\def\varpi{{\Greekmath 0124}}%
\def\varrho{{\Greekmath 0125}}%
\def\varsigma{{\Greekmath 0126}}%
\def\varphi{{\Greekmath 0127}}%
\def\nabla{{\Greekmath 0272}}
\def\FindBoldGroup{%
   {\setbox0=\hbox{$\mathbf{x\global\edef\theboldgroup{\the\mathgroup}}$}}%
}
\def\Greekmath#1#2#3#4{%
    \if@compatibility
        \ifnum\mathgroup=\symbold
           \mathchoice{\mbox{\boldmath$\displaystyle\mathchar"#1#2#3#4$}}%
                      {\mbox{\boldmath$\textstyle\mathchar"#1#2#3#4$}}%
                      {\mbox{\boldmath$\scriptstyle\mathchar"#1#2#3#4$}}%
                      {\mbox{\boldmath$\scriptscriptstyle\mathchar"#1#2#3#4$}}%
        \else
           \mathchar"#1#2#3#4%
        \fi 
    \else 
        \FindBoldGroup
        \ifnum\mathgroup=\theboldgroup 
           \mathchoice{\mbox{\boldmath$\displaystyle\mathchar"#1#2#3#4$}}%
                      {\mbox{\boldmath$\textstyle\mathchar"#1#2#3#4$}}%
                      {\mbox{\boldmath$\scriptstyle\mathchar"#1#2#3#4$}}%
                      {\mbox{\boldmath$\scriptscriptstyle\mathchar"#1#2#3#4$}}%
        \else
           \mathchar"#1#2#3#4%
        \fi     	    
	  \fi}
\newif\ifGreekBold  \GreekBoldfalse
\let\SAVEPBF=\pbf
\def\pbf{\GreekBoldtrue\SAVEPBF}%
  \newcounter{equationnumber}  
  \def\mathletters{%
     \addtocounter{equation}{1}
     \edef\@currentlabel{\theequation}%
     \setcounter{equationnumber}{\c@equation}
     \setcounter{equation}{0}%
     \edef\theequation{\@currentlabel\noexpand\alph{equation}}%
  }
    \def\BibTeX{{\rm B\kern-.05em{\sc i\kern-.025em b}\kern-.08em
                 T\kern-.1667em\lower.7ex\hbox{E}\kern-.125emX}}}{}%
\def\AmS{{\protect\usefont{OMS}{cmsy}{m}{n}%
                A\kern-.1667em\lower.5ex\hbox{M}\kern-.125emS}}}{}%
\def\@@eqncr{\let\@tempa\relax
    \ifcase\@eqcnt \def\@tempa{& & &}\or \def\@tempa{& &}%
      \else \def\@tempa{&}\fi
     \@tempa
     \if@eqnsw
        \iftag@
           \@taggnum
        \else
           \@eqnnum\stepcounter{equation}%
        \fi
     \fi
     \global\tag@false
     \global\@eqnswtrue
     \global\@eqcnt\z@\cr}
\def\TCItag{\@ifnextchar*{\@TCItagstar}{\@TCItag}}
\def\@TCItag#1{%
    \global\tag@true
    \global\def\@taggnum{(#1)}%
    \global\def\@currentlabel{#1}}
\def\@TCItagstar*#1{%
    \global\tag@true
    \global\def\@taggnum{#1}%
    \global\def\@currentlabel{#1}}
\def\ExitTCILatex{\makeatother }
\if@compatibility\message{amsmath already loaded}\fi\aftergroup\ExitTCILatex}
\if@compatibility\message{amstex already loaded}\fi\aftergroup\ExitTCILatex}
\if@compatibility\message{amsgen already loaded}\fi\aftergroup\ExitTCILatex}
\let\DOTSI\relax
\def\RIfM@{\relax\ifmmode}%
\def\FN@{\futurelet\next}%
\def\iint{\DOTSI\intno@\tw@\FN@\ints@}%
\def\iiint{\DOTSI\intno@\thr@@\FN@\ints@}%
\def\iiiint{\DOTSI\intno@4 \FN@\ints@}%
\def\idotsint{\DOTSI\intno@\z@\FN@\ints@}%
\def\ints@{\findlimits@\ints@@}%
\newif\iflimtoken@
\newif\iflimits@
\def\findlimits@{\limtoken@true\ifx\next\limits\limits@true
 \else\ifx\next\nolimits\limits@false\else
 \limtoken@false\ifx\ilimits@\nolimits\limits@false\else
 \ifinner\limits@false\else\limits@true\fi\fi\fi\fi}%
\def\multint@{\int\ifnum\intno@=\z@\intdots@                          
 \else\intkern@\fi                                                    
 \ifnum\intno@>\tw@\int\intkern@\fi                                   
 \ifnum\intno@>\thr@@\int\intkern@\fi                                 
 \int}
\def\multintlimits@{\intop\ifnum\intno@=\z@\intdots@\else\intkern@\fi
 \ifnum\intno@>\tw@\intop\intkern@\fi
 \ifnum\intno@>\thr@@\intop\intkern@\fi\intop}%
\def\intic@{%
    \mathchoice{\hskip.5em}{\hskip.4em}{\hskip.4em}{\hskip.4em}}%
\def\negintic@{\mathchoice
 {\hskip-.5em}{\hskip-.4em}{\hskip-.4em}{\hskip-.4em}}%
\def\ints@@{\iflimtoken@                                              
 \def\ints@@@{\iflimits@\negintic@
   \mathop{\intic@\multintlimits@}\limits                             
  \else\multint@\nolimits\fi                                          
  \eat@}
 \else                                                                
 \def\ints@@@{\iflimits@\negintic@
  \mathop{\intic@\multintlimits@}\limits\else
  \multint@\nolimits\fi}\fi\ints@@@}%
\def\intkern@{\mathchoice{\!\!\!}{\!\!}{\!\!}{\!\!}}%
\def\plaincdots@{\mathinner{\cdotp\cdotp\cdotp}}%
\def\intdots@{\mathchoice{\plaincdots@}%
 {{\cdotp}\mkern1.5mu{\cdotp}\mkern1.5mu{\cdotp}}%
 {{\cdotp}\mkern1mu{\cdotp}\mkern1mu{\cdotp}}%
 {{\cdotp}\mkern1mu{\cdotp}\mkern1mu{\cdotp}}}%
\def\RIfM@{\relax\protect\ifmmode}
\def\text{\RIfM@\expandafter\text@\else\expandafter\mbox\fi}
\let\nfss@text\text
\def\text@#1{\mathchoice
   {\textdef@\displaystyle\f@size{#1}}%
   {\textdef@\textstyle\tf@size{\firstchoice@false #1}}%
   {\textdef@\textstyle\sf@size{\firstchoice@false #1}}%
   {\textdef@\textstyle \ssf@size{\firstchoice@false #1}}%
   \glb@settings}
\def\textdef@#1#2#3{\hbox{{%
                    \everymath{#1}%
                    \let\f@size#2\selectfont
                    #3}}}
\newif\iffirstchoice@
\def\Let@{\relax\iffalse{\fi\let\\=\cr\iffalse}\fi}%
\def\vspace@{\def\vspace##1{\crcr\noalign{\vskip##1\relax}}}%
\def\multilimits@{\bgroup\vspace@\Let@
 \baselineskip\fontdimen10 \scriptfont\tw@
 \advance\baselineskip\fontdimen12 \scriptfont\tw@
 \lineskip\thr@@\fontdimen8 \scriptfont\thr@@
 \lineskiplimit\lineskip
 \vbox\bgroup\ialign\bgroup\hfil$\m@th\scriptstyle{##}$\hfil\crcr}%
\def\Sb{_\multilimits@}%
\def\endSb{\crcr\egroup\egroup\egroup}%
\def\Sp{^\multilimits@}%
\newdimen\ex@
\def\rightarrowfill@#1{$#1\m@th\mathord-\mkern-6mu\cleaders
 \hbox{$#1\mkern-2mu\mathord-\mkern-2mu$}\hfill
 \mkern-6mu\mathord\rightarrow$}%
\def\leftarrowfill@#1{$#1\m@th\mathord\leftarrow\mkern-6mu\cleaders
 \hbox{$#1\mkern-2mu\mathord-\mkern-2mu$}\hfill\mkern-6mu\mathord-$}%
\def\leftrightarrowfill@#1{$#1\m@th\mathord\leftarrow
\mkern-6mu\cleaders
 \hbox{$#1\mkern-2mu\mathord-\mkern-2mu$}\hfill
 \mkern-6mu\mathord\rightarrow$}%
\def\overrightarrow{\mathpalette\overrightarrow@}%
\def\overrightarrow@#1#2{\vbox{\ialign{##\crcr\rightarrowfill@#1\crcr
 \noalign{\kern-\ex@\nointerlineskip}$\m@th\hfil#1#2\hfil$\crcr}}}%
\def\overleftarrow{\mathpalette\overleftarrow@}%
\def\overleftarrow@#1#2{\vbox{\ialign{##\crcr\leftarrowfill@#1\crcr
 \noalign{\kern-\ex@\nointerlineskip}$\m@th\hfil#1#2\hfil$\crcr}}}%
\def\overleftrightarrow{\mathpalette\overleftrightarrow@}%
\def\overleftrightarrow@#1#2{\vbox{\ialign{##\crcr
   \leftrightarrowfill@#1\crcr
 \noalign{\kern-\ex@\nointerlineskip}$\m@th\hfil#1#2\hfil$\crcr}}}%
\def\underrightarrow{\mathpalette\underrightarrow@}%
\def\underrightarrow@#1#2{\vtop{\ialign{##\crcr$\m@th\hfil#1#2\hfil
  $\crcr\noalign{\nointerlineskip}\rightarrowfill@#1\crcr}}}%
\def\underleftarrow{\mathpalette\underleftarrow@}%
\def\underleftarrow@#1#2{\vtop{\ialign{##\crcr$\m@th\hfil#1#2\hfil
  $\crcr\noalign{\nointerlineskip}\leftarrowfill@#1\crcr}}}%
\def\underleftrightarrow{\mathpalette\underleftrightarrow@}%
\def\underleftrightarrow@#1#2{\vtop{\ialign{##\crcr$\m@th
  \hfil#1#2\hfil$\crcr
 \noalign{\nointerlineskip}\leftrightarrowfill@#1\crcr}}}%
\def\qopnamewl@#1{\mathop{\operator@font#1}\nlimits@}
\let\nlimits@\displaylimits
\def\setboxz@h{\setbox\z@\hbox}
\def\varlim@#1#2{\mathop{\vtop{\ialign{##\crcr
 \hfil$#1\m@th\operator@font lim$\hfil\crcr
 \noalign{\nointerlineskip}#2#1\crcr
 \noalign{\nointerlineskip\kern-\ex@}\crcr}}}}
 \def\rightarrowfill@#1{\m@th\setboxz@h{$#1-$}\ht\z@\z@
  $#1\copy\z@\mkern-6mu\cleaders
  \hbox{$#1\mkern-2mu\box\z@\mkern-2mu$}\hfill
  \mkern-6mu\mathord\rightarrow$}
\def\leftarrowfill@#1{\m@th\setboxz@h{$#1-$}\ht\z@\z@
  $#1\mathord\leftarrow\mkern-6mu\cleaders
  \hbox{$#1\mkern-2mu\copy\z@\mkern-2mu$}\hfill
  \mkern-6mu\box\z@$}
\def\projlim{\qopnamewl@{proj\,lim}}
\def\injlim{\qopnamewl@{inj\,lim}}
\def\varinjlim{\mathpalette\varlim@\rightarrowfill@}
\def\varprojlim{\mathpalette\varlim@\leftarrowfill@}
\def\varliminf{\mathpalette\varliminf@{}}
\def\varliminf@#1{\mathop{\underline{\vrule\@depth.2\ex@\@width\z@
   \hbox{$#1\m@th\operator@font lim$}}}}
\def\varlimsup{\mathpalette\varlimsup@{}}
\def\varlimsup@#1{\mathop{\overline
  {\hbox{$#1\m@th\operator@font lim$}}}}
\def\align{\@verbatim \frenchspacing\@vobeyspaces \@alignverbatim
You are using the "align" environment in a style in which it is not defined.}
\let\csname endalign*\endcsname =\endtrivlist
\def\alignat{\@verbatim \frenchspacing\@vobeyspaces \@alignatverbatim
You are using the "alignat" environment in a style in which it is not defined.}
\let\csname endalignat*\endcsname =\endtrivlist
\def\xalignat{\@verbatim \frenchspacing\@vobeyspaces \@xalignatverbatim
You are using the "xalignat" environment in a style in which it is not defined.}
\let\csname endxalignat*\endcsname =\endtrivlist
\def\gather{\@verbatim \frenchspacing\@vobeyspaces \@gatherverbatim
You are using the "gather" environment in a style in which it is not defined.}
\let\csname endgather*\endcsname =\endtrivlist
\def\multiline{\@verbatim \frenchspacing\@vobeyspaces \@multilineverbatim
You are using the "multiline" environment in a style in which it is not defined.}
\let\csname endmultiline*\endcsname =\endtrivlist
\def\arrax{\@verbatim \frenchspacing\@vobeyspaces \@arraxverbatim
You are using a type of "array" construct that is only allowed in AmS-LaTeX.}
\def\tabulax{\@verbatim \frenchspacing\@vobeyspaces \@tabulaxverbatim
You are using a type of "tabular" construct that is only allowed in AmS-LaTeX.}
\let\csname endarrax*\endcsname =\endtrivlist
\let\csname endtabulax*\endcsname =\endtrivlist
 \def\endequation{%
     \ifmmode\ifinner 
      \iftag@
        \addtocounter{equation}{-1} 
        $\hfil
           \displaywidth\linewidth\@taggnum\egroup \endtrivlist
        \global\tag@false
        \global\@ignoretrue   
      \else
        $\hfil
           \displaywidth\linewidth\@eqnnum\egroup \endtrivlist
        \global\tag@false
        \global\@ignoretrue 
      \fi
     \else   
      \iftag@
        \addtocounter{equation}{-1} 
        \eqno \hbox{\@taggnum}
        \global\tag@false%
        $$\global\@ignoretrue
      \else
        \eqno \hbox{\@eqnnum}
        $$\global\@ignoretrue
      \fi
     \fi\fi
 } 
 \newif\iftag@ \tag@false
 \def\TCItag{\@ifnextchar*{\@TCItagstar}{\@TCItag}}
 \def\@TCItag#1{%
     \global\tag@true
     \global\def\@taggnum{(#1)}%
     \global\def\@currentlabel{#1}}
 \def\@TCItagstar*#1{%
     \global\tag@true
     \global\def\@taggnum{#1}%
     \global\def\@currentlabel{#1}}
     \def\tag{\@ifnextchar*{\@tagstar}{\@tag}}
     \def\@tag#1{%
         \global\tag@true
         \global\def\@taggnum{(#1)}}
     \def\@tagstar*#1{%
         \global\tag@true
         \global\def\@taggnum{#1}}
\begin{document}

\title{$G_{2}$-structures for $N=1$ supersymmetric $\func{AdS}_{4}$
solutions of $M$-theory}
\author{Sergey Grigorian \\
School of Mathematical and Statistical Sciences\\
University of Texas Rio Grande Valley\\
1201 W. University Drive\\
Edinburg, TX 78539}
\maketitle

\begin{abstract}
We study the $N=1$ supersymmetric solutions of $D=11$ supergravity obtained
as a warped product of four-dimensional anti-de-Sitter space with a
seven-dimensional Riemannian manifold $M$. Using the octonion bundle
structure on $M$ we reformulate the Killing spinor equations in terms of
sections of the octonion bundle on $M$. The solutions then define a single
complexified $G_{2}$-structure on $M$ or equivalently two real $G_{2}$%
-structures. We then study the torsion of these $G_{2}$-structures and the
relationships between them.
\end{abstract}

\tableofcontents

\section{Introduction}

\setcounter{equation}{0}It is known that supersymmetric vacua of String
Theory and M-theory are determined by nowhere-vanishing spinors that satisfy
a first-order differential equation, known as a Killing spinor equation. The
precise form of the equation depends on particular details of the theory. In
particular, in a theory with vanishing fluxes, the Killing spinor equation
just reduces to an equation for a parallel spinor with respect to the
Levi-Civita connection. Since existence of non-trivial parallel spinors is
equivalent to a reduced holonomy group, this gives the key implication that
if the full $D$-dimensional spacetime is a product of the $4$-dimensional
Minkowski space and correspondingly, a $\left( D-4\right) $-dimensional
internal space, then the internal space needs to be a $6$-dimensional
Calabi-Yau manifold if $D=10$ and must be a $7$-dimensional manifold with
holonomy contained in $G_{2}$ if $D=11$. In terms of $G$-structures, this is
equivalent to a torsion-free $SU\left( 3\right) $-structure in $6$
dimensions and a torsion-free $G_{2}$-structure in $7$ dimensions. However,
if there are non-trivial fluxes, the Killing spinor equation will have
additional terms that are defined by the fluxes and the resulting Killing
spinors will no longer be parallel. In the setting of a (warped)\ product $%
M^{4}\times M^{D-4}$ this will then induce a non-flat metric on $M^{4}$ and
the spinors on the internal space will no longer define a torsion-free $G$%
-structure, instead there will be some torsion components. Moreover, if
quantum corrections are also considered, then the Killing spinor equation
will also contain additional terms, which will also lead to corrections away
from torsion-free structures. Since the early 1980's there has been great
progress in understanding the geometric implications of flux
compactifications of different theories with different details, such as
compactifications of type $IIA$, type $IIB$, and heterotic supergravities
for $D=10$, and compactifications of $D=11$ supergravities with different
amounts of supersymmetry \cite%
{Awada:1982pk,DuffPope:1986hr,Papadopoulos:1995da,Becker:2000rz,DallAgata:2003txk,Kaste:2003zd,Behrndt:2003uq,Behrndt:2003zg,Behrndt:2004bh,DallAgata:2005zlf,Gauntlett:2004hs,Lukas:2004ip,Lust:2004ig,Behrndt:2005bv,CveticSUSYflux,Grana:2005jc,Micu:2006ey,SparksFlux,Fernandez:2008wla,Ashmore:2015joa,Triendl:2015rta,Macpherson:2016xwk,Passias:2017yke}%
. There have also been attempts in including higher-order corrections and
approaches that do not necessarily factor Killing spinors into $4$%
-dimensional and $\left( D-4\right) $-dimensional pieces \cite%
{Gutowski:2014ova,Gran:2016zxk}. Other approaches also involved higher
derivative corrections \cite{Lu:2003ze,McOrist:2012yc,Becker:2014rea}. There
have also been multiple approaches using different aspects of generalized
geometry \cite%
{Grana:2005sn,Hull:2007zu,Coimbra:2015nha,Malek:2016bpu,Malek:2017njj,Coimbra:2017fqj}%
.

In this paper we will primarily consider a warped product compactification
of $11$-dimensional $N=1$ supergravity on $\func{AdS}_{4}\times M$, where $M$
is a $7$-dimensional space, however many of the results can also be adapted
for a Minkowski background. Our emphasis is to gain a better understanding
of the torsion of $G_{2}$-structures that are imposed by the supersymmetry
constraints. The properties of $G_{2}$-structures in supergravity
compactifications have been studied in both mathematical and physical
literature, but for different details of the theory \cite%
{Awada:1982pk,Kaste:2003zd,Behrndt:2004bh,Duff:2002rw,Friedrich:2001yp,Gutowski:2001fm,WittenBeasley,Agricola:2003by,AcharyaGukov,House:2004pm,GrigorianYau1,GrigorianG2Review}%
. Moreover, in the physics literature the results are usually expressed in
terms of an $SU\left( 3\right) $-structure or an $SU\left( 2\right) $%
-structure on the $7$-dimensional manifold \cite%
{DallAgata:2003txk,Behrndt:2004bh,Lukas:2004ip,CveticSUSYflux,Micu:2006ey,SparksFlux,Triendl:2015rta}%
. While this approach has its merits since it more readily allows to compare 
$11$-dimensional theories with $10$-dimensional ones, it does not use some
of the unique feature of $7$-dimensional $G_{2}$-structures.

Seven-dimensional manifolds with a $G_{2}$-structure have been of great
interest in differential geometry ever since Alfred Gray studied vector
cross products on orientable manifolds in 1969 \cite{Gray-VCP}. It turns out
that a $2$-fold vector cross product - that is, one that takes two vectors
and outputs another one, exists only in $3$ dimensions and in $7$
dimensions. The $3$-dimensional vector cross product is very well known in $%
\mathbb{R}^{3}$ and on a general oriented $3$-manifold it comes from the
volume $3$-form, so it is a special case of a $\left( n-1\right) $-fold
vector cross product in a $n$-dimensional space, where it also comes from
the volume form. In $7$ dimensions, however, the vector cross product
structure is even more special, since it is not part of a generic sequence.
The $3$-dimensional and $7$-dimensional vector cross products do however
have something in common since they are closely related to the normed
division algebras - the quaternions and octonions, which are $4$- and $8$%
-dimensional, respectively. A good review of vector cross product geometries
can also be found in \cite{ConanChapter}. On a $7$-manifold, the group that
preserves the vector cross product is precisely the $14$-dimensional Lie
group $G_{2}$ - this is the automorphism group of the octonion algebra. In
order to be able to define a vector cross product globally on a $7$%
-manifold, we need to introduce a $G_{2}$\emph{-structure}, which is now a
reduction of the frame bundle to $G_{2}.$ A necessary and sufficient
condition on the $7$-manifold to admit a $G_{2}$-structure is that it is
orientable and admits a spin structure. There are purely topological
conditions of the vanishing of the first two Stiefel-Whitney classes \cite%
{FernandezGray, FriedrichNPG2}. Once we have a $7$-manifold that satisfies
these conditions, any Riemannian metric $g$ will give rise to an $SO\left(
7\right) $-structure, and this could then be reduced to a $G_{2}$-structure.
By specifying a $G_{2}$-principal bundle, we are effectively also defining a 
$G_{2}$-invariant $3$-form $\varphi $ that is compatible with $g$ and gives
rise to the structure constants for the vector cross product. As it is
well-known \cite{bryant-2003} for each metric $g$ there is a family of
compatible $G_{2}$-structures. Pointwise, such a family is parametrized by $%
SO\left( 7\right) /G_{2}\cong \mathbb{R}P^{7}.$

Due to the close relationship between $G_{2}$ and octonions, it is natural
to introduce an octonionic structure on a $7$-manifold with a $G_{2}$%
-structure. In \cite{GrigorianOctobundle}, the notion of an octonion bundle
on a $7$-manifold with $G_{2}$-structure has been introduced. A number of
properties of $G_{2}$-structures are re-expressed in a very natural form
using the octonion formalism, which we review in Section \ref{secg2struct}
of this paper. In many ways, the structure of the octonion bundle mirrors
that of the spinor bundle on a $7$-manifold. It is well known that a $G_{2}$%
-structure may be defined by a unit spinor on the manifold. Under the
correspondence between the spinor bundle and the octonion bundle, the fixed
spinor is then mapped to $1.$ A change of the unit spinor then corresponds
to a transformation of the $G_{2}$-structure within the same metric class.
As it is well known, the enveloping algebra of the octonions, i.e. the
algebra of left multiplication maps by an octonion under composition, gives
an isomorphism with the spinor bundle as Clifford modules \cite{HarveyBook}.
However, the Clifford module structure is by definition associative, so this
correspondence only captures part of the structure of the octonion bundle.
The full non-associative structure of the octonion bundle cannot be seen in
the spinor bundle, therefore it is expected that the octonion bundle carries
more information than the spinor bundle, although the difference is subtle.
In particular, while there is no natural binary operation on the spinor
bundle, we can multiply octonions. In fact, Clifford multiplication of a
vector and a spinor translates to multiplication of two octonions -
therefore we are implicitly using the \emph{triality} correspondence between
vector and spinor representations which is unique to $7$ dimensions \cite%
{BaezOcto}. This gives a non-associative division algebra structure,
something that we generically do not have on spinor bundles. Moreover, as
shown in \cite{GrigorianOctobundle}, isometric $G_{2}$-structures, i.e.
those that are compatible with the same metric, are readily described in
terms of octonions and their torsion tensors are then re-interpreted as $1$%
-forms with values in the imaginary octonion bundle. Since $N=1$
supersymmetry in $D=11$ is most readily described using a \emph{complex}
spinor on the $7$-manifold, in Section \ref{secCxOcto} we also introduce
complexified octonions and corresponding complexified $G_{2}$-structures.
Complexified octonions no longer form a division algebra and contain zero
divisors.

Due to the close relationship between spinors and octonions in $7$
dimensions, any spinorial expression on a $7$-manifold with a $G_{2}$%
-structure can be recast in terms of octonions. In Section \ref{secoctosusy}
we rewrite the Killing spinor equation in $7$ dimensions in terms of a
complexified octonion section $Z$. It turns out that a key property of $Z$
is whether or not it is a zero divisor. On an $\func{AdS}_{4}$ background
with non-vanishing $4$-dimensional flux, it is shown that $Z$ is nowhere a
zero divisor. On the other hand, if the $4$-dimensional flux vanishes, there
could be a locus where $Z$ is a zero divisor, however in any neighborhood, $%
Z $ will be not be everywhere a zero divisor. If the $4$-dimensional
spacetime is Minkowski, then $Z$ is either everywhere a zero divisor or is
nowhere a zero divisor. In any case, wherever $Z$ is not a zero divisor, we
are able to define a complexified $G_{2}$-structure corresponding to $Z$,
and we show that its torsion lies in the class $\mathbf{1}_{\mathbb{C}%
}\oplus \mathbf{7}_{\mathbb{C}}\oplus \mathbf{27}_{\mathbb{C}}$ (here we are
referring to the complex representations of $G_{2}$), with the component in $%
\mathbf{7}_{\mathbb{C}}$ moreover being a exact $1$-form. It is also easy to
see in this case that the Killing equation implies that the $7$-dimensional
flux $4$-form $G$ has to satisfy the Bianchi identity $dG=0$.

The complexified octonion section $Z$ decomposes as $X+iY$ in terms of $%
\mathbb{C}$-real octonions $X$ and $Y$, which define real $G_{2}$-structures 
$\varphi _{X}$ and $\varphi _{Y}$, respectively, both of which correspond to
the same metric. The octonion section $W=YX^{-1}$ then defines the
transformation from $\varphi _{X}$ to $\varphi _{Y}.$ The property of $Z$
being a zero divisor is then equivalent to $\left\vert W\right\vert ^{2}=1$
and the $\mathbb{O}$-real part $w_{0}$ of $W\ $vanishing. Assumptions that
are equivalent to these have sometimes been made in the literature, however $%
\left\vert W\right\vert ^{2}$ and $w_{0}$ satisfy some particular
differential equations and cannot be arbitrarily set to particular values.
The vector field $w$ which is the $\mathbb{O}$-imaginary part of $W$, is
precisely the vector field that has been used in the literature to reduce
the $G_{2}$-structure on the internal $7$-manifold to an $SU\left( 3\right) $%
-structure. However, in Section \ref{secoctosusy} we find that if the
internal manifold is compact and the $4$-dimensional spacetime is $\func{AdS}%
_{4},$ then $w$ cannot be nowhere-vanishing. Therefore, in these cases, $%
SU\left( 3\right) $-structures or Sasakian structures on the $7$-manifold
would be degenerate and only valid locally, since they depend on a
nowhere-vanishing vector field.

In Section \ref{secExteq}, we then use the external Killing spinor equations
to obtain expressions for some of the components of the $4$-form $G$ in
terms of $W$ and the warp factor. This leads on to Section \ref{secInteq},
where the internal Killing spinor equations in terms of octonions are used
to derive the torsion tensors $T^{\left( X\right) }$ and $T^{\left( Y\right)
}$ of the real $G_{2}$-structures $\varphi _{X}$ and $\varphi _{Y}.$ In
particular, we find that if $4$-dimensional flux is non-vanishing, then both 
$T^{\left( X\right) }$ and $T^{\left( Y\right) }$ are in the generic torsion
class $\mathbf{1}\oplus \mathbf{7}\oplus \mathbf{14\oplus 27},$ however the $%
\mathbf{7}$ components are exact, so both $T^{\left( X\right) }$ and $%
T^{\left( Y\right) }$ are conformally in the torsion class $\mathbf{1}\oplus 
\mathbf{14\oplus 27}.$ If the $4$-dimensional flux vanishes, then both $%
T^{\left( X\right) }$ and $T^{\left( Y\right) }$ are in the torsion class $%
\mathbf{7}\oplus \mathbf{14\oplus 27},$ and moreover if $w$ is not
proportional to the gradient of warp factor, then all the torsion components
are non-zero. In the case of a Minkowski background, the torsion is
conformally in the class $\mathbf{14\oplus 27}$. In the particular case of
the Minkowski background with $\left\vert W\right\vert ^{2}=1$ and $w_{0}=0$%
, our expression for the torsion reduces to the one obtained in \cite%
{KasteMinasianFlux}, however, to the author's knowledge, the torsion of a $%
G_{2}$-structure obtained from a compactification to $\func{AdS}_{4}$ has
not appeared previously in the literature. We then also work out the
covariant derivative of $W.$

In Section \ref{secInteg}, we consider the integrability conditions for $%
G_{2}$-structure torsion and outline how they relate to the equations of
motion and Einstein's equations. In general, the integrability conditions
for Killing spinor equations in $D=11$ have been considered in \cite%
{Gauntlett:2002fz} where it has been shown that if the Bianchi identity and
the equations of motion for the flux in $D=11$ is satisfied, then the
Killing spinor equations also imply Einstein's equations. This result has
also been used in \cite{SparksFlux} to conclude that in a $N=2$
compactification of $D=11$ supergravity to $\func{AdS}_{4}$ the Killing
spinor equations imply the Bianchi identity, the equations of motion, and
Einstein's equations. Here we outline how to obtain all of this directly
from the integrability conditions for the $G_{2}$-structures $\varphi _{X}$
and $\varphi _{Y}.$

\begin{description}
\item[Conventions] In this paper we will be using the following convention
for Ricci and Riemann curvature:%
\begin{equation}
\func{Ric}_{jl}=g^{ik}\func{Riem}_{ijkl}  \label{riemconvention}
\end{equation}%
Also, the convention for the orientation of a $G_{2}$-structure will same as
the one adopted by Bryant \cite{bryant-2003} and follows the author's
previous papers. In particular, this causes $\psi =\ast \varphi $ to have an
opposite sign compared to the works of Karigiannis, so many identities and
definitions cited from \cite%
{karigiannis-2005-57,karigiannis-2006notes,karigiannis-2007} may have
differing signs.
\end{description}

\section{Supersymmetry}

\setcounter{equation}{0}Consider the basic bosonic action of
eleven-dimensional supergravity \cite{Cremmer:1978km}, which is supposed to
describe low-energy $M$-theory :

\begin{equation}
S=\frac{1}{2}\int \check{R}\check{\func{vol}}-\int \check{G}\wedge \check{%
\ast}\check{G}-\frac{1}{3}\int \check{C}\wedge \check{G}\wedge \check{G}
\label{sugraction}
\end{equation}%
where $\check{R}$ is the scalar curvature, $\check{\func{vol}}$ is the
volume, and $\check{\ast}$ is the Hodge star of the metric $\check{g}$ on
the $11$-dimensional spacetime; and $\check{C}$ is a local $3$-form
potential with field strength $\check{G}=d\check{C},$ so that $dG=0$. Note
that additional higher-order terms may also be present in (\ref{sugraction}) 
\cite{Lu:2003ze,Becker:2014rea,Coimbra:2017fqj}. From (\ref{sugraction}),
the bosonic equations of motion are found to be 
\begin{subequations}%
\label{D11eqs} 
\begin{eqnarray}
\check{\func{Ric}} &=&\frac{1}{3}\left( \check{G}\check{G}-2\check{g}%
\left\vert \check{G}\right\vert _{\check{g}}^{2}\right)  \label{EinsteinD11}
\\
d\ast G &=&-G\wedge G  \label{eomD11}
\end{eqnarray}%
\end{subequations}%
where $\left( \check{G}\check{G}\right) _{AB}=\check{G}_{AMNP}\check{G}%
_{B}^{\ MNP}\ $and $\left\vert \check{G}\right\vert _{\check{g}}^{2}=\frac{1%
}{24}\check{G}_{AMNP}\check{G}^{\ AMNP}$ with $\check{g}$ used to contract
indices. Note that we will be using Latin upper case letters for
11-dimensional indices, Latin lower case letters for $7$-dimensional
indices, and Greek letters for $4$-dimensional indices. Moreover, the
supersymmetry variations of the gravitino field $\check{\Psi}$ give the
following Killing equation:%
\begin{equation}
\delta \check{\Psi}_{A}=\left\{ \check{\nabla}_{A}^{S}+\frac{1}{144}\check{G}%
_{BCDE}\left( \check{\gamma}_{\ \ \ \ \ \ \ \ \ A}^{BCDE}-8\check{\gamma}%
^{CDE}\hat{g}_{\ A}^{B}\right) \right\} \check{\varepsilon}
\label{11dimsusytrans}
\end{equation}%
where all the checked objects are $11$-dimensional: $\check{\nabla}^{S}$ is
the spinorial Levi-Civita connection, $\check{\gamma}$ are gamma matrices,
and $\hat{\varepsilon}$ is a Majorana spinor. Nowhere-vanishing solutions to 
$\delta \check{\Psi}=0$ are precisely the Killing spinors. A supersymmetric
vacuum solution of $11$-dimensional $N=1$ supergravity would then be a
triple $\left( \check{g},\check{G},\check{\varepsilon}\right) $ where $%
\check{g}\ $is metric and $\check{G}$ is a closed $4$-form that satisfy
equations (\ref{D11eqs}) and $\check{\varepsilon}$ is a nowhere-vanishing
spinor that satisfies (\ref{11dimsusytrans}). Correspondingly, $N=2$
supersymmetry would have two independent solutions to (\ref{11dimsusytrans}%
). Note that it is known \cite{Gauntlett:2002fz} that if $\left( \check{g},%
\check{G},\check{\varepsilon}\right) $ satisfy $d\check{G}=0$, (\ref{eomD11}%
), and (\ref{11dimsusytrans}), then (\ref{EinsteinD11}) follows as an
integrability condition.

Assuming a warped product compactification from $11$ dimensions to $4,$ the $%
11$-dimensional metric $\check{g}$ is written as%
\begin{equation}
\check{g}=e^{2\Delta }\left( \eta \oplus \tilde{g}\right)
\label{11dimmetric}
\end{equation}%
where $\eta $ is the $4$-dimensional Minkowski or $\func{AdS}_{4}$ metric
and $\tilde{g}$ is a Riemannian metric on the $7$-dimensional
\textquotedblleft internal manifold\textquotedblright\ $M$. The warp factor $%
\Delta $ is assumed to be a real-valued smooth function on the internal
space $M$. Note that for convenience we will work with the \textquotedblleft
warped\textquotedblright\ metric $g=e^{2\Delta }\tilde{g}$ on $M$, so in
fact our $11$-dimensional metric will be taken to be 
\begin{equation}
\check{g}=\left( e^{2\Delta }\eta \right) \oplus g  \label{11dimmetric2}
\end{equation}%
The $11$-dimensional $4$-form flux $\check{G}$ is decomposed as 
\begin{equation}
\hat{G}=3\mu \func{vol}_{4}+G  \label{11dimGhat}
\end{equation}%
where $\mu $ is a real constant, $\func{vol}_{4}$ is the volume form of the
metric $\eta $ on the $4$-dimensional spacetime and $G$ is a 4-form on $M$.
Note that the condition $d\hat{G}=0$ forces $\mu $ to be constant and $dG=0$%
. With these ans\"{a}tze for the metric and the $4$-form, from the $11$%
-dimensional equation of motion (\ref{eomD11}) we obtain the following $7$%
-dimensional equation 
\begin{equation}
d\left( e^{4\Delta }\ast G\right) =-6\mu G  \label{eom7D}
\end{equation}%
and Einstein's equation (\ref{EinsteinD11}) gives us 
\begin{subequations}%
\label{riceinstein} 
\begin{eqnarray}
\check{\func{Ric}}_{\alpha \beta } &=&-\left( 12\mu ^{2}e^{-6\Delta }+\frac{2%
}{3}e^{2\Delta }\left\vert G\right\vert ^{2}\right) \eta _{\alpha \beta } \\
\check{\func{Ric}}_{ab} &=&\frac{1}{3}\left( GG\right) _{ab}+\left( 6\mu
^{2}e^{-8\Delta }-\frac{2}{3}\left\vert G\right\vert ^{2}\right) g_{ab}
\end{eqnarray}%
\end{subequations}%
where now $\left\vert G\right\vert ^{2}$ is with respect to warped metric $g$
on $M$. On the other hand, using a standard conformal transformation
formula, $\check{\func{Ric}}$ is related to the Ricci curvatures $\func{Ric}%
_{4}$ and $\widetilde{\func{Ric}}$ of the unwarped metrics in $4$ and $7$
dimensions in the following way 
\begin{equation*}
\check{\func{Ric}}=\func{Ric}_{4}+\widetilde{\func{Ric}}-9\left( \tilde{%
\nabla}d\Delta -d\Delta d\Delta \right) -\left( \tilde{\nabla}^{2}\Delta
+9\left\vert d\Delta \right\vert _{\tilde{g}}^{2}\right) \left( \eta +\tilde{%
g}\right)
\end{equation*}%
where $\tilde{\nabla}$ is the $7$-dimensional Levi-Civita connection for the
metric $\tilde{g}$. However, the Ricci curvature $\func{Ric}^{\left(
4\right) }$ of the unwarped $4$-dimensional metric $\eta $ is by assumption
given by 
\begin{equation}
\func{Ric}_{4}=-12\left\vert \lambda \right\vert ^{2}\eta  \label{RicAdS}
\end{equation}%
where $\lambda $ is a complex constant, which is of course $0$ if $\eta $ is
the Minkowski metric. Moreover, rewriting $\tilde{\nabla}$ and $\widetilde{%
\func{Ric}}$ in terms of the warped metric $g$ on $M$, and separating $%
\check{\func{Ric}}$ into the $4$-dimensional and $7$-dimensional parts, we
find 
\begin{subequations}%
\begin{eqnarray}
\check{\func{Ric}}^{\left( 4\right) } &=&-e^{2\Delta }\left( \nabla
^{2}\Delta +4\left\vert d\Delta \right\vert ^{2}+12\left\vert \lambda
\right\vert ^{2}e^{-2\Delta }\right) \eta \\
\check{\func{Ric}}^{\left( 7\right) } &=&\func{Ric}-4\left( \nabla d\Delta
+d\Delta d\Delta \right)
\end{eqnarray}%
\end{subequations}%
Now comparing with (\ref{riceinstein}), we obtain the following equation 
\begin{subequations}
\begin{eqnarray}
\nabla ^{2}\Delta &=&12\mu ^{2}e^{-8\Delta }+\frac{2}{3}\left\vert
G\right\vert ^{2}-4\left\vert d\Delta \right\vert ^{2}-12\left\vert \lambda
\right\vert ^{2}e^{-2\Delta }  \label{ric4cond} \\
\func{Ric} &=&4\left( \nabla d\Delta +d\Delta d\Delta \right) +\frac{1}{3}%
GG+\left( 6\mu ^{2}e^{-8\Delta }-\frac{2}{3}\left\vert G\right\vert
^{2}\right) g  \label{ric7cond}
\end{eqnarray}%
\end{subequations}%
So in particular, the $7$-dimensional scalar curvature $R$ satisfies the
following:%
\begin{equation}
\frac{1}{4}R=\nabla ^{2}\Delta +\left\vert d\Delta \right\vert ^{2}+\frac{5}{%
6}\left\vert G\right\vert ^{2}+\frac{21}{2}\mu ^{2}e^{-8\Delta }
\label{scal7}
\end{equation}%
We can see that the condition (\ref{ric4cond}) forces the Ricci curvature of
the $4$-dimensional spacetime to be negative if the internal space is
compact. This is a well-known `no-go' theorem from \cite{Gibbons:1984kp}.
Also, (\ref{ric4cond}) implies that if $\Delta $ is constant (and without
loss of generality can be set to $0$), then 
\begin{equation}
\left\vert G\right\vert ^{2}=18\left( \left\vert \lambda \right\vert
^{2}-\mu ^{2}\right)
\end{equation}%
In particular, this implies that in this case, on a Minkowski background,
with $\left\vert \lambda \right\vert =0$, both $G$ and $\mu $ have to
vanish. On an $\func{AdS}_{4}$ background, this still gives the restrictive
requirement that $\left\vert G\right\vert ^{2}$ has to be constant.
Moreover, from (\ref{scal7}) we then see that this implies that the $7$%
-dimensional scalar curvature has to be a positive constant.

The $11$-dimensional gamma matrices $\check{\gamma}$ are decomposed as 
\begin{equation}
\check{\gamma}=e^{\Delta }\tilde{\gamma}\otimes \mathbb{I+\tilde{\gamma}}%
^{\left( 5\right) }\otimes \gamma  \label{11dimgamhat}
\end{equation}%
where $\tilde{\gamma}\ $are real $4$-dimensional gamma matrices, $\mathbb{%
\tilde{\gamma}}^{\left( 5\right) }=i\tilde{\gamma}^{\left( 1\right) }\tilde{%
\gamma}^{\left( 2\right) }\tilde{\gamma}^{\left( 3\right) }\tilde{\gamma}%
^{\left( 4\right) }$ is the four-dimensional chirality operator, and $\gamma 
$ are imaginary gamma matrices in $7$ dimensions that satisfy the standard
Clifford algebra identity:

\begin{equation}
\gamma _{a}\gamma _{b}+\gamma _{b}\gamma _{a}=2g_{ab}\func{Id}
\label{Cliffgam}
\end{equation}%
The $11$-dimensional spinor $\check{\varepsilon}$ is decomposed as 
\begin{equation}
\check{\varepsilon}=\varepsilon \otimes \theta +\varepsilon ^{\ast }\otimes
\theta ^{\ast }  \label{11dimepshat}
\end{equation}%
where $\theta \ $is a complex spinor on the $7$-dimensional internal space,
and $\varepsilon $ is a $4$-dimensional Weyl spinor that also satisfies the
Killing spinor equation in $4$ dimensions:%
\begin{equation}
\tilde{\nabla}^{\mathcal{S}}\varepsilon =\tilde{\gamma}\lambda ^{\ast
}\varepsilon ^{\ast }  \label{4Dkilling}
\end{equation}%
where $\tilde{\nabla}^{\mathcal{S}}$ is the $4$-dimensional Levi-Civita
connection on spinors with respect to the metric $\eta $ and $\lambda $ is a
complex constant - known as a superpotential or the gravitino mass. In
particular, $\left\vert \lambda \right\vert ^{2}$ appears in the expression
for the Ricci curvature (\ref{RicAdS}).

Using the decomposition (\ref{11dimepshat}) as well as (\ref{4Dkilling}),
the Killing equation (\ref{11dimsusytrans}) gives us two sets of equations
on $\theta $: 
\begin{subequations}%
\label{7dimsusyeq}%
\begin{eqnarray}
0 &=&\lambda e^{-\Delta }\theta ^{\ast }+\left[ \left( \mu e^{-4\Delta }i+%
\frac{1}{2}\left( \partial _{c}\Delta \right) \gamma ^{c}\right) +\frac{1}{%
144}G_{bcde}\gamma ^{bcde}\right] \theta  \label{7dimsusyeq1} \\
\nabla _{a}^{\mathcal{S}}\theta &=&\left[ \frac{i}{2}\mu e^{-4\Delta }\gamma
_{a}-\frac{1}{144}\left( G_{bcde}\gamma _{\ \ \ \ \ \ \
a}^{bcde}-8G_{abcd}\gamma ^{bcd}\right) \right] \theta  \label{7dimsusyeq2}
\end{eqnarray}

\end{subequations}
Here the equation (\ref{7dimsusyeq1}) comes from the \textquotedblleft
external\textquotedblright\ part of (\ref{11dimsusytrans}), i.e. when $%
A=0,..,3$, and (\ref{11dimsusytrans}) comes from the \textquotedblleft
internal\textquotedblright\ part of (\ref{11dimsusytrans}), i.e. when $%
A=4,...,11$. Using standard gamma matrix identities, the equation (\ref%
{7dimsusyeq2}) can be rewritten as 
\begin{equation}
\nabla _{a}^{\mathcal{S}}\theta =\left[ \frac{i}{2}\mu e^{-4\Delta }\gamma
_{a}-\frac{1}{144}\gamma _{a}\left( G_{bcde}\gamma _{\ \ \ \ \ \ \
}^{bcde}\right) +\frac{1}{12}G_{abcd}\gamma ^{bcd}\right] \theta
\label{7dimsusyeq2a}
\end{equation}

Since we require the solutions of (\ref{11dimsusytrans}) and hence (\ref%
{7dimsusyeq}) to be nowhere-vanishing, this implies that the $7$-dimensional
internal manifold $M$ must admit a $G_{2}$-structure. In particular, given
any metric $g$ on $M,$ there will exist a family of $G_{2}$-structures that
are compatible with $g$. In terms of $G_{2}$-structures, the equations (\ref%
{7dimsusyeq}) then say that a $G_{2}$-structure with a particular torsion
must exist within this metric class of $G_{2}$-structures. In \cite%
{Kaste:2003zd} and \cite{Behrndt:2003zg}, the expressions for the torsion of
a particular member of the metric class were explicitly calculated in terms
of the $4$-form $G$ under some assumptions on $\theta _{\pm }$. In this
paper, we will use the octonion bundle formalism developed in \cite%
{GrigorianOctobundle} to reformulate equations (\ref{7dimsusyeq}) in terms
of octonions on $M$ and will use that to derive properties of the torsion of
the corresponding $G_{2}$-structures.

\section{$G_{2}$-structures and octonion bundles}

\setcounter{equation}{0}\label{secg2struct}

The 14-dimensional group $G_{2}$ is the smallest of the five exceptional Lie
groups and is closely related to the octonions, which is the noncommutative,
nonassociative, $8$-dimensional normed division algebra. In particular, $%
G_{2}$ can be defined as the automorphism group of the octonion algebra.
Given the octonion algebra $\mathbb{O}$, there exists a unique orthogonal
decomposition into a real part, that is isomorphic to $\mathbb{R},$ and an 
\emph{imaginary }(or \emph{pure}) part, that is isomorphic to $\mathbb{R}%
^{7} $:%
\begin{equation}
\mathbb{O}\cong \mathbb{R}\oplus \mathbb{R}^{7}
\end{equation}%
Octonion multiplication to define a vector cross product $\times $ on $%
\mathbb{R}^{7}$. Given vectors $u,v\in \mathbb{R}^{7}$, we regard them as
octonions in $\func{Im}\mathbb{O}$, multiply them together using octonion
multiplication, and then project the result to $\func{Im}\mathbb{O}$ to
obtain a new vector in $\mathbb{R}^{7}$:%
\begin{equation}
u\times v=\func{Im}\left( uv\right) .  \label{octovp}
\end{equation}%
The subgroup of $GL\left( 7,\mathbb{R}\right) $ that preserves this vector
cross product is then precisely the group $G_{2}$. A detailed account of the
properties of the octonions and their relationship to exceptional Lie groups
is given by John Baez in \cite{BaezOcto}. The structure constants of the
vector cross product define a $3$-form on $\mathbb{R}^{7}$, hence $G_{2}$ is
alternatively defined as the subgroup of $GL\left( 7,\mathbb{R}\right) $
that preserves a particular $3$-form $\varphi _{0}\in \Lambda ^{3}\left( 
\mathbb{R}^{7}\right) ^{\ast }$ \cite{Joycebook}.

In general, given a $n$-dimensional manifold $M$, a $G$-structure on $M$ for
some Lie subgroup $G$ of $GL\left( n,\mathbb{R}\right) $ is a reduction of
the frame bundle $F$ over $M$ to a principal subbundle $P$ with fibre $G$. A 
$G_{2}$-structure is then a reduction of the frame bundle on a $7$%
-dimensional manifold $M$ to a $G_{2}$-principal subbundle. The obstructions
for the existence of a $G_{2}$-structure are purely topological. Given a $7$%
-dimensional smooth manifold that is both orientable ($w_{1}=0$) and admits
a spin structure ($w_{2}=0$), there always exists a $G_{2}$-structure on it 
\cite{FernandezGray, FriedrichNPG2, Gray-VCP}.

There is a $1$-$1$ correspondence between $G_{2}$-structures on a $7$%
-manifold and smooth $3$-forms $\varphi $ for which the $7$-form-valued
bilinear form $B_{\varphi }$ as defined by (\ref{Bphi}) is positive definite
(for more details, see \cite{Bryant-1987} and the arXiv version of \cite%
{Hitchin:2000jd}). 
\begin{equation}
B_{\varphi }\left( u,v\right) =\frac{1}{6}\left( u\lrcorner \varphi \right)
\wedge \left( v\lrcorner \varphi \right) \wedge \varphi  \label{Bphi}
\end{equation}%
Here the symbol $\lrcorner $ denotes contraction of a vector with the
differential form: 
\begin{equation}
\left( u\lrcorner \varphi \right) _{mn}=u^{a}\varphi _{amn}.
\label{contractdef}
\end{equation}%
Note that we will also use this symbol for contractions of differential
forms using the metric.

A smooth $3$-form $\varphi $ is said to be \emph{positive }if $B_{\varphi }$
is the tensor product of a positive-definite bilinear form and a
nowhere-vanishing $7$-form. In this case, it defines a unique Riemannian
metric $g_{\varphi }$ and volume form $\func{vol}_{\varphi }$ such that for
vectors $u$ and $v$, the following holds 
\begin{equation}
g_{\varphi }\left( u,v\right) \func{vol}_{\varphi }=\frac{1}{6}\left(
u\lrcorner \varphi \right) \wedge \left( v\lrcorner \varphi \right) \wedge
\varphi  \label{gphi}
\end{equation}%
An equivalent way of defining a positive $3$-form $\varphi $, is to say that
at every point, $\varphi $ is in the $GL\left( 7,\mathbb{R}\right) $-orbit
of $\varphi _{0}$. It can be easily checked that the metric (\ref{gphi}) for 
$\varphi =\varphi _{0}$ is in fact precisely the standard Euclidean metric $%
g_{0}$ on $\mathbb{R}^{7}$. Therefore, every $\varphi $ that is in the $%
GL\left( 7,\mathbb{R}\right) $-orbit of $\varphi _{0}$ has an \emph{%
associated} Riemannian metric $g$, that is in the $GL\left( 7,\mathbb{R}%
\right) $-orbit of $g_{0}.$ The only difference is that the stabilizer of $%
g_{0}$ (along with orientation) in this orbit is the group $SO\left(
7\right) $, whereas the stabilizer of $\varphi _{0}$ is $G_{2}\subset
SO\left( 7\right) $. This shows that positive $3$-forms forms that
correspond to the same metric, i.e., are \emph{isometric}, are parametrized
by $SO\left( 7\right) /G_{2}\cong \mathbb{RP}^{7}\cong S^{7}/\mathbb{Z}_{2}$%
. Therefore, on a Riemannian manifold, metric-compatible $G_{2}$-structures
are parametrized by sections of an $\mathbb{RP}^{7}$-bundle, or
alternatively, by sections of an $S^{7}$-bundle, with antipodal points
identified.

\begin{definition}
If two $G_{2}$-structures $\varphi _{1}$ and $\varphi _{2}$ on $M$ have the
same associated metric $g$, we say that $\varphi _{1}$ and $\varphi _{2}$
are in the same \emph{metric class}.
\end{definition}

Let $\left( M,g\right) $ be a smooth $7$-dimensional Riemannian manifold,
with $w_{1}=w_{2}=0.$ We know $M$ admits $G_{2}$-structures. In particular,
let $\varphi $ be a $G_{2}$-structure for $M$ for which $g$ is the
associated metric. We also use $g$ to define the Levi-Civita connection $%
\nabla $, and the Hodge star $\ast $. In particular, $\ast \varphi $ is a $4$%
-form dual to $\varphi ,$ which we will denote by $\psi $. We will give a
brief overview of the properties of octonion bundles as developed in \cite%
{GrigorianOctobundle}.

\begin{definition}
The \emph{octonion bundle }$\mathbb{O}M$ on $M$ is the rank $8$ real vector
bundle given by 
\begin{equation}
\mathbb{O}M\cong \Lambda ^{0}\oplus TM  \label{OMdef}
\end{equation}%
where $\Lambda ^{0}\cong M\times \mathbb{R}\ $is a trivial line bundle. At
each point $p\in M$, $\mathbb{O}_{p}M\cong \mathbb{R}\oplus T_{p}M.$
\end{definition}

The definition (\ref{OMdef}) simply mimics the decomposition of octonions
into real and imaginary parts. The bundle $\mathbb{O}M$ is defined as a real
bundle, but which will have additional structure as discussed below. Now let 
$A\in \Gamma \left( \mathbb{O}M\right) $ be a section of the octonion
bundle. We will call $A$ simply an \emph{octonion} \emph{on }$M.$ From (\ref%
{OMdef}), $A$ has a real part $\func{Re}A,$ which is a scalar function on $M$%
, as well as an imaginary part $\func{Im}A$ which is a vector field on $M$.
For convenience, we may also write $A=\left( \func{Re}A,\func{Im}A\right) $
or $A=\left( 
\begin{array}{c}
\func{Re}A \\ 
\func{Im}A%
\end{array}%
\right) $. Also, let us define octonion conjugation given by $\bar{A}=\left( 
\func{Re}A,-\func{Im}A\right) .$ Whenever there is a risk of ambiguity, we
will use $\hat{1}$ to denote the generator of $\Gamma \left( \func{Re}%
\mathbb{O}M\right) $.

Since $\mathbb{O}M$ is defined as a tensor bundle, the Riemannian metric $g$
on $M$ induces a metric on $\mathbb{O}M.$ Let $A=\left( a,\alpha \right) \in
\Gamma \left( \mathbb{O}M\right) .$ Then, 
\begin{eqnarray}
\left\vert A\right\vert ^{2} &=&\left\langle A,A\right\rangle =a^{2}+g\left(
\alpha ,\alpha \right)  \notag \\
&=&a^{2}+\left\vert \alpha \right\vert ^{2}  \label{Omet}
\end{eqnarray}%
We will be using the same notation for the norm, metric and inner product
for sections of $\mathbb{O}M$ as for standard tensors on $M$. It will be
clear from the context which is being used. If however we need to specify
that only the octonion inner product is used, we will use the notation $%
\left\langle \cdot ,\cdot \right\rangle _{\mathbb{O}}.$ The metric (\ref%
{Omet}) ensures that the real and imaginary parts are orthogonal to each
other.

\begin{definition}
Given the $G_{2}$-structure $\varphi $ on $M,$ we define a \emph{vector
cross product with respect to }$\varphi $ on $M.$ Let $\alpha $ and $\beta $
be two vector fields, then define%
\begin{equation}
\left\langle \alpha \times _{\varphi }\beta ,\gamma \right\rangle =\varphi
\left( \alpha ,\beta ,\gamma \right)  \label{vcrossdef}
\end{equation}%
for any vector field $\gamma $ \cite{Gray-VCP,karigiannis-2005-57}.
\end{definition}

In index notation, we can thus write 
\begin{equation}
\left( \alpha \times _{\varphi }\beta \right) ^{a}=\varphi _{\ bc}^{a}\alpha
^{b}\beta ^{c}  \label{vcrossdef2}
\end{equation}%
Note that $\alpha \times _{\varphi }\beta =-\beta \times _{\varphi }\alpha $%
. If there is no ambiguity as to which $G_{2}$-structure is being used to
define the cross product, we will simply denote it by $\times $, dropping
the subscript.

Using the contraction identity for $\varphi $ \cite%
{bryant-2003,GrigorianG2Review, karigiannis-2005-57}%
\begin{equation}
\varphi _{abc}\varphi _{mn}^{\ \ \ \ c}=g_{am}g_{bn}-g_{an}g_{bm}+\psi
_{abmn}  \label{phiphi1}
\end{equation}%
we obtain the following identity for the double cross product.

\begin{lemma}
Let $\alpha ,\beta ,\gamma $ be vector fields, then 
\begin{equation}
\alpha \times \left( \beta \times \gamma \right) =\left\langle \alpha
,\gamma \right\rangle \beta -\left\langle \alpha ,\beta \right\rangle \gamma
+\psi \left( \cdot ^{\sharp },\alpha ,\beta ,\gamma \right)
\label{doublecrossprod}
\end{equation}%
where $^{\sharp }$ means that we raise the index using the inverse metric $%
g^{-1}$.
\end{lemma}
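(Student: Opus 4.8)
The plan is to verify \,(\ref{doublecrossprod}) by a direct computation in index notation, feeding the contraction identity (\ref{phiphi1}) into the expression for the iterated cross product. First I would use the defining formula (\ref{vcrossdef2}): writing $\left( \beta \times \gamma \right) ^{c}=\varphi _{\ de}^{c}\beta ^{d}\gamma ^{e}$ and substituting,
\[
\left( \alpha \times \left( \beta \times \gamma \right) \right) ^{a}
=\varphi _{\ bc}^{a}\,\alpha ^{b}\left( \beta \times \gamma \right) ^{c}
=\varphi _{\ bc}^{a}\,\varphi _{\ de}^{c}\,\alpha ^{b}\beta ^{d}\gamma ^{e},
\]
so the whole statement reduces to evaluating the contraction $\varphi _{\ bc}^{a}\,\varphi _{\ de}^{c}$ over the repeated index $c$.

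Next I would lower the free index $a$ and use the total antisymmetry of $\varphi $ to recognise $\varphi _{fbc}\,\varphi _{\ de}^{c}=\varphi _{fbc}\,\varphi _{de}^{\ \ c}$ as the left-hand side of (\ref{phiphi1}) after the relabelling $\left( a,b,m,n\right) \mapsto \left( f,b,d,e\right) $, whence
\[
\varphi _{fbc}\,\varphi _{de}^{\ \ c}=g_{fd}g_{be}-g_{fe}g_{bd}+\psi _{fbde}.
\]
Raising $a$ back and contracting with $\alpha ^{b}\beta ^{d}\gamma ^{e}$ then produces exactly three terms: from $g_{fd}g_{be}$ one gets $g_{be}\alpha ^{b}\gamma ^{e}\,\beta ^{a}=\left\langle \alpha ,\gamma \right\rangle \beta ^{a}$; from $-g_{fe}g_{bd}$ one gets $-g_{bd}\alpha ^{b}\beta ^{d}\,\gamma ^{a}=-\left\langle \alpha ,\beta \right\rangle \gamma ^{a}$; and from $\psi _{fbde}$ one gets $g^{af}\psi _{fbde}\alpha ^{b}\beta ^{d}\gamma ^{e}$, which is precisely the $a$-component of the vector field $\psi \left( \cdot ^{\sharp },\alpha ,\beta ,\gamma \right) $. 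Assembling these three contributions gives (\ref{doublecrossprod}).

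There is no analytic or structural content here: the claim is a pointwise algebraic consequence of (\ref{phiphi1}), and the only thing that needs care — the closest thing to an obstacle — is the index bookkeeping. One must check that $\varphi _{\ de}^{c}=\varphi _{de}^{\ \ c}$ (immediate from antisymmetry) so that (\ref{phiphi1}) applies verbatim, keep track of which of the two lower slots of $\varphi $ carries the contracted index (this fixes the relative sign of the two metric terms), and confirm that the convention for $\psi \left( \cdot ^{\sharp },\alpha ,\beta ,\gamma \right) $ matches $g^{af}\psi _{fbde}\alpha ^{b}\beta ^{d}\gamma ^{e}$ rather than a permuted variant; by antisymmetry of $\psi $ any reordering merely flips a sign, so this is the single consistency point to nail down. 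As a sanity check, setting $\gamma =\alpha $ kills the $\psi $-term by antisymmetry and recovers the familiar $\alpha \times \left( \beta \times \alpha \right) =\left\vert \alpha \right\vert ^{2}\beta -\left\langle \alpha ,\beta \right\rangle \alpha $, which is the cross-product shadow of the octonion relation $u\left( vu\right) =-2\left\langle u,v\right\rangle u+\left\vert u\right\vert ^{2}v$ for imaginary octonions.
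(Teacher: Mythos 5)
Your computation is correct and is exactly the argument the paper intends: it introduces the lemma with the words ``Using the contraction identity for $\varphi$ \ldots we obtain the following identity,'' i.e.\ substituting (\ref{vcrossdef2}) twice and contracting via (\ref{phiphi1}), which is precisely what you do. The index bookkeeping, the sign of the two metric terms, and the placement of the $\psi$-term all check out, so nothing further is needed.
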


Using the inner product and the cross product, we can now define the \emph{%
octonion product }on $\mathbb{O}M$.

\begin{definition}
Let $A,B\in \Gamma \left( \mathbb{O}M\right) .$ Suppose $A=\left( a,\alpha
\right) $ and $B=\left( b,\beta \right) $. Given the vector cross product (%
\ref{vcrossdef}) on $M,$ we define the \emph{octonion product }$A\circ
_{\varphi }B$\emph{\ with respect to }$\varphi $\emph{\ }as follows:%
\begin{equation}
A\circ _{\varphi }B=\left( 
\begin{array}{c}
ab-\left\langle \alpha ,\beta \right\rangle \\ 
a\beta +b\alpha +\alpha \times _{\varphi }\beta%
\end{array}%
\right)  \label{octoproddef}
\end{equation}
\end{definition}

If there is no ambiguity as to which $G_{2}$-structure is being used to
define the octonion product, for convenience, we will simply write$\ AB$ to
denote it. The octonion product behaves as expected with respect to
conjugation - $\overline{AB}=\overline{B}\overline{A}$ and the inner product 
$\left\langle AB,C\right\rangle =\left\langle B,\bar{A}C\right\rangle $ \cite%
{GrigorianOctobundle}.

A key property of the octonions is their non-associativity and given three
octonions $A=\left( a,\alpha \right) ,$ $B=\left( b,\beta \right) ,$ $%
C=\left( c,\gamma \right) $, a measure of their non-associativity is their 
\emph{associator} $\left[ A,B,C\right] $ defined by 
\begin{eqnarray}
\left[ A,B,C\right] &=&A\left( BC\right) -\left( AB\right) C  \notag \\
&=&2\psi \left( \cdot ^{\sharp },\alpha ,\beta ,\gamma \right)
\end{eqnarray}

We can use octonions to parametrize isometric $G_{2}$-structures is
well-known. If $A=\left( a,\alpha \right) $ is any nowhere-vanishing
octonion, then define 
\begin{equation}
\sigma _{A}\left( \varphi \right) =\frac{1}{\left\vert A\right\vert ^{2}}%
\left( \left( a^{2}-\left\vert \alpha \right\vert ^{2}\right) \varphi
-2a\alpha \lrcorner \left( \ast \varphi \right) +2\alpha \wedge \left(
\alpha \lrcorner \varphi \right) \right)  \label{sigmaAdef}
\end{equation}%
It is well-known \cite{bryant-2003} that given a fixed $G_{2}$-structure $%
\varphi $ that is associated with the metric $g,$ then any other $G_{2}$%
-structure in this metric class is given by (\ref{sigmaAdef}) for some
octonion section $A$. It has been shown \cite[Thm 4.8]{GrigorianOctobundle}
that for any nowhere-vanishing octonions $U$ and $V,$ 
\begin{equation}
\sigma _{U}\left( \sigma _{V}\varphi \right) =\sigma _{UV}\left( \varphi
\right)
\end{equation}%
Thus we can move freely between different $G_{2}$-structures in the same
metric class, which we will refer to as a \emph{change of gauge}. It has
also been shown in \cite{GrigorianOctobundle} that the octonion product $%
A\circ _{V}B$ that corresponds to the $G_{2}$-structure $\sigma _{V}\left(
\varphi \right) $ is given by 
\begin{equation}
A\circ _{V}B:=\left( AV\right) \left( V^{-1}B\right) =AB+\left[ A,B,V\right]
V^{-1}  \label{OctoVAB}
\end{equation}%
In particular, a change of gauge can be used to rewrite octonion expressions
that involve multiple products. Consider the expression $A\left( BC\right) $%
. Then, assuming $C$ is non-vanishing, we have 
\begin{eqnarray}
A\left( BC\right) &=&\left( AB\right) C+\left[ A,B,C\right]  \notag \\
&=&\left( A\circ _{C}B\right) C  \label{associd1}
\end{eqnarray}%
Similarly, 
\begin{eqnarray}
\left( AB\right) C &=&A\left( BC\right) -\left[ A,B,C\right]  \notag \\
&=&A\left( BC-A^{-1}\left[ A,B,C\right] \right)  \notag \\
&=&A\left( B\circ _{A^{-1}}C\right)  \label{associd2}
\end{eqnarray}%
Here we have used some properties of the associator from \cite[Lemma 3.9]%
{GrigorianOctobundle}.

Also, note that the cross product $A\times _{V}B$ of two $\mathbb{O}$%
-imaginary octonions with respect to $\sigma _{V}\left( \varphi \right) $
can be obtained from (\ref{OctoVAB}) as 
\begin{eqnarray}
A\times _{V}B &=&\frac{1}{2}\left( A\circ _{V}B-B\circ _{V}A\right)  \notag
\\
&=&A\times B+\left[ A,B,V\right] V^{-1}  \label{OctoVcross}
\end{eqnarray}

\subsection{Torsion and octonion covariant derivative}

\label{sectTors}Given a $G_{2}$-structure $\varphi $ with an associated
metric $g$, we may use the metric to define the Levi-Civita connection $%
\nabla $. The \emph{intrinsic torsion }of a $G_{2}$-structure is then
defined by $\nabla \varphi $. Following \cite%
{GrigorianG2Torsion1,karigiannis-2007}, we can write 
\begin{equation}
\nabla _{a}\varphi _{bcd}=2T_{a}^{\ e}\psi _{ebcd}^{{}}  \label{codiffphi}
\end{equation}%
where $T_{ab}$ is the \emph{full torsion tensor}. Similarly, we can also
write 
\begin{equation}
\nabla _{a}\psi _{bcde}=-8T_{a[b}\varphi _{cde]}  \label{psitorsion}
\end{equation}%
We can also invert (\ref{codiffphi}) to get an explicit expression for $T$ 
\begin{equation}
T_{a}^{\ m}=\frac{1}{48}\left( \nabla _{a}\varphi _{bcd}\right) \psi ^{mbcd}.
\end{equation}%
This $2$-tensor fully defines $\nabla \varphi $ \cite{GrigorianG2Torsion1}.
The torsion tensor $T$ is defined here as in \cite{GrigorianOctobundle}, but
actually corresponds to $\frac{1}{2}T$ in \cite{GrigorianG2Torsion1} and $-%
\frac{1}{2}T$ in \cite{karigiannis-2007}. With respect to the octonion
product, we get 
\begin{equation}
\nabla _{X}\left( AB\right) =\left( \nabla _{X}A\right) B+A\left( \nabla
_{X}B\right) -\left[ T_{X},A,B\right]  \label{nablaXAB}
\end{equation}

In general we can obtain an orthogonal decomposition of $T_{ab}$ according
to representations of $G_{2}$ into \emph{torsion components}: 
\begin{equation}
T=\tau _{1}g+\tau _{7}\lrcorner \varphi +\tau _{14}+\tau _{27}
\end{equation}%
where $\tau _{1}$ is a function, and gives the $\mathbf{1}$ component of $T$%
. We also have $\tau _{7}$, which is a $1$-form and hence gives the $\mathbf{%
7}$ component, and, $\tau _{14}\in \Lambda _{14}^{2}$ gives the $\mathbf{14}$
component and $\tau _{27}$ is traceless symmetric, giving the $\mathbf{27}$
component. As it was originally shown by Fern\'{a}ndez and Gray \cite%
{FernandezGray}, there are in fact a total of 16 torsion classes of $G_{2}$%
-structures that arise depending on which of the components are non-zero.
Moreover, as shown in \cite{karigiannis-2007}, the torsion components $\tau
_{i}$ relate directly to the expression for $d\varphi $ and $d\psi $ 
\begin{subequations}%
\label{dphidpsi}

\begin{eqnarray}
d\varphi &=&8\tau _{1}\psi -6\tau _{7}\wedge \varphi +8\iota _{\psi }\left(
\tau _{27}\right)  \label{dphi} \\
d\psi &=&-8\tau _{7}\wedge \psi -4\ast \tau _{14}.  \label{dpsi}
\end{eqnarray}

\end{subequations}%
where $\iota _{\psi }\left( h\right) _{abcd}=h_{[a}^{\ m}\psi _{\left\vert
m\right\vert bcd]}$ for any $2$-tensor $h$. In terms of the octonion bundle,
we can think of $T$ as an $\func{Im}\mathbb{O}$-valued $1$-form. If we
change gauge to a different $G_{2}$-structure $\varphi _{V}$, then the
torsion transforms in the following way.

\begin{theorem}[{\protect\cite[Thm 7.2]{GrigorianOctobundle}}]
\label{ThmTorsV}Let $M$ be a smooth $7$-dimensional manifold with a $G_{2}$%
-structure $\left( \varphi ,g\right) $ with torsion $T\in \Omega ^{1}\left( 
\func{Im}\mathbb{O}M\right) $. For a nowhere-vanishing $V\in \Gamma \left( 
\mathbb{O}M\right) ,$ consider the $G_{2}$-structure $\sigma _{V}\left(
\varphi \right) .$ Then, the torsion $T^{\left( V\right) }$ of $\sigma
_{V}\left( \varphi \right) $ is given by%
\begin{eqnarray}
T^{\left( V\right) } &=&\func{Im}\left( \func{Ad}_{V}T-\left( \nabla
V\right) V^{-1}\right)  \label{TorsV2} \\
&=&\func{Ad}_{V}T-\left( \nabla V\right) V^{-1}+\frac{1}{2}\frac{1}{%
\left\vert V\right\vert ^{2}}\left( d\left\vert V\right\vert ^{2}\right) 
\hat{1}  \notag
\end{eqnarray}
\end{theorem}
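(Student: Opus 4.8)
The plan is to obtain the torsion of $\sigma_V(\varphi)$ from the characterising identity, which is the $\sigma_V(\varphi)$-analogue of (\ref{nablaXAB}). Since $\sigma_V(\varphi)$ is a $G_2$-structure with the same associated metric $g$, the Levi-Civita connection $\nabla$ satisfies
\[
\nabla_X\!\left(A\circ_V B\right)=\left(\nabla_X A\right)\circ_V B+A\circ_V\!\left(\nabla_X B\right)-\left[T^{(V)}_X,A,B\right]_{\circ_V},
\]
where $\left[\cdot,\cdot,\cdot\right]_{\circ_V}$ is the associator of the product $\circ_V$, and this pins down $T^{(V)}_X\in\Gamma\!\left(\func{Im}\mathbb{O}M\right)$ uniquely since $\left[\xi,A,B\right]_{\circ_V}=2\left(\ast\sigma_V(\varphi)\right)\!\left(\,\cdot^{\sharp},\xi,\func{Im}A,\func{Im}B\right)$ and $\xi\lrcorner\ast\sigma_V(\varphi)=0$ forces $\xi=0$. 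So it suffices to compute $\nabla_X\!\left(A\circ_V B\right)$ and read off the last term.

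Writing $A\circ_V B=(AV)(V^{-1}B)$ from (\ref{OctoVAB}) and applying (\ref{nablaXAB}) to the outer product and to each factor, one first needs the derivative of the inverse: since $\nabla_X(VV^{-1})=\nabla_X\hat{1}=0$, since $\left[T_X,V,V^{-1}\right]=0$ (the associator $\left[A,B,C\right]=2\psi(\cdot^{\sharp},\func{Im}A,\func{Im}B,\func{Im}C)$ vanishes when two imaginary parts are proportional), and since the octonions are alternative so that $V^{-1}(VY)=Y$, we get $\nabla_X V^{-1}=-V^{-1}\big((\nabla_X V)V^{-1}\big)$. The convenient point is that the bracketing in $(AV)(V^{-1}B)$ is exactly the one in the definition of $\circ_V$, so the $\nabla_X A$ and $\nabla_X B$ pieces come out cleanly as $\big((\nabla_X A)V\big)(V^{-1}B)=(\nabla_X A)\circ_V B$ and $(AV)\big(V^{-1}(\nabla_X B)\big)=A\circ_V(\nabla_X B)$, leaving
\[
-\left[T^{(V)}_X,A,B\right]_{\circ_V}=\big(A(\nabla_X V)\big)(V^{-1}B)+(AV)\big((\nabla_X V^{-1})B\big)-\left[T_X,A,V\right](V^{-1}B)-(AV)\left[T_X,V^{-1},B\right]-\left[T_X,AV,V^{-1}B\right].
\]

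The remaining task is to collapse the right-hand side to a single $\circ_V$-associator. Substituting $A=B=\hat{1}$ (and using $\nabla_X(VV^{-1})=0$) shows it vanishes there; substituting $A=\hat{1}$, or $B=\hat{1}$, and reassociating via (\ref{associd1})--(\ref{associd2}) shows that the $\nabla_X V$-terms organise into $(\nabla_X V)V^{-1}$ and the $T_X$-terms into $\func{Ad}_V T_X:=(VT_X)V^{-1}$; the general case then follows by the same reassociations together with the total antisymmetry of the octonion associator and the Moufang identities governing $\func{Ad}_V$. Matching against $-\left[\,\cdot\,,A,B\right]_{\circ_V}$ gives $T^{(V)}_X=\func{Im}\big(\func{Ad}_V T_X-(\nabla_X V)V^{-1}\big)$, the first displayed formula. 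For the second, $\func{Ad}_V T_X$ is already imaginary because conjugation by $V$ is an isometry of $\mathbb{O}M$ fixing $\hat{1}$, while $\func{Re}\big((\nabla_X V)V^{-1}\big)=\langle\nabla_X V,V\rangle/|V|^2=\tfrac12|V|^{-2}\,\nabla_X|V|^2$ using $\langle AB,C\rangle=\langle B,\bar A C\rangle$ and $V^{-1}=\bar V/|V|^2$; subtracting this real part reproduces the $\tfrac12|V|^{-2}(d|V|^2)\hat{1}$ correction.

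I expect the third paragraph to be the real work: the three associator terms $\left[T_X,AV,V^{-1}B\right]$, $\left[T_X,A,V\right](V^{-1}B)$, $(AV)\left[T_X,V^{-1},B\right]$ carry different bracketings and are built from the ``wrong'' ($\cdot$-)associator, so reorganising them into one $\circ_V$-associator with first slot $\func{Ad}_V T_X-(\nabla_X V)V^{-1}$ takes careful use of (\ref{associd1})--(\ref{associd2}) and the alternating/Moufang identities for octonions; a slower but self-contained check is to differentiate the explicit formula (\ref{sigmaAdef}) for $\sigma_V(\varphi)$, insert (\ref{codiffphi}) and (\ref{psitorsion}), and extract $T^{(V)}$ from $T^{(V)}_a{}^{m}=\tfrac{1}{48}\big(\nabla_a\sigma_V(\varphi)_{bcd}\big)\big(\ast\sigma_V(\varphi)\big)^{mbcd}$.
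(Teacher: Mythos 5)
You should first note that this paper does not actually prove the statement: Theorem \ref{ThmTorsV} is imported verbatim from \cite[Thm 7.2]{GrigorianOctobundle}, so your attempt has to stand on its own. Much of your framing is sound: applying the product rule (\ref{nablaXAB}) to the $G_{2}$-structure $\sigma_{V}\left(\varphi\right)$ does characterize $T^{\left(V\right)}$ uniquely, since $\xi\mapsto\xi\lrcorner\psi_{V}$ is injective; your derivation of $\nabla_{X}V^{-1}=-V^{-1}\left(\left(\nabla_{X}V\right)V^{-1}\right)$ (using $\left[T_{X},V,V^{-1}\right]=0$ and alternativity) is correct; the reduction of the derivation defect of $\circ_{V}$ to the five leftover terms is correct; and the passage from the first displayed line to the second, via $\func{Re}\left(\left(\nabla_{X}V\right)V^{-1}\right)=\frac{1}{2}\left\vert V\right\vert^{-2}\nabla_{X}\left\vert V\right\vert^{2}$ and $\mathbb{O}$-imaginarity of $\func{Ad}_{V}T$, is a complete and correct argument.

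The genuine gap is that the decisive step --- showing that those five terms equal $-\left[\func{Im}\left(\func{Ad}_{V}T_{X}-\left(\nabla_{X}V\right)V^{-1}\right),A,B\right]_{\circ_{V}}$ for \emph{all} $A,B$ --- is nowhere carried out; it is asserted to follow from ``reassociations, total antisymmetry of the associator and the Moufang identities,'' which is precisely the nontrivial algebraic content of the theorem. Moreover, the intermediate evidence you offer cannot identify $T^{\left(V\right)}$: when $A=\hat{1}$ or $B=\hat{1}$ the target associator $\left[\,\cdot\,,A,B\right]_{\circ_{V}}$ vanishes identically (an associator with an $\mathbb{O}$-real slot is zero), so those substitutions only confirm that the defect vanishes there, a consistency check carrying no information about the torsion; the formula is tested only on pairs with genuinely independent imaginary parts, which is exactly the case you defer. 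The fallback you mention --- differentiating (\ref{sigmaAdef}) directly, inserting (\ref{codiffphi}) and (\ref{psitorsion}), and extracting $T^{\left(V\right)}$ by contraction with $\psi_{V}$ --- is indeed a viable, self-contained route, but it too is only named, not executed. As written, the proposal is a plan whose central computation, in either route, is missing.
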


Here $\func{Ad}_{V}$ is the adjoint map $\func{Ad}_{V}:\Gamma \left( \mathbb{%
O}M\right) \longrightarrow \Gamma \left( \mathbb{O}M\right) $ given by $%
\func{Ad}_{V}\left( A\right) =VAV^{-1}$. Then, as in \cite%
{GrigorianOctobundle} let us define a connection on $\mathbb{O}M$, using $T$
as a \textquotedblleft connection\textquotedblright\ $1$-form.

\begin{definition}
Define the octonion covariant derivative $D$ such for any $X\in \Gamma
\left( TM\right) ,$ 
\begin{equation*}
D_{X}:\Gamma \left( \mathbb{O}M\right) \longrightarrow \Gamma \left( \mathbb{%
O}M\right)
\end{equation*}%
given by 
\begin{equation}
D_{X}A=\nabla _{X}A-AT_{X}  \label{octocov}
\end{equation}%
for any $A\in \Gamma \left( \mathbb{O}M\right) .$
\end{definition}

Note that in particular, $D_{X}1=-T_{X}.$Important properties of this
covariant derivative are that is satisfies a \emph{partial} derivation
property with respect to the octonion product and that it is metric
compatible

\begin{proposition}[{\protect\cite[Prop 6.4 and 6.5]{GrigorianOctobundle}}]
\label{propDXprod}Suppose $A,B\in \Gamma \left( \mathbb{O}M\right) $ and $%
X\in \Gamma \left( TM\right) $, then%
\begin{equation}
D_{X}\left( AB\right) =\left( \nabla _{X}A\right) B+A\left( D_{X}B\right)
\label{octocovprod}
\end{equation}%
and%
\begin{equation}
\nabla _{X}\left( g\left( A,B\right) \right) =g\left( D_{X}A,B\right)
+g\left( A,D_{X}B\right) \,  \label{DXmet}
\end{equation}
\end{proposition}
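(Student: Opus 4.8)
The plan is to prove the two identities separately; after unwinding the definition $D_X A=\nabla_X A-AT_X$, each reduces to an algebraic identity about the octonion product that is already available from the excerpt. No analytic input beyond metric compatibility of the Levi-Civita connection $\nabla$ on the tensor bundle $\mathbb{O}M$ is needed.

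\textbf{Partial derivation property.} First I would write $D_X(AB)=\nabla_X(AB)-(AB)T_X$ and substitute the Leibniz-with-associator rule (\ref{nablaXAB}), namely $\nabla_X(AB)=(\nabla_X A)B+A(\nabla_X B)-[T_X,A,B]$, to get $D_X(AB)=(\nabla_X A)B+A(\nabla_X B)-[T_X,A,B]-(AB)T_X$. Since $(\nabla_X A)B+A(D_X B)=(\nabla_X A)B+A(\nabla_X B)-A(BT_X)$, it remains to check that $[T_X,A,B]+(AB)T_X=A(BT_X)$, i.e. that $[T_X,A,B]=A(BT_X)-(AB)T_X=[A,B,T_X]$. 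This is exactly the statement that the associator is alternating in its three arguments, which is manifest from the formula $[A,B,C]=2\psi(\cdot^{\sharp},\alpha,\beta,\gamma)$ recalled earlier (a $4$-form is totally antisymmetric), since a cyclic permutation is even. The resulting asymmetry — $\nabla$ differentiating the left factor but $D$ the right factor — is then seen to be forced by the position of $T_X$ in the definition of $D$, and in particular $D_X 1=-T_X$ recovers the special case $A=1$.

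\textbf{Metric compatibility.} Here I would start from ordinary metric compatibility of $\nabla$ on $\mathbb{O}M$, $\nabla_X(g(A,B))=g(\nabla_X A,B)+g(A,\nabla_X B)$, and expand the target: $g(D_X A,B)+g(A,D_X B)=g(\nabla_X A,B)+g(A,\nabla_X B)-g(AT_X,B)-g(A,BT_X)$. So the claim is equivalent to $g(AT_X,B)+g(A,BT_X)=0$. I would derive this from the fact that $T_X$ is $\mathbb{O}$-imaginary, so $\overline{T_X}=-T_X$, together with the adjoint relation $\langle AB,C\rangle=\langle B,\bar A C\rangle$ stated in the excerpt: conjugating and using $\overline{AB}=\bar B\bar A$ and that conjugation is an isometry yields the companion identity $\langle AB,C\rangle=\langle A,C\bar B\rangle$, whence $g(AT_X,B)=\langle A,B\overline{T_X}\rangle=-\langle A,BT_X\rangle$. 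Alternatively, a one-line computation with $A=(a,\alpha)$, $B=(b,\beta)$, $T_X=(0,t)$ produces exactly the two cross-product terms $\varphi(\alpha,t,\beta)+\varphi(\beta,t,\alpha)$ (the scalar terms cancel in pairs), and these cancel by antisymmetry of $\varphi$.

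The only genuinely non-routine point — and the one I would flag as the heart of the argument — is the first: recognizing that the non-associativity correction $[T_X,A,B]$ appearing in $\nabla_X(AB)$ is precisely absorbed once $T_X$ is moved into the third slot of the associator via the alternating property. Everything else is bookkeeping with identities already in the text — the octonion metric (\ref{Omet}), the adjoint identity $\langle AB,C\rangle=\langle B,\bar A C\rangle$, and $T\in\Omega^1(\func{Im}\mathbb{O}M)$.
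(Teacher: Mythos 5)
Your proof is correct and is essentially the argument one expects here (the paper itself only cites \cite[Prop.\ 6.4 and 6.5]{GrigorianOctobundle} rather than reproving it): expanding $D_X(AB)=\nabla_X(AB)-(AB)T_X$ via (\ref{nablaXAB}) and absorbing the correction through the alternating property of the associator, $[A,B,T_X]=[T_X,A,B]$, which follows from $[A,B,C]=2\psi(\cdot^{\sharp},\alpha,\beta,\gamma)$, is exactly the right mechanism, and your metric-compatibility step — reducing to $g(AT_X,B)+g(A,BT_X)=0$ and using $\overline{T_X}=-T_X$ together with the adjoint identity (or the direct component computation with antisymmetry of $\varphi$) — is likewise sound. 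No gaps.
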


Moreover, $D_{X}$ is covariant with respect to change of gauge.

\begin{proposition}
\label{propDV}Suppose $\left( \varphi ,g\right) $ is a $G_{2}$-structure on
a $7$-manifold $M$, with torsion $T$ and corresponding octonion covariant
derivative $D$. Suppose $V$ is a unit octonion section, and $\tilde{\varphi}%
=\sigma _{V}\left( \varphi \right) $ is the corresponding $G_{2}$-structure,
that has torsion $\tilde{T}$, given by (\ref{TorsV2}), and an octonion
covariant derivative $\tilde{D}$. Then, for any octonion section $A$, we
have 
\begin{equation}
D\left( AV\right) =\left( \tilde{D}A\right) V  \label{DtildeAV2}
\end{equation}
\end{proposition}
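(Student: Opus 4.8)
The plan is to reduce the claimed identity, using the partial derivation property of the octonion covariant derivative (Proposition \ref{propDXprod}) and the associativity identity (\ref{associd1}), to its special case $A=\hat{1}$, and then to verify that special case directly from Theorem \ref{ThmTorsV}. First I would expand both sides. Since $\varphi $ and $\tilde{\varphi}=\sigma _{V}\left( \varphi \right) $ induce the same metric $g$, they share the same Levi-Civita connection $\nabla $ on the vector bundle $\Lambda ^{0}\oplus TM$, so definition (\ref{octocov}) applied to $\tilde{\varphi}$ reads $\tilde{D}_{X}A=\nabla _{X}A-A\circ _{V}\tilde{T}_{X}$, whence $\left( \tilde{D}_{X}A\right) V=\left( \nabla _{X}A\right) V-\left( A\circ _{V}\tilde{T}_{X}\right) V$. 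On the other hand Proposition \ref{propDXprod} with $B=V$ gives $D_{X}\left( AV\right) =\left( \nabla _{X}A\right) V+A\left( D_{X}V\right) $. Comparing the two, it suffices to prove $\left( A\circ _{V}\tilde{T}_{X}\right) V=-A\left( D_{X}V\right) $ for every octonion section $A$.

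Next I would invoke (\ref{associd1}). Read with $C=V$ and $B=\tilde{T}_{X}$, that identity states $A\left( \tilde{T}_{X}V\right) =\left( A\circ _{V}\tilde{T}_{X}\right) V$, so the requirement from the previous paragraph collapses to $A\left( \tilde{T}_{X}V\right) =-A\left( D_{X}V\right) $ for all $A$, that is, to the single octonion identity $\tilde{T}_{X}V=-D_{X}V$. This is precisely the statement of the proposition in the case $A=\hat{1}$, since $\tilde{D}_{X}\hat{1}=-\tilde{T}_{X}$ and $D_{X}\left( \hat{1}V\right) =D_{X}V$.

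Finally I would verify $\tilde{T}_{X}V=-D_{X}V$ by a direct computation. Because $V$ is a unit octonion, $\left\vert V\right\vert ^{2}\equiv 1$ and $d\left\vert V\right\vert ^{2}=0$, so the $\hat{1}$-term in (\ref{TorsV2}) vanishes and $\tilde{T}_{X}=VT_{X}V^{-1}-\left( \nabla _{X}V\right) V^{-1}$. Since any two octonions generate an associative subalgebra and $V^{-1}$ lies in the one generated by $V$, one has $\left( yV^{-1}\right) V=y$ for every octonion $y$, and flexibility gives $\left( VyV^{-1}\right) V=V\left( \left( yV^{-1}\right) V\right) =Vy$ as well. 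Right-multiplying the formula for $\tilde{T}_{X}$ by $V$ therefore yields $\tilde{T}_{X}V=VT_{X}-\nabla _{X}V=-\left( \nabla _{X}V-VT_{X}\right) =-D_{X}V$ by the definition (\ref{octocov}) of $D$; chaining the three steps then gives $D_{X}\left( AV\right) =\left( \nabla _{X}A\right) V+A\left( D_{X}V\right) =\left( \nabla _{X}A\right) V-\left( A\circ _{V}\tilde{T}_{X}\right) V=\left( \tilde{D}_{X}A\right) V$. The only point requiring care is this last computation: each rearrangement of $\left( VT_{X}V^{-1}\right) V$ and of $\left( \left( \nabla _{X}V\right) V^{-1}\right) V$ discards an associator, and one must confirm it vanishes -- which it does, because in each case two of its three arguments lie in the subalgebra generated by $V$. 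No genuinely non-associative corrections survive, so the reduction to the case $A=\hat{1}$ is exact and the argument closes.
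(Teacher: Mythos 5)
Your proof is correct. The paper itself states Proposition \ref{propDV} without an argument (deferring to the treatment in \cite{GrigorianOctobundle}), and your derivation is exactly the natural verification from the ingredients the paper does supply: the partial derivation property (\ref{octocovprod}), the change-of-gauge identity (\ref{associd1}), and the torsion transformation (\ref{TorsV2}) with $d\left\vert V\right\vert ^{2}=0$. The reduction to the single identity $\tilde{T}_{X}V=-D_{X}V$ is legitimate, and your handling of the discarded associators is sound, since in each case two of the three arguments lie in $\operatorname{span}\{\hat{1},V\}$, so alternativity (equivalently Artin's theorem) kills them; you also correctly read $\tilde{D}_{X}A=\nabla _{X}A-A\circ _{V}\tilde{T}_{X}$ with the product $\circ _{V}$ of $\sigma _{V}\left( \varphi \right) $ and the shared Levi-Civita connection. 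As a further check, running your computation for non-unit $V$ reproduces the extra term $\tfrac{1}{2}\left\vert V\right\vert ^{-2}\left( d\left\vert V\right\vert ^{2}\right) AV$ of (\ref{DtildeAV3}), which confirms the bookkeeping.
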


Note that if $V$ does not have unit norm, then we have an additional term:%
\begin{equation}
D\left( AV\right) =\left( \tilde{D}A\right) V+\left( \frac{1}{2}\frac{1}{%
\left\vert V\right\vert ^{2}}d\left\vert V\right\vert ^{2}\right) AV
\label{DtildeAV3}
\end{equation}%
In \cite{GrigorianOctobundle}, it was also shown that the octonion bundle
structure has a close relationship with the spinor bundle $\mathcal{S}$ on $%
M $. It is well-known (e.g. \cite{BaezOcto,HarveyBook}) that octonions have
a very close relationship with spinors in $7$ dimensions, respectively. In
particular, the \emph{enveloping algebra} of the octonion algebra is
isomorphic to the Clifford algebra in $7$ dimensions. The (left) enveloping
algebra of $\mathbb{O}$ consists of left multiplication maps $%
L_{A}:V\longrightarrow AV$ for $A,V\in \mathbb{O},$ under composition \cite%
{SchaferNonassoc}. Similarly, a right enveloping algebra may also be
defined. Since the binary operation in the enveloping algebra is defined to
be composition, it is associative. More concretely, for $\mathbb{O}$%
-imaginary octonions $A$ and $B$, 
\begin{equation}
L_{A}L_{B}+L_{B}L_{A}=-2\left\langle A,B\right\rangle \func{Id}
\label{octoenvelopCliff}
\end{equation}%
which is the defining identity for a Clifford algebra. We see that while the
octonion algebra does give rise to the Clifford algebra, in the process we
lose the nonassociative structure, and hence the octonion algebra has more
structure than the corresponding Clifford algebra. Note that also due to
non-associativity of the octonions, in general that $L_{A}L_{B}\neq L_{AB}.$
In fact, we have the following relationship.

It is then well-known that a nowhere-vanishing spinor on $M$ defines a $%
G_{2} $-structure via a bilinear expression involving Clifford
multiplication. In fact, given a unit norm spinor $\xi \in \Gamma \left( 
\mathcal{S}\right) ,$ we may define 
\begin{equation}
\varphi _{\xi }\left( \alpha ,\beta ,\gamma \right) =-\left\langle \xi
,\alpha \cdot \left( \beta \cdot \left( \gamma \cdot \xi \right) \right)
\right\rangle _{S}  \label{phixispin}
\end{equation}%
where $\cdot $ denotes Clifford multiplication, $\alpha ,\beta ,\gamma $ are
arbitrary vector fields and $\left\langle \cdot ,\cdot \right\rangle _{S}$
is the inner product on the spinor bundle. Now suppose a $G_{2}$-structure $%
\varphi _{\xi }$ is defined by a unit norm spinor $\xi $ using (\ref%
{phixispin}). This choice of a reference $G_{2}$-structure then induces a
correspondence between spinors and octonions. We can define the linear map $%
j_{\xi }:\Gamma \left( \mathcal{S}\right) \longrightarrow \Gamma \left( 
\mathbb{O}M\right) $ by 
\begin{subequations}%
\label{jxiprops} 
\begin{eqnarray}
j_{\xi }\left( \xi \right) &=&1 \\
j_{\xi }\left( V\cdot \eta \right) &=&V\circ _{\varphi _{\xi }}j_{\xi
}\left( \eta \right) .  \label{jxiprops2}
\end{eqnarray}%
\end{subequations}%
As shown in \cite{GrigorianOctobundle} the map $j_{\xi }$ is in fact
pointwise an isomorphism of real vector spaces from spinors to octonions.
Moreover, it is metric-preserving, and under this map we get a relationship
between the spinor covariant derivative $\nabla ^{\mathcal{S}}$ and the
octonion covariant derivative $D.$ More precisely, for any $\eta \in \Gamma
\left( \mathcal{S}\right) $ 
\begin{equation}
j_{\xi }\left( \nabla _{X}^{S}\eta \right) =D_{X}\left( j_{\xi }\left( \eta
\right) \right)  \label{jxispinoctder}
\end{equation}%
where $D$ is the octonion covariant derivative (\ref{octocov}) with respect
to the $G_{2}$-structure $\varphi _{\xi }.$Therefore, the octonion bundle
retains all of the information from the spinor bundle, but has an additional
non-associative division algebra structure.

We may define a distinguished $\func{Im}\mathbb{O}$-valued $1$-form $\delta
\in \Omega ^{1}\left( \func{Im}\mathbb{O}M\right) $ such that for any vector 
$X$ on $M,$ $\delta \left( X\right) \mathcal{\in }\Gamma \left( \func{Im}%
\mathbb{O}M\right) ,$ with components given by 
\begin{equation}
\delta \left( X\right) =\left( 0,X\right) .  \label{deltadef}
\end{equation}%
Therefore in particular, $\delta $ is the isomorphism that takes vectors to
imaginary octonions.

\begin{notation}
For convenience, given a vector field $X,$ we will denote $\delta \left(
X\right) \mathcal{\in }\Gamma \left( \func{Im}\mathbb{O}M\right) $ by $\hat{X%
}$.
\end{notation}

In components, the imaginary part of $\delta $ is simply represented by the
Kronecker delta:%
\begin{equation}
\delta _{i}=\left( 0,\delta _{i}^{\ \alpha }\right) .  \label{deltadef2}
\end{equation}

Below are some properties of $\delta $

\begin{lemma}[\protect\cite{GrigorianOctobundle}]
\label{lemDelProps} Suppose $\delta \in \Omega ^{1}\left( \func{Im}\mathbb{O}%
M\right) $ is defined by (\ref{deltadef2}) on a $7$-manifold $M$ with $G_{2}$%
-structure $\varphi $ and metric $g$. It then satisfies the following
properties, where octonion multiplication is with respect to $\varphi $

\begin{enumerate}
\item $\delta _{i}\delta _{j}=\left( -g_{ij},\varphi _{ij}^{\ \ \alpha
}\right) $

\item $\delta _{i}\left( \delta _{j}\delta _{k}\right) =\left( -\varphi
_{ijk},\psi _{\ ijk}^{\alpha }-\delta _{i}^{\alpha }g_{jk}+\delta _{\
j}^{\alpha }g_{ik}-\delta _{\ k}^{\alpha }g_{ij}\right) $

\item For any $A=\left( a_{0},\alpha \right) \in \Gamma \left( \mathbb{O}%
M\right) ,$ 
\begin{equation}
\delta A=\left( 
\begin{array}{c}
-\alpha \\ 
a_{0}\delta -\alpha \lrcorner \varphi%
\end{array}%
\right)  \label{deltaA}
\end{equation}
\end{enumerate}
\end{lemma}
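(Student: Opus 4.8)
The plan is to obtain all three identities by direct computation from the definition of the octonion product (\ref{octoproddef}), specialised to octonions of the form $(0,X)$, together with the cross product formula (\ref{vcrossdef2}) and the double cross product identity (\ref{doublecrossprod}). I would treat the three parts in the stated order, since part 1 feeds into part 2. Throughout, fix a local frame $\{e_{i}\}$ on $M$, so that $\delta_{i}=\delta(e_{i})=(0,e_{i})$, the vector part being the frame vector $e_{i}$ whose components are $\delta_{i}^{\ \alpha}$.

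For part 1, apply (\ref{octoproddef}) to $A=(0,e_{i})$ and $B=(0,e_{j})$: the scalar part is $-\langle e_{i},e_{j}\rangle=-g_{ij}$ and the vector part is $e_{i}\times_{\varphi}e_{j}$, whose $\alpha$-component is $\varphi_{ij}^{\ \ \alpha}$ by (\ref{vcrossdef2}) together with the cyclic symmetry of $\varphi$; this is exactly $\delta_{i}\delta_{j}=(-g_{ij},\varphi_{ij}^{\ \ \alpha})$. For part 2, substitute this into (\ref{octoproddef}) with $\delta_{i}=(0,e_{i})$ in the first slot and $\delta_{j}\delta_{k}=(-g_{jk},\,e_{j}\times e_{k})$ in the second. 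The scalar part becomes $-\langle e_{i},\,e_{j}\times e_{k}\rangle=-\varphi(e_{i},e_{j},e_{k})=-\varphi_{ijk}$ by the definition (\ref{vcrossdef}), while the vector part becomes $-g_{jk}\,e_{i}+e_{i}\times(e_{j}\times e_{k})$. Expanding the last term by (\ref{doublecrossprod}) gives $e_{i}\times(e_{j}\times e_{k})=g_{ik}e_{j}-g_{ij}e_{k}+\psi(\cdot^{\sharp},e_{i},e_{j},e_{k})$, and collecting the four terms and reading off components yields $\psi_{\ ijk}^{\alpha}-\delta_{i}^{\alpha}g_{jk}+\delta_{j}^{\alpha}g_{ik}-\delta_{k}^{\alpha}g_{ij}$, as required.

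For part 3, evaluate the $1$-form $\delta A$ on an arbitrary vector field $X$: since $\delta(X)=(0,X)$, (\ref{octoproddef}) gives $\delta(X)\,A=(-\langle X,\alpha\rangle,\ a_{0}X+X\times_{\varphi}\alpha)$ for $A=(a_{0},\alpha)$. Regarding the right-hand side as an $\mathbb{O}M$-valued $1$-form in $X$, the scalar part is the $1$-form $-\alpha$, the $a_{0}X$ term contributes $a_{0}\delta$ to the vector part, and the assignment $X\mapsto X\times_{\varphi}\alpha$ is rewritten, using the total antisymmetry of $\varphi$ and the definition (\ref{contractdef}), as the vector-valued $1$-form $-\alpha\lrcorner\varphi$: from $(X\times_{\varphi}\alpha)^{a}=\varphi^{a}_{\ bc}X^{b}\alpha^{c}$ one identifies $\varphi^{a}_{\ bc}\alpha^{c}$ with $-(\alpha\lrcorner\varphi)_{b}^{\ a}$ under the paper's index conventions. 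Assembling these gives (\ref{deltaA}).

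These computations are entirely routine; the only steps needing genuine care are the sign bookkeeping in part 2 — producing the three metric terms and the $\psi$ term with exactly the displayed signs, which is dictated entirely by (\ref{doublecrossprod}) — and the index juggling in part 3 that matches the cross-product term against $-\alpha\lrcorner\varphi$. I expect no conceptual obstacle.
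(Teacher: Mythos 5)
Your computation is correct: all three identities follow exactly as you describe from the product formula (\ref{octoproddef}) applied to $(0,e_{i})$, the cross-product convention (\ref{vcrossdef2}), and the double cross product identity (\ref{doublecrossprod}), and your sign bookkeeping in parts 2 and 3 (including the identification $\varphi^{a}_{\ bc}\alpha^{c}=-(\alpha\lrcorner\varphi)_{b}^{\ a}$) checks out. Note that the paper itself gives no proof — the lemma is imported from \cite{GrigorianOctobundle} — so your direct verification is precisely the standard argument one would expect there.
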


From Lemma \ref{lemDelProps} we see that left multiplication by $\delta $
gives a representation of the Clifford algebra. In particular, $\delta $
satisfy the Clifford algebra identity (\ref{octoenvelopCliff})%
\begin{equation}
\delta _{a}\delta _{b}+\delta _{b}\delta _{a}=-2g_{ab}\func{Id}
\label{octoenvelopCliff2}
\end{equation}%
In fact, using the map $j_{\xi },$ we can relate the Clifford algebra
representation on spinors in terms of imaginary gamma matrices to $\delta $: 
\begin{equation}
j_{\xi }\left( \gamma _{a}\right) =i\delta _{a}  \label{jgamdelta}
\end{equation}

We can now define the octonion Dirac operator $\slashed{D}$ using $\delta $
and the octonion covariant derivative $D$ (\ref{octocov}). Let $A\in \Gamma
\left( \mathbb{O}M\right) ,$ then define $\slashed{D}A$ as 
\begin{equation}
\slashed{D}A=\delta ^{i}\left( D_{i}A\right)  \label{defDiracOp}
\end{equation}%
This operator is precisely what we obtain by applying the map $j_{\xi }$ to
the standard Dirac operator on the spinor bundle.

A very useful property of $\slashed{D}$ is that it gives the $\tau _{1}$ and 
$\tau _{7}$ components of the torsion \cite{GrigorianOctobundle}. Suppose $V$
is a nowhere-vanishing octonion section, and suppose $\tilde{\varphi}=\sigma
_{V}\left( \varphi \right) $ has torsion tensor $\tilde{T}$ with $1$%
-dimensional and $7$-dimensional components $\tilde{\tau}_{1}\ $and $\tilde{%
\tau}_{7}.$ Then, 
\begin{equation}
\slashed{D}V=\left( 
\begin{array}{c}
7\tilde{\tau}_{1} \\ 
\frac{1}{\left\vert V\right\vert }\nabla \left\vert V\right\vert -6\tilde{%
\tau}_{7}%
\end{array}%
\right) V  \label{DirV}
\end{equation}%
In particular, 
\begin{equation}
\slashed{D}1=\left( 
\begin{array}{c}
7\tau _{1} \\ 
-6\tau _{7}%
\end{array}%
\right)  \label{Dir1}
\end{equation}%
where $\tau _{1}$ and $\tau _{7}$ are the corresponding components of $T$ -
the torsion tensor of the $G_{2}$-structure $\varphi .$

Let us also introduce some additional notation that will be useful later on.

\begin{notation}
Let $A\in \Omega ^{1}\left( \mathbb{O}M\right) ,$ Define the transpose $%
A^{t} $ of $A$ as 
\begin{equation}
\left( A^{t}\right) _{i}=\left( \func{Re}A\right) _{i}\hat{1}+\left\langle
\delta _{i},A_{j}\right\rangle \delta ^{j}  \label{Atranspose}
\end{equation}%
Thus, $A^{t}$ is an $\mathbb{O}M$-valued $1$-form with the same $\mathbb{O}$%
-real part as $A,$ but the $\mathbb{O}$-imaginary part is transposed.
Similarly, define%
\begin{equation}
\left( \func{Sym}A\right) _{i}=\frac{1}{2}\left( A+A^{t}\right) =\left( 
\func{Re}A\right) _{i}\hat{1}+\frac{1}{2}\left( \left\langle \delta
_{i},A_{j}\right\rangle \delta ^{j}+\left( \func{Im}A\right) _{i}\right)
\label{Symdef}
\end{equation}%
and 
\begin{equation}
\left( \func{Skew}A\right) _{i}=\frac{1}{2}\left( A-A^{t}\right) =\frac{1}{2}%
\left( \left\langle \delta _{i},A_{j}\right\rangle \delta ^{j}-\left( \func{%
Im}A\right) _{i}\right)  \label{Asymdef}
\end{equation}
\end{notation}

A useful property of the transpose that follows from (\ref{deltaA}) is the
following%
\begin{equation}
\left( \delta A\right) ^{t}=A\delta  \label{deltaAtr}
\end{equation}

\subsection{Complexified octonions}

\label{secCxOcto}Recall that the Lie group $G_{2}$ also has a non-compact
complex form $G_{2}^{\mathbb{C}}$, which is a complexification of the
compact real form $G_{2}.$ In particular, $G_{2}^{\mathbb{C}}$ is a subgroup
of $SO\left( 7,\mathbb{C}\right) $ - the complexified special orthogonal
group. The group $G_{2}^{\mathbb{C}}$ then preserves the complexified vector
product on $\mathbb{C}^{7}$ and is the automorphism group of the
complexified octonion algebra - the bioctonions \cite{BaezOcto,AkbulutCan}.
We define the \emph{bioctonion algebra }$\mathbb{O}_{\mathbb{C}}$ as the
complexification $\mathbb{C}\otimes _{\mathbb{R}}\mathbb{O}$ of the real
octonion algebra $\mathbb{O}.$ This can be regarded as an octonion algebra
over the field $\mathbb{C}.$ In fact, it is known that any octonion algebra
over $\mathbb{C}$ is isomorphic to $\mathbb{O}_{\mathbb{C}}$ \cite%
{AkbulutCan}. Any $Z\in \mathbb{O}_{\mathbb{C}}$ can be uniquely written as $%
Z=X+iY,$ where $X,Y\in \mathbb{O}.$ Since $\mathbb{C}$ is the base field for
this algebra, the imaginary unit $i$ commutes with octonions. To avoid
ambiguity we will make a distinction between the complex real and imaginary
parts and the octonion real and imaginary parts, so whenever there could be
ambiguity, we will refer to $\mathbb{C}$-real and $\mathbb{C}$-imaginary
parts or $\mathbb{O}$-real and $\mathbb{O}$-imaginary parts of a complex
octonion. Also, we will make a distinction between complex conjugation,
which will be denoted by $^{\ast }$ and octonionic conjugation that will be
denoted by $\overline{}$. That is, for $Z=X+iY\in \mathbb{O}_{\mathbb{C}},$
where $X,Y$ are $\mathbb{C}$-real octonions, we have 
\begin{subequations}
\begin{eqnarray}
\bar{Z} &=&\bar{X}+i\bar{Y} \\
Z^{\ast } &=&X-iY
\end{eqnarray}%
\end{subequations}%
Let us also linearly extend the inner product $\left\langle \cdot ,\cdot
\right\rangle $ from $\mathbb{O}$ to a complex-valued bilinear form on $%
\mathbb{O}_{\mathbb{C}}$. So if $Z_{1}=X_{1}+iY_{1}$ and $Z_{2}=X_{2}+iY_{2}$%
, we define 
\begin{eqnarray}
\left\langle Z_{1},Z_{2}\right\rangle &=&\left\langle
X_{1}+iY_{1},X_{2}+iY_{2}\right\rangle  \notag \\
&=&\left\langle X_{1},X_{2}\right\rangle -\left\langle
Y_{1},Y_{2}\right\rangle +i\left\langle X_{1},Y_{2}\right\rangle
+i\left\langle Y_{1},X_{2}\right\rangle  \label{cxinprod}
\end{eqnarray}%
Furthermore, define the complex-valued quadratic form $N$ on $\mathbb{O}_{%
\mathbb{C}}$ given by 
\begin{eqnarray}
N\left( Z\right) &=&Z\bar{Z}=\left\langle Z,Z\right\rangle  \notag \\
&=&\left( \left\vert X\right\vert ^{2}-\left\vert Y\right\vert ^{2}\right)
+2i\left\langle X,Y\right\rangle  \label{cxqform}
\end{eqnarray}%
for any $Z=X+iY\in \mathbb{O}_{\mathbb{C}}.$ Note that while $Z\bar{Z}$
gives a $\mathbb{O}$-real expression, taking $Z\overline{Z^{\ast }}$ is in
general not $\mathbb{O}$-real:%
\begin{equation}
Z\overline{Z^{\ast }}=\left( \left\vert X\right\vert ^{2}+\left\vert
Y\right\vert ^{2}\right) +i\left( Y\bar{X}-X\bar{Y}\right)  \label{ZZstbar}
\end{equation}%
In fact, it is curious that $\func{Re}_{\mathbb{C}}\left( Z\overline{Z^{\ast
}}\right) =\func{Re}_{\mathbb{O}}\left( Z\overline{Z^{\ast }}\right) $ and $%
\func{Im}_{\mathbb{C}}\left( Z\overline{Z^{\ast }}\right) =\func{Im}_{%
\mathbb{O}}\left( Z\overline{Z^{\ast }}\right) .$

It is easy to check that $N$ gives $\mathbb{O}_{\mathbb{C}}$ the structure
of a composition algebra, that is, for any $Z_{1},Z_{2}\in \mathbb{O}_{%
\mathbb{C}}$, 
\begin{equation}
N\left( Z_{1}Z_{2}\right) =N\left( Z_{1}\right) N\left( Z_{2}\right)
\end{equation}%
This quadratic form is moreover \emph{isotropic}, that is there exist
non-zero $Z\in \mathbb{O}_{\mathbb{C}}$ for which $N\left( Z\right) =0.$
Therefore, $\mathbb{O}_{\mathbb{C}}$ is a \emph{split composition algebra}.
Indeed, $N\left( Z\right) =0$ if and only if $Z=X+iY$ with $X,Y\in \mathbb{O}
$ such that 
\begin{equation}
\left\vert X\right\vert =\left\vert Y\right\vert \ \ \text{and }\left\langle
X,Y\right\rangle =0  \label{zeroN}
\end{equation}%
In fact a complex octonion $Z\in \mathbb{O}_{\mathbb{C}}$ is a zero divisor
if and only if $N\left( Z\right) =0.$ Suppose now $Z\in \mathbb{O}_{\mathbb{C%
}}$ is not a zero divisor. We can then define the inverse of $Z$ in a
standard way:%
\begin{equation}
Z^{-1}=\frac{\bar{Z}}{N\left( Z\right) }  \label{Zinv}
\end{equation}

Given a real manifold $M$, we may always complexify the tangent bundle as $%
TM\otimes \mathbb{C},$ and then if $M$ has a real Riemannian metric and a
choice of orientation, it can be uniquely linearly extended to $TM\otimes 
\mathbb{C}$, thus defining a $SO\left( n,\mathbb{C}\right) $-structure.
Similarly, given a real $G_{2}$-structure $\varphi $ on a $7$-manifold $M$
we can extend it to $TM\otimes \mathbb{C}$ and this will define a reduction
of the $SO\left( n,\mathbb{C}\right) $-structure to a $G_{2}^{\mathbb{C}}$%
-structure. The corresponding octonion bundle can now be also complexified,
to give the bioctonion bundle $\mathbb{O}_{\mathbb{C}}M,$ with $g$ defining
the quadratic form $N$ on $\mathbb{O}_{\mathbb{C}}M.$

However note that now, given a $Z\in \Gamma \left( \mathbb{O}_{\mathbb{C}%
}M\right) $ that is nowhere a zero divisor, we may define a complex-valued $%
3 $-form $\varphi _{Z}$ by 
\begin{equation}
\varphi _{Z}=\sigma _{Z}\left( \varphi \right)
\end{equation}%
Similarly as for real octonions, this will still be compatible with $g$, but
will define a new $G_{2}^{\mathbb{C}}$-structure. This is now invariant
under $Z\longrightarrow fZ$ for any nowhere-vanishing $\mathbb{C}$-valued
function $f$, and hence this explicitly shows that isometric $G_{2}^{\mathbb{%
C}}$-structures are pointwise parametrized by $\mathbb{C}P^{7}\cong SO\left(
n,\mathbb{C}\right) /G_{2}^{\mathbb{C}}$.

It is easy to see that all the standard $G_{2}$-structure identities are
satisfied by $\varphi _{Z}$ and its corresponding $4$-form $\psi _{Z}.$
Using $\varphi _{Z}$ we can define a new product structure on $\mathbb{O}_{%
\mathbb{C}}M$ in the same way as (\ref{OctoVAB}):%
\begin{equation}
A\circ _{Z}B:=\left( AZ\right) \left( Z^{-1}B\right) =AB+\left[ A,B,Z\right]
Z^{-1}  \label{defAZB}
\end{equation}%
The product $\circ _{Z}$ for complex $Z$ satisfies the same properties as
for $\mathbb{C}$-real $Z,$ namely an important property is the following:

\begin{lemma}
\label{LemABZ}Suppose $A,B,Z\in \Gamma \left( \mathbb{O}_{\mathbb{C}%
}M\right) $, such that $Z$ is nowhere a zero divisor, then 
\begin{equation}
A\left( BZ\right) =\left( A\circ _{Z}B\right) Z  \label{ABZ}
\end{equation}
\end{lemma}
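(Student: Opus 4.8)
The plan is to reduce (\ref{ABZ}) to a pointwise identity in the bioctonion algebra $\mathbb{O}_{\mathbb{C}} = \mathbb{C}\otimes_{\mathbb{R}}\mathbb{O}$ and then to run the same argument that produced (\ref{associd1}) for ordinary octonions, the only new issue being that all manipulations must remain valid in the split (non-division) setting. Since both sides of (\ref{ABZ}) are algebraic in $A$, $B$, $Z$ and the product $\circ_Z$ is defined fibrewise, it suffices to work at a single point, where $\Gamma(\mathbb{O}_{\mathbb{C}}M)$ is replaced by $\mathbb{O}_{\mathbb{C}}$ with its complexified octonion product; the hypothesis that $Z$ is nowhere a zero divisor is used precisely so that $Z^{-1} = \bar Z/N(Z)$ is well defined, by (\ref{Zinv}).

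First I would unwind the definitions. By (\ref{defAZB}) we have $A\circ_Z B = AB + [A,B,Z]Z^{-1}$, and by the definition of the associator, $[A,B,Z] = A(BZ) - (AB)Z$. Using only left-distributivity of the product,
\[
(A\circ_Z B)Z = (AB)Z + \big([A,B,Z]Z^{-1}\big)Z, \qquad A(BZ) = (AB)Z + [A,B,Z].
\]
Comparing these, the lemma is equivalent to the single claim $\big(W Z^{-1}\big)Z = W$ for all $W\in\mathbb{O}_{\mathbb{C}}$ (to be applied with $W = [A,B,Z]$); after multiplying through by the complex scalar $N(Z)$ and using (\ref{Zinv}), this in turn is equivalent to
\[
(W\bar Z)Z = N(Z)\,W .
\]

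Then I would prove this last identity. On the real octonion algebra $\mathbb{O}$ it is classical: $\mathbb{O}$ is alternative, so $[W,Z,Z]=0$, and since $\bar Z + Z$ is a real multiple of the identity (hence central) we get $[W,\bar Z,Z] = -[W,Z,Z] = 0$, i.e. $(W\bar Z)Z = W(\bar Z Z) = N(Z)W$. The identity $(W\bar Z)Z = N(Z)W$ is polynomial in the entries of $W$ and $Z$ with fixed real coefficients, so it is preserved under base change from $\mathbb{R}$ to $\mathbb{C}$; equivalently, $\mathbb{O}_{\mathbb{C}}$, being a composition algebra, is again alternative and conjugation is again an involution with $\bar Z Z = N(Z)$, so the one-line computation above applies verbatim over $\mathbb{O}_{\mathbb{C}}$. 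Combining with the previous paragraph yields (\ref{ABZ}), and, since the argument is pointwise, also the stated identity for sections.

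The point requiring care --- the \emph{main obstacle}, such as it is --- is making sure that in the split algebra $\mathbb{O}_{\mathbb{C}}$ one never secretly uses cancellation or invertibility of an arbitrary element: the identities $[W,Z,Z]=0$ and $\bar Z Z = N(Z)$ must be justified as consequences of alternativity and of $N$ being multiplicative, both of which survive complexification, rather than by manipulating inverses. Invertibility is invoked only for $Z$, which is assumed to be nowhere a zero divisor. Once this is observed, no computation beyond the bookkeeping above is needed.
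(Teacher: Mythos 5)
Your proof is correct and fills in exactly what the paper leaves implicit: the paper states Lemma \ref{LemABZ} without proof, appealing to the claim that $\circ_Z$ "satisfies the same properties as for $\mathbb{C}$-real $Z$," where the real analogue (\ref{associd1}) in turn cites associator properties from another reference. Your reduction to the identity $(W\bar Z)Z = N(Z)\,W$, proved from $[W,\bar Z, Z]=0$ (alternativity plus centrality of $Z+\bar Z$), with the explicit check that alternativity and $\bar Z Z = N(Z)$ persist in the split composition algebra $\mathbb{O}_{\mathbb{C}}$, is precisely the missing content, and the role of the "nowhere a zero divisor" hypothesis is correctly identified as guaranteeing that $Z^{-1}$ is defined via (\ref{Zinv}).
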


The irreducible representations of $G_{2}^{\mathbb{C}}$ are just the
complexifications of the corresponding representations of the real form of $%
G_{2}$, so we still have the same decompositions of (complexified)
differential forms. We will denote the relevant complexified representations
with a superscript $\mathbb{C}.$

Finally, we can also define the torsion $T^{\left( Z\right) }$ of $\varphi
^{\left( Z\right) }$ as in (\ref{TorsV2}) 
\begin{equation}
T^{\left( Z\right) }=-\left( DZ\right) Z^{-1}+\frac{1}{2}\frac{1}{N\left(
Z\right) }d\left( N\left( Z\right) \right) \hat{1}  \label{TorsZ}
\end{equation}%
With respect to representations of $G_{2}^{\mathbb{C}}$, this will have a
similar decomposition into components as the torsion of a real $G_{2}$%
-structure, with $d\varphi _{Z}$ and $d\psi _{Z}$ satisfying a complexified
version of (\ref{dphidpsi}): 
\begin{subequations}%
\label{dphidpsiz}

\begin{eqnarray}
d\varphi _{Z} &=&8\tau _{1}^{\left( Z\right) }\psi _{Z}-6\tau _{7}^{\left(
Z\right) }\wedge \varphi _{Z}+8\iota _{\psi _{Z}}\left( \tau _{27}^{\left(
Z\right) }\right)  \label{dphiz} \\
d\psi _{Z} &=&-8\tau _{7}^{\left( Z\right) }\wedge \psi _{Z}-4\ast \tau
_{14}^{\left( Z\right) }.  \label{dpsiz}
\end{eqnarray}

\end{subequations}
where $\tau _{1}^{\left( Z\right) },\tau _{7}^{\left( Z\right) },\tau
_{14}^{\left( Z\right) },$ and $\tau _{27}^{\left( Z\right) }$ are the
components of $T^{\left( Z\right) }$ in representations $\mathbf{1}^{\mathbb{%
C}},\mathbf{7}^{\mathbb{C}},\mathbf{14}^{\mathbb{C}},$ and $\mathbf{27}^{%
\mathbb{C}}$, respectively, with respect to the $G_{2}^{\mathbb{C}}$%
-structure $\varphi _{Z}$.

\section{Supersymmetry equations in terms of octonions}

\setcounter{equation}{0}\label{secoctosusy}Recall that the supersymmetry
equations in 7 dimensions for a spinor $\theta $ are given by 
\begin{subequations}
\begin{eqnarray}
0 &=&\lambda e^{-\Delta }\theta ^{\ast }+\left[ \left( \mu ie^{-4\Delta }+%
\frac{1}{2}\left( \partial _{c}\Delta \right) \gamma ^{c}\right) +\frac{1}{%
144}G_{bcde}\gamma ^{bcde}\right] \theta \\
\nabla _{a}^{\mathcal{S}}\theta &=&\left[ \frac{i}{2}\mu e^{-4\Delta }\gamma
_{a}-\frac{1}{144}\gamma _{a}\left( G_{bcde}\gamma _{\ \ \ \ \ \ \
}^{bcde}\right) +\frac{1}{12}G_{abcd}\gamma ^{bcd}\right] \theta .
\end{eqnarray}

\end{subequations}
Suppose we have a fixed $G_{2}$-structure $\varphi $ that is defined by a
unit spinor $\xi .$ Then, using the ideas in Section \ref{sectTors}, we can
apply the map $j_{\xi }$ to the above equations, in order to obtain a set of
equations for a complex octonion $Z=$ $j_{\xi }\left( \theta \right) .$
We'll suppose that $Z=X+iY.$ 
\begin{subequations}%
\label{susyeqoct}%
\begin{eqnarray}
0 &=&\lambda e^{-\Delta }Z^{\ast }+i\left( \mu e^{-4\Delta }+\frac{1}{2}%
\left( \partial _{c}\Delta \right) \delta ^{c}\right) Z+\frac{1}{144}%
G_{bcde}\delta ^{b}\left( \delta ^{c}\left( \delta ^{d}\left( \delta
^{e}Z\right) \right) \right)  \label{susyeqoct1} \\
D_{a}Z &=&-\frac{1}{2}\mu e^{-4\Delta }\delta _{a}Z-\frac{i}{144}%
G_{bcde}\delta _{a}\left( \delta ^{b}\left( \delta ^{c}\left( \delta
^{d}\left( \delta ^{e}Z\right) \right) \right) \right) -\frac{i}{12}%
G_{abcd}\left( \delta ^{b}\left( \delta ^{c}\left( \delta ^{d}Z\right)
\right) \right)  \label{susyeqoct2}
\end{eqnarray}%
\end{subequations}%
Let $M=\mu e^{-4\Delta }+\frac{1}{2}\left( \widehat{\partial \Delta }\right)
=\left( 
\begin{array}{c}
\mu e^{-4\Delta } \\ 
\frac{1}{2}\partial \Delta%
\end{array}%
\right) ,$ and also define the following operators on $\Gamma \left( \mathbb{%
O}_{\mathbb{C}}M\right) $ 
\begin{subequations}
\begin{eqnarray}
-\frac{1}{144}G_{bcde}\delta ^{b}\left( \delta ^{c}\left( \delta ^{d}\left(
\delta ^{e}A\right) \right) \right) &=&G^{\left( 4\right) }\left( A\right)
\label{defG4} \\
-\frac{1}{144}G_{abcd}\left( \delta ^{b}\left( \delta ^{c}\left( \delta
^{d}A\right) \right) \right) &=&G_{a}^{\left( 3\right) }\left( A\right)
\label{defG3}
\end{eqnarray}%
\end{subequations}%
for any octonion $A.$ These operators satisfy some properties that are an
easy consequence of the Clifford algebra identity (\ref{octoenvelopCliff2}).

\begin{lemma}
\label{LemG3G4}The operators $G^{\left( 4\right) }$ and $G^{\left( 3\right)
} $ satisfy the following properties for any (bi)octonion sections $A$ and $%
B $ 
\begin{subequations}
\begin{eqnarray}
\delta _{a}\left( G^{\left( 4\right) }\left( A\right) \right) &=&G^{\left(
4\right) }\left( \delta _{a}A\right) -8G_{a}^{\left( 3\right) }\left(
A\right) \\
\left\langle G^{\left( 4\right) }\left( A\right) ,B\right\rangle
&=&\left\langle A,G^{\left( 4\right) }\left( B\right) \right\rangle \\
\left\langle G_{a}^{\left( 3\right) }\left( A\right) ,B\right\rangle
&=&\left\langle A,G_{a}^{\left( 3\right) }\left( B\right) \right\rangle \\
\left\langle \delta _{a}G^{\left( 4\right) }\left( A\right) ,B\right\rangle
&=&\left\langle \delta _{a}A,G^{\left( 4\right) }\left( B\right)
\right\rangle -8\left\langle A,G_{a}^{\left( 3\right) }\left( B\right)
\right\rangle \\
\left\langle \delta _{a}G^{\left( 4\right) }\left( A\right) ,A\right\rangle
&=&-4\left\langle A,G_{a}^{\left( 3\right) }\left( A\right) \right\rangle
\end{eqnarray}%
\end{subequations}%
\end{lemma}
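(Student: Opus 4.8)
The plan is to reduce all five identities to two elementary facts about left octonion multiplication, each of which extends $\mathbb{C}$-bilinearly from $\mathbb{O}M$ to $\mathbb{O}_{\mathbb{C}}M$. Write $L_{\delta_a}$ for the operator $A\mapsto \delta_a A$. The first fact is the Clifford relation (\ref{octoenvelopCliff2}): read at the operator level it says $L_{\delta_a}L_{\delta_b}+L_{\delta_b}L_{\delta_a}=-2g_{ab}\func{Id}$. Although octonion multiplication is non-associative, \emph{composition} of these left-multiplication operators is associative, so each nested product $\delta_a(\delta_b(\cdots))$ occurring in (\ref{defG4})--(\ref{defG3}) is unambiguously the composite $L_{\delta_a}L_{\delta_b}\cdots$, and this operator reading is what I would keep in mind throughout. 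The second fact is that each $L_{\delta_a}$ is skew-adjoint: since $\delta_a$ is $\mathbb{O}$-imaginary, $\bar\delta_a=-\delta_a$, so the identity $\langle AB,C\rangle=\langle B,\bar A C\rangle$ yields $\langle \delta_a A,B\rangle=-\langle A,\delta_a B\rangle$ for the symmetric $\mathbb{C}$-bilinear form (\ref{cxinprod}).

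For the first identity I would commute $L_{\delta_a}$ rightward through the block $L_{\delta^b}L_{\delta^c}L_{\delta^d}L_{\delta^e}$, applying the anticommutator relation four times. This produces the ``straight-through'' term $L_{\delta^b}L_{\delta^c}L_{\delta^d}L_{\delta^e}L_{\delta_a}$ together with four ``contraction'' terms carrying one Kronecker delta each and alternating signs. After contracting against the totally antisymmetric $G_{bcde}$ and relabelling dummy indices in each contraction term so that the three surviving operators read $L_{\delta^c}L_{\delta^d}L_{\delta^e}$, the antisymmetry of $G$ forces all four contraction terms to equal $-2G_{acde}L_{\delta^c}L_{\delta^d}L_{\delta^e}$, so they sum to $-8G_{acde}L_{\delta^c}L_{\delta^d}L_{\delta^e}$. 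Multiplying through by $-1/144$ and comparing with (\ref{defG4})--(\ref{defG3}) (after the cosmetic relabelling $bcd\mapsto cde$ in the definition of $G^{(3)}_a$) gives $\delta_a(G^{(4)}(A))=G^{(4)}(\delta_a A)-8G^{(3)}_a(A)$. The one delicate point is the sign bookkeeping in the four relabellings; this is precisely what pins down the constant $8$, so I expect this to be the main place where care is needed.

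For the two symmetry statements I would use that the adjoint of a product $L_{\delta_{a_1}}\cdots L_{\delta_{a_k}}$ is $(-1)^k L_{\delta_{a_k}}\cdots L_{\delta_{a_1}}$, the reversed product with the sign $(-1)^k$. Contracting the reversed product against $G_{bcde}$ (resp. $G_{abcd}$), and using that reversing four indices is an even permutation while reversing three is odd, the sign $(-1)^4=1$ (resp. $(-1)^3=-1$) is exactly compensated, so $G^{(4)}$ and $G^{(3)}_a$ are each self-adjoint. The fourth identity is then immediate: substitute the first identity into $\langle \delta_a G^{(4)}(A),B\rangle$, then apply self-adjointness of $G^{(4)}$ to the first term and of $G^{(3)}_a$ to the second. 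Finally, for the last identity I would set $B=A$ in the fourth, rewrite $\langle \delta_a A,G^{(4)}(A)\rangle=-\langle A,\delta_a G^{(4)}(A)\rangle$ by skew-adjointness of $\delta_a$, and use symmetry of the bilinear form to recognise that $\langle \delta_a G^{(4)}(A),A\rangle$ appears on both sides; solving the resulting scalar relation gives $\langle \delta_a G^{(4)}(A),A\rangle=-4\langle A,G^{(3)}_a(A)\rangle$. Apart from the index gymnastics in the first identity, nothing here poses a genuine obstacle; the subtlety worth flagging is simply to keep the associative, operator-level reading of the nested $\delta$-products cleanly separate from the non-associativity of the octonion product.
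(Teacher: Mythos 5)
Your argument is correct and matches the paper's intent: the paper simply asserts these identities as "an easy consequence of the Clifford algebra identity" (\ref{octoenvelopCliff2}), and you have supplied exactly the computation this phrase gestures at, using the anticommutator to commute $\delta_a$ through and skew-adjointness of each $L_{\delta_a}$ together with the parity of the index reversals. The operator-level (associative) reading of the nested products is indeed the right way to make the reduction rigorous, and your sign bookkeeping for the four contraction terms and for the reversal parities $(-1)^4\cdot(+1)$, $(-1)^3\cdot(-1)$ checks out.
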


Now we can rewrite (\ref{susyeqoct}) more succinctly as 
\begin{subequations}%
\label{susyeqoct1a2a}%
\begin{eqnarray}
G^{\left( 4\right) }\left( Z\right) &=&iMZ+\lambda e^{-\Delta }Z^{\ast }
\label{susyeqoct1a} \\
D_{a}Z &=&-\frac{1}{2}\mu e^{-4\Delta }\delta _{a}Z+i\delta _{a}G^{\left(
4\right) }\left( Z\right) +12iG_{a}^{\left( 3\right) }\left( Z\right)
\label{susyeqoct2a}
\end{eqnarray}

\end{subequations}%
Recall that given a $G_{2}$-structure $\varphi $, we can decompose the $4$%
-form $G$ according to the representations of $G_{2}$ as%
\begin{eqnarray}
G &=&Q_{0}\psi +Q_{1}^{\flat }\wedge \varphi _{X}+\iota _{\psi _{X}}\left(
Q_{2}\right)  \label{Gdecomp} \\
&=&\iota _{\psi _{X}}\left( Q_{0}g-\frac{4}{3}Q_{1}\lrcorner \varphi
_{Z}+4Q_{2}\right)  \label{Gdecomp2}
\end{eqnarray}%
where $Q_{0}$ is a scalar function that represents the $1$-dimensional
component of $G,$ $Q_{1}$ is a vector that represents the $7$-dimensional
component, and $Q_{2}$ is a symmetric $2$-tensor that represents the $27$%
-dimensional component. Then, a straightforward calculation, gives us $%
G^{\left( 4\right) }\left( 1\right) $ and $G^{\left( 3\right) }\left(
1\right) $ in terms of these components.

\begin{lemma}
\label{LemG4G3}In terms of the decomposition (\ref{Gdecomp}) of the $4$-form 
$G$, the quantities $G^{\left( 4\right) }$ and $G^{3}$ are given by 
\begin{subequations}%
\label{G4G3Q} 
\begin{eqnarray}
G^{\left( 4\right) }\left( 1\right) &=&\left( 
\begin{array}{c}
\frac{7}{6}Q_{0} \\ 
\frac{2}{3}Q_{1}%
\end{array}%
\right)  \label{G41} \\
G^{\left( 3\right) }\left( 1\right) &=&\left( 
\begin{array}{c}
\frac{1}{6}Q_{1} \\ 
-\frac{1}{8}Q_{0}\delta +\frac{1}{8}Q_{1}\lrcorner \varphi +\frac{1}{12}Q_{2}%
\end{array}%
\right)  \label{G31}
\end{eqnarray}%
\end{subequations}%
\end{lemma}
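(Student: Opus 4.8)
The plan is to establish the formulas (\ref{G41}) and (\ref{G31}) by direct computation built on Lemma~\ref{lemDelProps}: evaluate the iterated octonion products $\delta^{b}(\delta^{c}(\delta^{d}(\delta^{e}1)))$ and $\delta^{b}(\delta^{c}(\delta^{d}1))$ explicitly, and then contract against the totally antisymmetric tensors $G_{bcde}$ and $G_{abcd}$, using the $G_{2}$-decomposition (\ref{Gdecomp}) of $G$ together with the contraction identity (\ref{phiphi1}) and the standard $\varphi$--$\psi$ contraction identities.

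First I would build the products up layer by layer. Since $1=\hat{1}=(1,0)$, part~(3) of Lemma~\ref{lemDelProps} (equivalently (\ref{deltaA})) gives $\delta_{e}1=\delta_{e}$; part~(1) gives $\delta_{d}(\delta_{e}1)=(-g_{de},\varphi_{de}{}^{\alpha})$; and part~(2) gives
\[
\delta_{c}(\delta_{d}(\delta_{e}1))=\left(-\varphi_{cde},\ \psi^{\alpha}{}_{cde}-\delta_{c}^{\alpha}g_{de}+\delta_{d}^{\alpha}g_{ce}-\delta_{e}^{\alpha}g_{cd}\right).
\]
Contracting this expression (with indices relabelled $(c,d,e)\mapsto(b,c,d)$) against $-\tfrac{1}{144}G_{abcd}$ already yields $G^{(3)}_{a}(1)$: every bare-$g$ term drops out because $G$ contracted on two of its indices with the metric vanishes, so only $G_{a}{}^{bcd}\varphi_{bcd}$ and $G_{a}{}^{bcd}\psi^{\alpha}{}_{bcd}$ survive, and these I would reduce to $Q_{0},Q_{1},Q_{2}$ using (\ref{Gdecomp}) and the usual contractions ($\psi_{aBCD}\psi_{e}{}^{BCD}\propto g_{ae}$, $\varphi_{BCD}\psi^{\alpha BCD}=0$, $\varphi_{aCD}\psi^{\alpha BCD}\propto\varphi$, with the $\mathbf{27}$-piece contributing only a $Q_{2}$-term since it is symmetric and trace-free). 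The $\mathbb{O}$-real part then collapses to $\tfrac{1}{6}Q_{1}$ and the $\mathbb{O}$-imaginary part assembles into $-\tfrac{1}{8}Q_{0}\delta+\tfrac{1}{8}Q_{1}\lrcorner\varphi+\tfrac{1}{12}Q_{2}$, which is (\ref{G31}).

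For $G^{(4)}(1)$ one further application of (\ref{deltaA}) is needed: writing the octonion displayed above as $(a_{0},\alpha)$ with $a_{0}=-\varphi_{cde}$ and $\alpha^{\beta}=\psi^{\beta}{}_{cde}-\delta_{c}^{\beta}g_{de}+\delta_{d}^{\beta}g_{ce}-\delta_{e}^{\beta}g_{cd}$, formula (\ref{deltaA}) gives $\delta_{b}(\delta_{c}(\delta_{d}(\delta_{e}1)))=(-\alpha_{b},\,a_{0}\delta_{b}-(\alpha\lrcorner\varphi)_{b})$. Contracting with $-\tfrac{1}{144}G^{bcde}$, the $\mathbb{O}$-real part reduces to $\tfrac{1}{144}G_{bcde}\psi^{bcde}$ (all the $g$-terms drop by antisymmetry of $G$), and since $G=Q_{0}\psi+\dots$ and $\psi_{bcde}\psi^{bcde}=168$ this is $\tfrac{7}{6}Q_{0}$; the $\mathbb{O}$-imaginary part involves only $G^{bcde}\varphi_{a}{}^{cde}$ and $G^{bcde}\psi_{a}{}^{cde}$, which by (\ref{Gdecomp}) and the $\varphi$--$\psi$ identities reduce to a multiple of $Q_{1}$ (the symmetric trace-free $Q_{2}$ again drops out), giving $\tfrac{2}{3}Q_{1}$. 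This is (\ref{G41}). As a cross-check one may verify the output against Lemma~\ref{LemG3G4}, e.g.\ $\delta_{a}G^{(4)}(1)=G^{(4)}(\delta_{a}1)-8G^{(3)}_{a}(1)$, or translate everything through $j_{\xi}$ into the familiar spinorial formulas for $G_{bcde}\gamma^{bcde}\xi$ and $G_{abcd}\gamma^{bcd}\xi$ acting on the defining spinor.

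The only genuine difficulty is the combinatorial and numerical bookkeeping in the last contractions: tracking which of the many $g$-, $\varphi$-, and $\psi$-valued terms survive the totally antisymmetric contraction against $G$, and pinning down the coefficients in the $\varphi$--$\psi$ contraction identities so that the scalar and vector parts collapse to exactly $\tfrac{7}{6}Q_{0}$ and $\tfrac{2}{3}Q_{1}$ (and the analogous constants for $G^{(3)}$). There is no conceptual obstacle; one must simply apply the sign and normalization conventions flagged in the Conventions paragraph --- in particular the opposite sign of $\psi=\ast\varphi$ relative to Karigiannis --- consistently throughout.
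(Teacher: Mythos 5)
Your proposal is correct and is exactly the ``straightforward calculation'' the paper alludes to but does not write out: iterate the $\delta$-products via parts (1)--(3) of Lemma~\ref{lemDelProps}, observe that the bare-$g$ terms annihilate the totally antisymmetric $G$, and reduce the surviving $\varphi$- and $\psi$-contractions of $G$ to $Q_0,Q_1,Q_2$ using the decomposition~(\ref{Gdecomp}) and the standard $\varphi$--$\psi$ contraction identities (with $\varphi_{abc}\varphi^{abc}=42$, $\psi_{abcd}\psi^{abcd}=168$, $\varphi_{abc}\psi^{abcd}=0$, $\varphi_{acd}\varphi^{bcd}=6\delta_a^b$, $\psi_{acde}\psi^{bcde}=24\delta_a^b$, and the trace-free symmetric $Q_2$ dropping out of any $\mathbf 1$- or $\mathbf 7$-valued contraction by Schur). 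Your consistency checks are also sound; e.g.\ the $\mathbb{O}$-real part of $\delta_a G^{(4)}(1)$ is $-\tfrac{2}{3}Q_{1a}$ by (\ref{deltaA}) applied to (\ref{G41}), which agrees with $-4\langle 1,G^{(3)}_a(1)\rangle=-4\cdot\tfrac16 Q_{1a}$ from Lemma~\ref{LemG3G4}.
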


Suppose $Z$ is not a zero divisor, that is, suppose we are working in a
neighborhood where $N\left( Z\right) \neq 0$. We will justify this
assumption later on. In that case, we can define a complexified $G_{2}$%
-structure $\varphi _{Z}=\sigma _{Z}\left( \varphi \right) $. Similarly to (%
\ref{Gdecomp}), with respect to $\varphi _{Z}$ we can decompose the $4$-form 
$G$ according to the representations of $G_{2}$ as%
\begin{equation}
G=Q_{0}^{\left( Z\right) }\psi _{Z}+Q_{1}^{\left( Z\right) }\wedge \varphi
_{Z}+\iota _{\psi _{Z}}\left( Q_{2}^{\left( Z\right) }\right)
\label{GtotalZ}
\end{equation}%
where the components $Q_{i}^{\left( Z\right) }$ will in general now be
complex-valued, even though $G$ is of course still $\mathbb{C}$-real.
Applying Lemma \ref{LemABZ}, The equation (\ref{susyeqoct2a}) can be
rewritten with respect to $\varphi _{Z}$ as 
\begin{equation}
DZ=-\frac{1}{2}\mu e^{-4\Delta }\delta Z+i\left( \delta \circ _{Z}G^{\left(
4\right) \left( Z\right) }\left( 1\right) \right) Z+12i\left( G^{\left(
3\right) \left( Z\right) }\left( 1\right) \right) Z,  \label{DZ2}
\end{equation}%
where $G^{\left( 4\right) \left( Z\right) }$ and $G^{\left( 3\right) \left(
Z\right) }$ are the operators $G^{\left( 4\right) }$ and $G^{\left( 3\right)
}$ with respect to the $G_{2}$-structure $\varphi _{Z}$. Now applying Lemma %
\ref{LemG4G3} to $G^{\left( 4\right) \left( Z\right) }\left( 1\right) $ and $%
G^{\left( 3\right) \left( Z\right) }\left( 1\right) $ we see that 
\begin{equation}
DZ=i\left( 
\begin{array}{c}
\frac{4}{3}Q_{1}^{\left( Z\right) } \\ 
\left( \frac{i}{2}\mu e^{-4\Delta }-\frac{5}{6}Q_{0}^{\left( Z\right)
}\right) \delta +\frac{1}{3}Q_{1}^{\left( Z\right) }\lrcorner \varphi
_{Z}-Q_{2}^{\left( Z\right) }%
\end{array}%
\right) Z\ .  \label{DZ3}
\end{equation}%
Taking the inner product of (\ref{DZ3}) with $Z,$ we obtain 
\begin{equation}
Q_{1}^{\left( Z\right) }=-\frac{3}{8}\frac{i}{N\left( Z\right) }d\left(
N\left( Z\right) \right)  \label{Q1Z}
\end{equation}%
and moreover, from (\ref{TorsZ}) we find that the torsion of $\varphi _{Z}$
is given by 
\begin{equation}
T^{\left( Z\right) }=\left( \frac{1}{2}\mu e^{-4\Delta }+\frac{5}{6}%
iQ_{0}^{\left( Z\right) }\right) g-\frac{i}{3}Q_{1}^{\left( Z\right)
}\lrcorner \varphi _{Z}+iQ_{2}^{\left( Z\right) }.  \label{TZ3}
\end{equation}%
Hence, the internal equation (\ref{susyeqoct2a}) gives us the following
information regarding the components of $T^{\left( Z\right) }$:%
\begin{subequations}%
\label{TZs}%
\begin{eqnarray}
\tau _{1}^{\left( Z\right) } &=&\frac{1}{2}\mu e^{-4\Delta }+\frac{5}{6}%
iQ_{0}^{\left( Z\right) }  \label{tau1Z} \\
\tau _{7}^{\left( Z\right) } &=&-\frac{i}{3}Q_{1}^{\left( Z\right) }=-\frac{1%
}{8}\frac{1}{N\left( Z\right) }d\left( N\left( Z\right) \right)
\label{tau7Z} \\
\tau _{14}^{\left( Z\right) } &=&0 \\
\tau _{27}^{\left( Z\right) } &=&iQ_{2}^{\left( Z\right) }.  \label{tau27Z}
\end{eqnarray}%
\end{subequations}
In particular, the $14$-dimensional component of the torsion vanishes and $%
\tau _{7}^{\left( Z\right) }$ is an exact $1$-form. Note that the $G_{2}$%
-structure $\varphi _{Z}$, and hence its torsion, depend on $Z$ only
projectively, that is they are invariant under scalings $Z\longrightarrow
\kappa Z$ for any nowhere-vanishing complex-valued functions $\kappa $. On
the other hand, the original equation (\ref{susyeqoct2}) is only invariant
under constant scalings. The expression (\ref{Q1Z}) then determines $N\left(
Z\right) $ up to a constant factor.

Using (\ref{Gdecomp2}) we can write the $4$-form $G$ in terms of the linear
operator $\iota _{\psi _{Z}}$ as%
\begin{equation*}
G=\iota _{\psi _{Z}}\left( Q_{0}^{\left( Z\right) }g-\frac{4}{3}%
Q_{1}^{\left( Z\right) }\lrcorner \varphi _{Z}+4Q_{2}^{\left( Z\right)
}\right)
\end{equation*}%
and therefore, using (\ref{TZ3}), 
\begin{equation}
G=-4i\iota _{\psi _{Z}}\left( T^{\left( Z\right) }\right) +\left( 2i\mu
e^{-4\Delta }-\frac{7}{3}Q_{0}^{\left( Z\right) }\right) \psi _{Z}\ ,
\end{equation}%
but from the definition (\ref{codiffphi}) of the torsion, 
\begin{equation}
d\varphi _{Z}=8\iota _{\psi _{Z}}\left( T^{\left( Z\right) }\right) .
\label{dphiZTpsi}
\end{equation}%
However, in terms of $\tau _{1}^{\left( Z\right) },$ 
\begin{equation*}
2i\mu e^{-4\Delta }-\frac{7}{3}Q_{0}^{\left( Z\right) }=\frac{14}{5}i\tau
_{1}^{\left( Z\right) }+\frac{3}{5}i\mu e^{-4\Delta }.
\end{equation*}%
Therefore, 
\begin{equation}
G=-\frac{1}{2}id\varphi _{Z}+\frac{i}{5}\left( 14\tau _{1}^{\left( Z\right)
}+3\mu e^{-4\Delta }\right) \psi _{Z}\ .  \label{GZexp}
\end{equation}%
We can summarize our findings so far.

\begin{theorem}
\label{ThmTorsZ}Let $\left( \varphi ,g\right) \ $be a real $G_{2}$-structure
on a $7$-manifold $\dot{M}.$ Then, a non-zero divisor section $Z\in \Gamma
\left( \mathbb{O}_{\mathbb{C}}M\right) $ satisfies the equation (\ref%
{susyeqoct2a}) for a $4$-form $G$, real constant $\mu $, and a warp
parameter $\Delta $, if the torsion of the complexified $G_{2}$-structure $%
\varphi _{Z}=\sigma _{Z}\left( \varphi \right) $ satisfies (\ref{TZs}) and $%
N\left( Z\right) $ satisfies (\ref{Q1Z}). Conversely, if $\varphi _{Z}$ is
in the torsion class $\mathbf{1}^{\mathbb{C}}\oplus \mathbf{7}^{\mathbb{C}%
}\oplus \mathbf{27}^{\mathbb{C}}$, with $\tau _{7}^{\left( Z\right) }=$ $-%
\frac{1}{8}\frac{1}{N\left( Z\right) }d\left( N\left( Z\right) \right) $,
and the $4$-form $G=-\frac{1}{2}id\varphi _{Z}+\frac{i}{5}\left( 14\tau
_{1}^{\left( Z\right) }+3\mu e^{-4\Delta }\right) \psi _{Z}$ is $\mathbb{C}$%
-real, then $Z$ satisfies the equation (\ref{susyeqoct2a}) with $4$-form $G$%
, constant $\mu $, and warp parameter $\Delta .$
\end{theorem}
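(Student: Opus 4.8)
The plan for the forward implication is simply to assemble, as a chain of implications, the computations already carried out in the paragraphs preceding the theorem. Starting from the internal equation (\ref{susyeqoct2a}), I apply Lemma \ref{LemABZ} repeatedly to the nested products $\delta^b(\delta^c(\delta^d(\delta^e Z)))$ and $\delta_a(\delta^b(\delta^c(\delta^d Z)))$ to rewrite it in the gauge $\varphi_Z$ as (\ref{DZ2}), where $G^{(4)(Z)}$ and $G^{(3)(Z)}$ are the operators (\ref{defG4})--(\ref{defG3}) formed with respect to $\varphi_Z$ and $Q_0^{(Z)},Q_1^{(Z)},Q_2^{(Z)}$ are the $G_2^{\mathbb{C}}$-decomposition components of $G$ relative to $\varphi_Z$ as in (\ref{GtotalZ}). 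Lemma \ref{LemG4G3} converts (\ref{DZ2}) into (\ref{DZ3}); pairing (\ref{DZ3}) with $Z$ and invoking metric-compatibility of $D$ (Proposition \ref{propDXprod}) yields (\ref{Q1Z}), which is the claim that $N(Z)$ satisfies (\ref{Q1Z}); and feeding (\ref{DZ3}) into the transformation formula (\ref{TorsZ}) produces (\ref{TZ3}), whose isotypic decomposition with respect to $\varphi_Z$ is exactly (\ref{TZs}). That is the whole forward direction.

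For the converse the plan is to reverse this chain, using each of the three hypotheses to legitimise a reversed step. So assume $\tau_{14}^{(Z)}=0$, $\tau_7^{(Z)}=-\tfrac18 N(Z)^{-1}dN(Z)$, and that the $4$-form $G:=-\tfrac12 i\,d\varphi_Z+\tfrac{i}{5}\big(14\tau_1^{(Z)}+3\mu e^{-4\Delta}\big)\psi_Z$ is $\mathbb{C}$-real. First I will decompose this $G$ with respect to $\varphi_Z$: using $d\varphi_Z=8\iota_{\psi_Z}(T^{(Z)})$ (equation (\ref{dphiZTpsi})), together with $T^{(Z)}=\tau_1^{(Z)}g+\tau_7^{(Z)}\lrcorner\varphi_Z+\tau_{27}^{(Z)}$ (here $\tau_{14}^{(Z)}=0$ enters) and $\iota_{\psi_Z}(g)=\psi_Z$, a short calculation gives
\[
G=\iota_{\psi_Z}\!\Big({-}\tfrac{6i}{5}\big(\tau_1^{(Z)}-\tfrac12\mu e^{-4\Delta}\big)g-4i\,\tau_7^{(Z)}\lrcorner\varphi_Z-4i\,\tau_{27}^{(Z)}\Big).
\]
Comparing with (\ref{Gdecomp2}) and using that $\iota_{\psi_Z}$ is injective on the $\mathbf{1}^{\mathbb{C}}\oplus\mathbf{7}^{\mathbb{C}}\oplus\mathbf{27}^{\mathbb{C}}$ part of the $2$-tensors, I read off $Q_0^{(Z)}=-\tfrac{6i}{5}(\tau_1^{(Z)}-\tfrac12\mu e^{-4\Delta})$, $Q_1^{(Z)}=3i\,\tau_7^{(Z)}$, $Q_2^{(Z)}=-i\,\tau_{27}^{(Z)}$; these are just the relations (\ref{TZs}) read backwards, and the hypothesis on $\tau_7^{(Z)}$ turns $Q_1^{(Z)}=3i\,\tau_7^{(Z)}$ into (\ref{Q1Z}).

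Then I run the chain backwards. From (\ref{TZ3}) --- equivalently the relations (\ref{TZs}), now verified --- and the transformation formula (\ref{TorsZ}) I solve for $DZ$, getting $DZ=\big({-}T^{(Z)}+\tfrac12 N(Z)^{-1}dN(Z)\,\hat 1\big)Z$; rewriting the right-hand side in the $\varphi_Z$-product form and using (\ref{Q1Z}) reproduces (\ref{DZ3}). With the $Q_i^{(Z)}$ identified as above, Lemma \ref{LemG4G3} in the gauge $\varphi_Z$ lets me rewrite the right-hand side of (\ref{DZ3}) as that of (\ref{DZ2}), and finally Lemma \ref{LemABZ} together with the nested identity $\delta^b(\delta^c(\delta^d(\delta^e Z)))=\big(\delta^b\circ_Z(\delta^c\circ_Z(\delta^d\circ_Z\delta^e))\big)Z$ carries (\ref{DZ2}) back into (\ref{susyeqoct2a}) in the reference gauge $\varphi$, with exactly this $G$. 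The $\mathbb{C}$-reality hypothesis enters precisely here: it is what makes the $4$-form produced by the formula an honest real flux, so that the operators $G^{(4)}$, $G^{(3)}$ appearing in (\ref{susyeqoct2a}) are the genuine ones built from the real tensor $G_{bcde}$ rather than a complex-valued artefact.

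I expect the main work, rather than any genuine obstacle, to be bookkeeping on three fronts: keeping the two incarnations of the torsion straight (the $2$-tensor $T^{(Z)}=\tau_1^{(Z)}g+\tau_7^{(Z)}\lrcorner\varphi_Z+\tau_{14}^{(Z)}+\tau_{27}^{(Z)}$ versus the $\func{Im}\mathbb{O}_{\mathbb{C}}M$-valued $1$-form used in (\ref{octocov}) and (\ref{TorsZ})), with (\ref{TZ3}) as the bridge; checking that $G^{(4)(Z)}$ and $G^{(3)(Z)}$ are indeed the operators of Lemma \ref{LemG4G3} with each $Q_i$ replaced by $Q_i^{(Z)}$; and invoking the $G_2^{\mathbb{C}}$-representation theory only insofar as $\iota_{\psi_Z}$ is invertible on the relevant isotypic pieces, so that the decomposition producing the $Q_i^{(Z)}$ is unambiguous. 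Since every reversed step is then visibly an equivalence, the converse follows.
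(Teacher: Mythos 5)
Your proposal is correct and follows the same route the paper takes: the paper's own proof of Theorem \ref{ThmTorsZ} is precisely the chain (\ref{susyeqoct2a}) $\Leftrightarrow$ (\ref{DZ2}) $\Leftrightarrow$ (\ref{DZ3}) $\Leftrightarrow$ [(\ref{TZ3}) together with (\ref{Q1Z})] laid out in the preceding paragraphs, and your forward direction reproduces it while your converse makes the reversal explicit, including correctly inverting (\ref{TZs}) to read off $Q_0^{(Z)}=-\frac{6i}{5}\bigl(\tau_1^{(Z)}-\frac{1}{2}\mu e^{-4\Delta}\bigr)$, $Q_1^{(Z)}=3i\tau_7^{(Z)}$, $Q_2^{(Z)}=-i\tau_{27}^{(Z)}$ from the decomposition of the constructed $G$. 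The one small wrinkle is that the theorem's first clause asserts the implication from (\ref{TZs}) and (\ref{Q1Z}) to (\ref{susyeqoct2a}) while your narrative runs the other way, but since Lemma \ref{LemABZ}, Lemma \ref{LemG4G3}, and formula (\ref{TorsZ}) are all pointwise identities every link is an equivalence, and your closing remark that ``every reversed step is visibly an equivalence'' covers it.
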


From (\ref{GZexp}), (\ref{tau7Z}), and (\ref{dpsi}), we see that $dG=0$ if
and only if 
\begin{equation}
d\left( \left( 14\tau _{1}^{\left( Z\right) }+3\mu e^{-4\Delta }\right)
N\left( Z\right) \right) =0.  \label{DG0cond}
\end{equation}

So far we have only used equation (\ref{susyeqoct2a}). Using (\ref%
{susyeqoct1a}) will give us additional information. First, let us take the
inner product of (\ref{susyeqoct1a}) with the $\mathbb{C}$-conjugate $%
Z^{\ast }=X-iY$. Then, we get 
\begin{equation}
\left\langle G^{\left( 4\right) }\left( Z\right) ,Z^{\ast }\right\rangle
=i\left\langle MZ,Z^{\ast }\right\rangle +\left\langle \lambda e^{-\Delta
}Z^{\ast },Z^{\ast }\right\rangle  \label{MZZst}
\end{equation}%
However since $G$ is a real $4$-form, the left hand side of (\ref{MZZst}) is 
$\mathbb{C}$-real, so the $\mathbb{C}$-imaginary part of the right hand side
must vanish. Now, $\left\langle MZ,Z^{\ast }\right\rangle =\left\langle
M,Z^{\ast }\bar{Z}\right\rangle $ and $M$ is $\mathbb{C}$-real. Moreover,
from (\ref{ZZstbar}) we see that that the $\mathbb{C}$-real part of $Z^{\ast
}\bar{Z}$ is $\left\vert X\right\vert ^{2}+\left\vert Y\right\vert ^{2}$,
which is $\mathbb{O}$-real, so we find 
\begin{equation}
\mu e^{-4\Delta }\left( \left\vert X\right\vert ^{2}+\left\vert Y\right\vert
^{2}\right) +\func{Im}_{\mathbb{C}}\left( \lambda e^{-\Delta }N\left(
Z\right) ^{\ast }\right) =0  \label{muXsqYsq}
\end{equation}%
In particular, we can rewrite (\ref{susyeqoct1a}) as%
\begin{equation}
G^{\left( 4\right) \left( Z\right) }\left( 1\right) =iM+\lambda e^{-\Delta
}Z^{\ast }Z^{-1}.  \label{susyeqoct1a2}
\end{equation}%
Now from (\ref{ZZstbar}) we have 
\begin{equation*}
Z^{\ast }\overline{Z}=\left( \left\vert X\right\vert ^{2}+\left\vert
Y\right\vert ^{2}\right) -i\left( Y\bar{X}-X\bar{Y}\right) =\left(
\left\vert X\right\vert ^{2}+\left\vert Y\right\vert ^{2}\right) -2i\func{Im}%
_{\mathbb{O}}\left( Y\bar{X}\right) .
\end{equation*}%
Therefore, together with (\ref{G41}), (\ref{susyeqoct1a2}) gives us 
\begin{equation}
\left( 
\begin{array}{c}
\frac{7}{6}Q_{0}^{\left( Z\right) } \\ 
\frac{2}{3}Q_{1}^{\left( Z\right) }%
\end{array}%
\right) =\left( 
\begin{array}{c}
i\mu e^{-4\Delta }+\frac{\lambda e^{-\Delta }}{N\left( Z\right) }\left(
\left\vert X\right\vert ^{2}+\left\vert Y\right\vert ^{2}\right) \\ 
\frac{1}{2}i\partial \Delta -\frac{2\lambda ie^{-\Delta }}{N\left( Z\right) }%
\func{Im}_{\mathbb{O}}\left( Y\bar{X}\right)%
\end{array}%
\right) .  \label{Q0Q1Z}
\end{equation}%
In particular, comparing with (\ref{Q1Z}), we obtain 
\begin{equation}
dN\left( Z\right) =-2\left( d\Delta \right) N\left( Z\right) +8\lambda
e^{-\Delta }\func{Im}_{\mathbb{O}}\left( Y\bar{X}\right)  \label{dNZ}
\end{equation}%
We obtained (\ref{dNZ}) by switching to the $Z$-frame, and hence implicitly
assuming that $N\left( Z\right) $ is non-zero. However, (\ref{dNZ}) can also
be obtained by directly considering $d\left\langle Z,Z\right\rangle $. From (%
\ref{dNZ}) we can also find that 
\begin{equation}
d\left\vert N\left( Z\right) \right\vert ^{2}=-4\left( d\Delta \right)
\left\vert N\left( Z\right) \right\vert ^{2}+16\func{Re}\left( \lambda
^{\ast }N\left( Z\right) \right) e^{-\Delta }\func{Im}_{\mathbb{O}}\left( Y%
\bar{X}\right)  \label{dNZsq}
\end{equation}

Note that if $\lambda \neq 0$, (\ref{dNZ}) can be written as 
\begin{equation}
d\left( e^{2\Delta }\lambda ^{\ast }N\left( Z\right) \right) =8\left\vert
\lambda \right\vert ^{2}e^{\Delta }\func{Im}_{\mathbb{O}}\left( Y\bar{X}%
\right)   \label{dNZ2a}
\end{equation}%
Since the right-hand side of (\ref{dNZ2a}) is $\mathbb{C}$-real, this then
shows that 
\begin{equation*}
d\left( \func{Im}_{\mathbb{C}}\left( e^{2\Delta }\lambda ^{\ast }N\left(
Z\right) \right) \right) =0
\end{equation*}%
Together with (\ref{muXsqYsq}) this then shows that if $\mu \neq 0,$ then $%
\left( \left\vert X\right\vert ^{2}+\left\vert Y\right\vert ^{2}\right)
e^{-\Delta }$ is constant. Without loss of generality we can then say that 
\begin{subequations}
\begin{eqnarray}
\left\vert X\right\vert ^{2}+\left\vert Y\right\vert ^{2} &=&e^{\Delta }
\label{normxsqysq} \\
\func{Im}_{\mathbb{C}}\left( e^{2\Delta }\lambda ^{\ast }N\left( Z\right)
\right)  &=&\mu .  \label{dNZ2b}
\end{eqnarray}%
\end{subequations}%
If $\mu =0,$ then (\ref{dNZ2b}) still holds, however (\ref{normxsqysq}) will
not follow from (\ref{muXsqYsq}), but we can still reach the same conclusion
by considering $\nabla \left\langle Z,Z^{\ast }\right\rangle .$ This is
shown in Lemma \ref{LemNablaZZst} in the Appendix. Comparing, for example,
with equation (2.16) in \cite{SparksFlux}, if there we set $\chi _{-}=\chi
_{+}$ to reduce to $N=1$ supersymmetry, then (\ref{normxsqysq}) is
equivalent to the equation $\bar{\chi}_{+}\chi =1$ and $N\left( Z\right) $
is equivalent to the quantity $S$.

Differentiating (\ref{dNZ2a}) one more time (and assuming $\lambda \neq 0$),
we obtain 
\begin{equation}
d\left( e^{\Delta }\func{Im}_{\mathbb{O}}\left( Y\bar{X}\right) \right) =0.
\label{ddyx0}
\end{equation}

If we write $N\left( Z\right) =\rho e^{i\theta }$ for some positive
real-valued function $\rho $ and a real-valued function $\theta $, then from
(\ref{dNZ}), (\ref{dNZ2b}), and (\ref{dNZsq}) we obtain 
\begin{subequations}%
\begin{eqnarray}
d\left( e^{2\Delta }\rho \right)  &=&8\left( \lambda _{1}\cos \theta
+\lambda _{2}\sin \theta \right) e^{\Delta }\func{Im}_{\mathbb{O}}\left( Y%
\bar{X}\right)   \label{drho} \\
d\theta  &=&-\frac{8\mu e^{-3\Delta }}{\rho ^{2}}\func{Im}_{\mathbb{O}%
}\left( Y\bar{X}\right) .  \label{dtheta}
\end{eqnarray}

\end{subequations}
In particular, this shows that $\mu =0$ implies a constant phase factor $%
\theta $ for $N\left( Z\right) $. By a change of basis of Killing spinors on 
$\func{AdS}_{4}$, we may assume in that case that $N\left( Z\right) =\rho $
and $\lambda =\lambda _{1}$.

\begin{remark}
\label{RemarkNZ0}For $\lambda \neq 0,$ (\ref{dNZ2b}) also shows that if $\mu
\neq 0$, then $N\left( Z\right) $ is nowhere-vanishing. If $\mu =0$, then $%
N\left( Z\right) $ may vanish at a point. However in (\ref{dNZ}) suppose $%
\lambda $ is nonzero and $N\left( Z\right) =0$ at a point. Then, at that
point $\func{Im}_{\mathbb{O}}\left( Y\bar{X}\right) \neq 0$ (otherwise $%
X=Y=0 $) and hence $dN\left( Z\right) \neq 0$. Therefore, $N\left( Z\right) $
cannot be identically zero in a neighborhood. So for nonzero $\lambda $, we
conclude that $N\left( Z\right) \neq 0$ almost everywhere. On the other
hand, if $\lambda =0,$ i.e. if the $4$-dimensional space is Minkowski then (%
\ref{dNZ2b}) forces $\mu =0$ and (\ref{dNZ}) shows that $N\left( Z\right)
=ke^{-2\Delta }$ for some $k=k_{1}+ik_{2}\in \mathbb{C},$ and is thus either
nowhere-vanishing or zero everywhere. The case where the $4$-dimensional
space is Minkowski and $N\left( Z\right) =0$ everywhere is precisely the
case that was considered in detail in \cite{KasteMinasianFlux}. Therefore,
in all cases except this one, we may assume that $N\left( Z\right) \neq 0$
at least locally. We will also assume that $\lambda \neq 0,$ unless
specified otherwise. Whenever $N\left( Z\right) \neq 0$, from (\ref{dtheta})
we have $e^{i\theta }$ constant in the case $\mu =0$. However, by continuity
of $dN\left( Z\right) ,$ we find that $e^{i\theta }$ has to be constant
everywhere. Hence, we can still apply a change of basis of spinors to assume
that $N\left( Z\right) =\rho $ (with $\rho \geq 0$) and $\lambda =\lambda
_{1}$.
\end{remark}

Now using (\ref{Q0Q1Z}) and (\ref{normxsqysq}) we can rewrite $\tau
_{1}^{\left( Z\right) }$ as 
\begin{equation}
\tau _{1}^{\left( Z\right) }=-\frac{3}{14}\mu e^{-4\Delta }+\frac{5}{7}\frac{%
\lambda i}{N\left( Z\right) }  \label{tau1Z2}
\end{equation}%
and the expression (\ref{GZexp}) for $G$ can now be rewritten as 
\begin{equation}
G=-\frac{1}{2}id\varphi _{Z}-\frac{2\lambda }{N\left( Z\right) }\psi _{Z}.
\label{Gexp}
\end{equation}%
This now allows to verify that $dG=0$, since (\ref{tau1Z2}) gives us 
\begin{equation}
\left( 14\tau _{1}^{\left( Z\right) }+3\mu e^{-4\Delta }\right) N\left(
Z\right) =10\lambda i,  \label{lambdaNZ}
\end{equation}%
and hence (\ref{DG0cond}) is satisfied.

Note that since $G$ is real, the $\mathbb{C}$-imaginary part of (\ref{Gexp})
must vanish. Hence we get:%
\begin{equation}
\func{Im}_{\mathbb{C}}\left( \frac{\lambda }{N\left( Z\right) }\psi
_{Z}\right) =-\frac{1}{2}d\left( \func{Re}\varphi _{Z}\right) .
\end{equation}
Moreover, recall that the cohomology class $\left[ G\right] $ of the $4$%
-form flux must satisfy the flux quantization condition \cite{Witten:1996md}:%
\begin{equation}
\frac{\left[ G\right] }{2\pi }-\frac{p_{1}\left( X\right) }{4}\in
H^{4}\left( X,\mathbb{Z}\right) .
\end{equation}%
This implies that  $\left[ \func{Re}_{\mathbb{C}}\left( \frac{\lambda }{\pi
N\left( Z\right) }\psi _{Z}\right) \right] +\frac{p_{1}\left( X\right) }{4}$
must be a integer cohomology class.

If $\mu =0$, we will not be able to define $\varphi _{Z}$ and $T^{\left(
Z\right) }$ whenever $N\left( Z\right) =0$. However in neighborhoods where $%
N\left( Z\right) \neq 0$ we can still do this and obtain $dG=0$. On the
other hand, if $\lambda \neq 0$, we know $N\left( Z\right) $ cannot vanish
on any open set, hence by continuity must have $dG=0$ everywhere. We can now
extend Theorem \ref{ThmTorsZ} with conditions to satisfy equation (\ref%
{susyeqoct1a}).

\begin{theorem}
\label{ThmTorsZ2}Let $\left( \varphi ,g\right) \ $be a real $G_{2}$%
-structure on a $7$-manifold $\dot{M}.$ Then, a non-zero divisor section $%
Z=X+iY\in \Gamma \left( \mathbb{O}_{\mathbb{C}}M\right) $ satisfies
equations (\ref{susyeqoct1a}) and (\ref{susyeqoct2a}) for a closed $4$-form $%
G$, real constant $\mu $, complex constant $\lambda $, and a warp parameter $%
\Delta $, if the torsion of the complexified $G_{2}$-structure $\varphi
_{Z}=\sigma _{Z}\left( \varphi \right) $ satisfies the conditions of Theorem %
\ref{ThmTorsZ} and moreover $Z$ satisfies%
\begin{subequations}%
\begin{eqnarray}
d\left( e^{2\Delta }N\left( Z\right) \right)  &=&8\lambda e^{\Delta }\func{Im%
}_{\mathbb{O}}\left( Y\bar{X}\right)   \label{dNZimYX} \\
\func{Im}_{\mathbb{C}}\left( e^{2\Delta }\lambda ^{\ast }N\left( Z\right)
\right)  &=&\mu   \label{muNZ} \\
\left\vert X\right\vert ^{2}+\left\vert Y\right\vert ^{2} &=&e^{\Delta }
\label{expdeltaXY} \\
d\left( e^{\Delta }\func{Im}_{\mathbb{O}}\left( Y\bar{X}\right) \right)  &=&0
\end{eqnarray}%
\end{subequations}%
for some real-valued function $f\left( \Delta \right) $. Conversely, if $%
\varphi _{Z}$ is in the torsion class $\mathbf{1}^{\mathbb{C}}\oplus \mathbf{%
7}^{\mathbb{C}}\oplus \mathbf{27}^{\mathbb{C}}$, with $\tau _{7}^{\left(
Z\right) }=$ $-\frac{1}{8}\frac{1}{N\left( Z\right) }d\left( N\left(
Z\right) \right) $, the $4$-form $G=-\frac{1}{2}id\varphi _{Z}+\frac{i}{5}%
\left( 14\tau _{1}^{\left( Z\right) }+3\mu e^{-4\Delta }\right) \psi _{Z}$
is $\mathbb{C}$-real, and $Z$ satisfies (\ref{dNZimYX}), then it satisfies
the equations (\ref{susyeqoct1a}) and (\ref{susyeqoct2a}) with the $4$-form $%
G$, constant $\mu $ defined by (\ref{muNZ}), constant $\lambda $ defined by (%
\ref{lambdaNZ}), and warp parameter $\Delta $ defined by (\ref{expdeltaXY}).
\end{theorem}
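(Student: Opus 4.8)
The plan is to establish the two implications separately, in each case peeling off Theorem~\ref{ThmTorsZ}, which already encodes the internal equation (\ref{susyeqoct2a}), and isolating precisely the content of the external equation (\ref{susyeqoct1a}).

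For the forward direction, suppose $Z$ solves (\ref{susyeqoct1a}) and (\ref{susyeqoct2a}). Theorem~\ref{ThmTorsZ} gives immediately that $\varphi _{Z}$ lies in the class $\mathbf{1}^{\mathbb{C}}\oplus \mathbf{7}^{\mathbb{C}}\oplus \mathbf{27}^{\mathbb{C}}$ with $\tau _{7}^{\left( Z\right) }=-\frac{1}{8}\frac{1}{N\left( Z\right) }d\left( N\left( Z\right) \right) $, so the task is to extract the four displayed identities from (\ref{susyeqoct1a}). I would first rewrite (\ref{susyeqoct1a}) in the $Z$-frame as (\ref{susyeqoct1a2}), expand $Z^{\ast }\overline{Z}$ using (\ref{ZZstbar}), and apply Lemma~\ref{LemG4G3} together with (\ref{G41}) to obtain (\ref{Q0Q1Z}); matching the $\mathbb{O}$-imaginary part of (\ref{Q0Q1Z}) against (\ref{Q1Z}) yields $dN\left( Z\right) =-2\left( d\Delta \right) N\left( Z\right) +8\lambda e^{-\Delta }\func{Im}_{\mathbb{O}}\left( Y\bar{X}\right) $, which is (\ref{dNZimYX}) after multiplying through by $e^{2\Delta }$. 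Pairing (\ref{susyeqoct1a}) instead with $Z^{\ast }$ and using that $G$ is a real $4$-form gives the $\mathbb{C}$-imaginary constraint (\ref{muXsqYsq}). Since the right-hand side of the regrouped identity (\ref{dNZ2a}) is $\mathbb{C}$-real, $\func{Im}_{\mathbb{C}}\left( e^{2\Delta }\lambda ^{\ast }N\left( Z\right) \right) $ is closed, hence locally constant; together with (\ref{muXsqYsq}) this forces $\left( \left\vert X\right\vert ^{2}+\left\vert Y\right\vert ^{2}\right) e^{-\Delta }$ to be constant, and normalizing the Killing spinors as in the discussion preceding the theorem delivers (\ref{expdeltaXY}) and (\ref{muNZ}). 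Differentiating (\ref{dNZ2a}) once more produces the remaining identity $d\left( e^{\Delta }\func{Im}_{\mathbb{O}}\left( Y\bar{X}\right) \right) =0$.

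For the converse, I would assume the stated $G_{2}^{\mathbb{C}}$-structure hypotheses together with (\ref{dNZimYX}). The converse half of Theorem~\ref{ThmTorsZ} already shows that $Z$ satisfies (\ref{susyeqoct2a}) with the prescribed $\mu $, so only (\ref{susyeqoct1a}) remains, and I would verify it through its $Z$-frame form (\ref{susyeqoct1a2}). By Lemma~\ref{LemG4G3} and (\ref{G41}) the left-hand side of (\ref{susyeqoct1a2}) equals $\left( \frac{7}{6}Q_{0}^{\left( Z\right) },\frac{2}{3}Q_{1}^{\left( Z\right) }\right) $, where $Q_{0}^{\left( Z\right) }$ is fixed by $\tau _{1}^{\left( Z\right) }$ through (\ref{tau1Z}) and $Q_{1}^{\left( Z\right) }$ by (\ref{Q1Z}). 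Clearing $N\left( Z\right) $, the $\mathbb{O}$-imaginary part of (\ref{susyeqoct1a2}) reduces to exactly (\ref{dNZimYX}), which holds by hypothesis, while the $\mathbb{O}$-real part reduces, using the normalization (\ref{expdeltaXY}), to the defining relation (\ref{lambdaNZ}) for $\lambda $ and, via the reality of $G$, to (\ref{muNZ}) for $\mu $. Hence (\ref{susyeqoct1a}) holds and $Z$ is a solution.

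The step I expect to demand the most care is the internal consistency of the constants in the converse: $\mu $, $\lambda $ and the warp factor $\Delta $ are pinned down by the mutually referencing relations (\ref{muNZ}), (\ref{lambdaNZ}) and (\ref{expdeltaXY}), so one must check that (\ref{lambdaNZ}) genuinely produces a point-independent $\lambda $ --- which uses $dG=0$, i.e. (\ref{DG0cond}) --- and that feeding this $\lambda $ back through (\ref{muNZ}) returns the same $\mu $ that entered (\ref{lambdaNZ}). Untangling this loop, and confirming that reality of the prescribed $4$-form $G$ is exactly equivalent to the $\mathbb{C}$-imaginary identities used in the forward direction, is the one genuinely delicate point; the case distinctions $\mu =0$ versus $\mu \neq 0$ and $\lambda =0$ versus $\lambda \neq 0$ are already dealt with in Remark~\ref{RemarkNZ0}, and everything else is a transcription of computations already carried out in this section.
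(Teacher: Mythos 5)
Your proposal follows essentially the same route as the paper: Theorem \ref{ThmTorsZ} disposes of the internal equation (\ref{susyeqoct2a}), and the content of the external equation (\ref{susyeqoct1a}) is extracted exactly as in the text --- via (\ref{susyeqoct1a2}), (\ref{ZZstbar}), Lemma \ref{LemG4G3} and (\ref{G41}), the comparison of (\ref{Q0Q1Z}) with (\ref{Q1Z}) giving (\ref{dNZ}), i.e. (\ref{dNZimYX}), the reality constraint (\ref{muXsqYsq}), and one further differentiation of (\ref{dNZ2a}) for the last identity; your converse check of the $\mathbb{O}$-real and $\mathbb{O}$-imaginary parts of (\ref{susyeqoct1a2}), with $\lambda $ taken from (\ref{lambdaNZ}) and the normalization (\ref{expdeltaXY}), is likewise the computation the paper performs, and it does close up as you expect.

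The one genuine gap is your justification of (\ref{expdeltaXY}) when $\mu =0$. You derive constancy of $\left( \left\vert X\right\vert ^{2}+\left\vert Y\right\vert ^{2}\right) e^{-\Delta }$ from (\ref{muXsqYsq}) combined with the closedness of $\func{Im}_{\mathbb{C}}\left( e^{2\Delta }\lambda ^{\ast }N\left( Z\right) \right) $, but (\ref{muXsqYsq}) reads $\mu e^{-4\Delta }\left( \left\vert X\right\vert ^{2}+\left\vert Y\right\vert ^{2}\right) +\func{Im}_{\mathbb{C}}\left( \lambda e^{-\Delta }N\left( Z\right) ^{\ast }\right) =0$, so for $\mu =0$ it says nothing about $\left\vert X\right\vert ^{2}+\left\vert Y\right\vert ^{2}$ and this step collapses. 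Deferring to Remark \ref{RemarkNZ0} does not repair it: that remark concerns the vanishing locus and the phase of $N\left( Z\right) $, not the norm normalization. The paper closes the $\mu =0$ case by a separate computation of $\nabla \left\langle Z,Z^{\ast }\right\rangle $ using both Killing equations (Lemma \ref{LemNablaZZst} in the Appendix), which gives $\left\vert X\right\vert ^{2}+\left\vert Y\right\vert ^{2}=k_{1}e^{\Delta }$ in all cases; you need to invoke or reproduce that lemma for the forward direction to cover $\mu =0$. Apart from this, and the looseness (shared with the paper, cf.\ Remark \ref{RemarkNZ0}) about the degenerate case $\lambda =0$ when deducing $d\left( e^{\Delta }\func{Im}_{\mathbb{O}}\left( Y\bar{X}\right) \right) =0$ from (\ref{dNZ2a}), your argument is sound and matches the paper's own derivation.
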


\subsection{Decomposition into real and imaginary parts}

To more concretely understand what are the properties of the corresponding
real $G_{2}$-structures, we need to decompose everything into $\mathbb{C}$%
-real and $\mathbb{C}$-imaginary parts. However for convenience, without
loss of generality, we may change the reference $G_{2}$-structure to $%
\varphi _{X}:=\sigma _{X}\left( \varphi \right) ,$ setting 
\begin{equation}
A=ZX^{-1}=1+iW  \label{AZX}
\end{equation}%
where we defined $W=w_{0}+\hat{w}:=YX^{-1}$. Then, using (\ref{DtildeAV3}), 
\begin{equation*}
DZ=D\left( AX\right) =\left( D^{\left( X\right) }A\right) X+\left( d\ln
\left\vert X\right\vert \right) AX
\end{equation*}%
and 
\begin{equation*}
i\delta G^{\left( 4\right) }\left( Z\right) +12iG^{\left( 3\right) }\left(
Z\right) =\left( i\delta \circ _{X}G^{\left( 4\right) \left( X\right)
}\left( A\right) +12iG^{\left( 3\right) \left( X\right) }\left( A\right)
\right) X
\end{equation*}%
where $D^{\left( X\right) },$ $\circ _{X},$ $G^{\left( 3\right) \left(
X\right) },$ and $G^{\left( 4\right) \left( X\right) }$ denote quantities
with respect to $\varphi _{X}.$ From now on, we'll drop the $X$ subscripts
and superscripts, since we will take $\varphi _{X}$ to be the standard $%
G_{2} $-structure. Overall, the equations (\ref{susyeqoct1a2a}) are
equivalent to%
\begin{subequations}%
\label{susyeqocts} 
\begin{eqnarray}
G^{\left( 4\right) }\left( A\right) &=&iMA+\lambda e^{-\Delta }A^{\ast }
\label{susyeqoct1b} \\
DA &=&-\frac{1}{2}\mu e^{-4\Delta }\delta A+i\delta G^{\left( 4\right)
}\left( A\right) +12iG^{\left( 3\right) }\left( A\right) -\left( d\ln
\left\vert X\right\vert \right) A.  \label{susyeqoct2c}
\end{eqnarray}%
\end{subequations}%
Note that the choice of the $G_{2}$-structure as $\varphi _{X}$ is somewhat
arbitrary - we could have alternatively chosen to work with $\varphi _{Y}$
and then defined $B=iZY^{-1}=\left( 1-iW^{-1}\right) $. The corresponding
equations for $B$ would then be 
\begin{subequations}%

\begin{eqnarray}
G^{\left( 4\right) }\left( B\right) &=&iMB-\lambda e^{-\Delta }B^{\ast } \\
D_{a}B &=&-\frac{1}{2}\mu e^{-4\Delta }\delta _{a}B+i\delta _{a}G^{\left(
4\right) }\left( B\right) +12iG_{a}^{\left( 3\right) }\left( B\right)
-\left( d\ln \left\vert Y\right\vert \right) B.
\end{eqnarray}%
\end{subequations}%
Therefore, to obtain corresponding results for $Y,$ we just need to perform
the transformation 
\begin{equation}
\left\{ X\longrightarrow Y,W\longrightarrow -W^{-1},\lambda \longrightarrow
-\lambda \right\} .
\end{equation}

From (\ref{nablazzst}) and (\ref{dNZ}), we find that 
\begin{subequations}
\begin{eqnarray}
d\left\vert X\right\vert ^{2} &=&\frac{1}{2}\left( 3\left\vert Y\right\vert
^{2}-\left\vert X\right\vert ^{2}\right) d\Delta +4\lambda _{1}e^{-\Delta }%
\func{Im}_{\mathbb{O}}\left( Y\bar{X}\right) \\
d\left\vert Y\right\vert ^{2} &=&\frac{1}{2}\left( 3\left\vert X\right\vert
^{2}-\left\vert Y\right\vert ^{2}\right) d\Delta -4\lambda _{1}e^{-\Delta }%
\func{Im}_{\mathbb{O}}\left( Y\bar{X}\right) .
\end{eqnarray}%
\end{subequations}%
However, $\func{Im}_{\mathbb{O}}\left( Y\bar{X}\right) =\left\vert
X\right\vert ^{2}w$, thus, 
\begin{subequations}
\begin{eqnarray}
d\ln \left\vert X\right\vert &=&\frac{1}{4}\left( 3\left\vert W\right\vert
^{2}-1\right) d\Delta +2\lambda _{1}e^{-\Delta }w  \label{dlnX} \\
d\ln \left\vert Y\right\vert &=&\left\vert W\right\vert ^{-2}\left( \frac{1}{%
4}\left( 3-\left\vert W\right\vert ^{2}\right) d\Delta -2\lambda
_{1}e^{-\Delta }w\right)  \label{dlnY}
\end{eqnarray}%
\end{subequations}%
and hence, 
\begin{equation}
d\left\vert W\right\vert ^{2}=\frac{3}{2}\left( 1-\left\vert W\right\vert
^{4}\right) d\Delta -4\lambda _{1}e^{-\Delta }\left( 1+\left\vert
W\right\vert ^{2}\right) w.  \label{dWsq}
\end{equation}%
From (\ref{dWsq}), we see that necessary conditions for $\left\vert
W\right\vert ^{2}\equiv 1$ are $\lambda _{1}=0$ or $w=0$.

Now, (\ref{dWsq}) can be used to rewrite (\ref{dlnX}) and (\ref{dlnY}) as%
\begin{subequations}%

\begin{eqnarray}
2d\ln \left\vert X\right\vert &=&d\Delta -d\left( \ln \left( 1+\left\vert
W\right\vert ^{2}\right) \right) \\
2d\ln \left\vert Y\right\vert &=&d\Delta +d\ln \frac{\left\vert W\right\vert
^{2}}{1+\left\vert W\right\vert ^{2}}.
\end{eqnarray}%
\end{subequations}%
Together with (\ref{normxsqysq}) and $\left\vert W\right\vert ^{2}=\frac{%
\left\vert Y\right\vert ^{2}}{\left\vert X\right\vert ^{2}}$ these equations
imply 
\begin{subequations}%
\begin{eqnarray}
\left\vert X\right\vert ^{2} &=&\frac{e^{\Delta }}{1+\left\vert W\right\vert
^{2}}  \label{normXsq} \\
\left\vert Y\right\vert ^{2} &=&\frac{e^{\Delta }\left\vert W\right\vert ^{2}%
}{1+\left\vert W\right\vert ^{2}}.
\end{eqnarray}%
\end{subequations}%

Using the $\mathbb{C}$-imaginary part of (\ref{dNZ}) and the fact that $%
w_{0}=\frac{\left\langle X,Y\right\rangle }{\left\vert X\right\vert ^{2}},$
we obtain an expression for $dw_{0}$: 
\begin{equation}
dw_{0}=-\frac{3}{2}\left( 1+\left\vert W\right\vert ^{2}\right) w_{0}d\Delta
-4\left( w_{0}\lambda _{1}-\lambda _{2}\right) e^{-\Delta }w.  \label{dw0}
\end{equation}%
This shows that a necessary condition for $w_{0}=0$ is $\lambda _{2}=0$.
Note that if $w=0$, then $\left\vert W\right\vert ^{2}=w_{0}^{2},$ so we
cannot have $w_{0}=0$. In fact, it is easy to see from (\ref{dWsq}) and (\ref%
{dw0}) that $w=0$ is only consistent when $d\Delta =0,$ and hence $w_{0}$ is
also constant. This case is known to reduce to the standard Freund-Rubin
solution on $S^{7}$ \cite{Behrndt:2003zg,CveticSUSYflux}, so we will not
consider it. Hence, we will assume that $w$ does not vanish identically. 

If $w_{0}\neq 0$ and $\left\vert W\right\vert ^{2}\neq 1,$ we can use (\ref%
{dWsq}) to rewrite (\ref{dw0}) as 
\begin{equation}
d\left( \ln \frac{w_{0}}{1-\left\vert W\right\vert ^{2}}\right) =4\left( 
\frac{\lambda _{2}}{w_{0}}-\frac{2\lambda _{1}}{1-\left\vert W\right\vert
^{2}}\right) e^{-\Delta }w.  \label{dlnw01}
\end{equation}%
However, from (\ref{dNZ2b}) and (\ref{normXsq}), 
\begin{equation}
2w_{0}\lambda _{1}-\lambda _{2}\left( 1-\left\vert W\right\vert ^{2}\right)
=\mu e^{-3\Delta }\left( 1+\left\vert W\right\vert ^{2}\right) .
\label{l1l2mu}
\end{equation}%
Hence, if $\lambda _{2}$ and $\mu $ are non-zero, we see that 
\begin{equation}
\left\vert W\right\vert ^{2}=\frac{\lambda _{2}+\mu e^{-3\Delta }-2\lambda
_{1}w_{0}}{\lambda _{2}-\mu e^{-3\Delta }}.  \label{W2exp}
\end{equation}

\begin{remark}
From (\ref{l1l2mu}), we see that if $w_{0}\equiv 0$, and hence $\lambda
_{2}=0$ (from (\ref{dw0})), then we must also have $\mu =0$. As we noted in
Remark \ref{RemarkNZ0}, we can assume the converse is also true, that is if $%
\mu =0$, then $\lambda _{2}=0$, and $N\left( Z\right) $ is real, which means 
$w_{0}=0$. Similarly, if $\left\vert W\right\vert ^{2}\equiv 1,$ and hence $%
\lambda _{1}=0$, then we must also have $\mu =0$. It should be emphasized
that (\ref{l1l2mu}) is a very important relation between $w_{0}$ and $%
\left\vert W\right\vert ^{2}$ that we will use over and over again.
\end{remark}

If $\left\vert W\right\vert ^{2}\neq 1$ and $w_{0}\neq 0$, then dividing (%
\ref{l1l2mu})\ by $w_{0}$ and $1-\left\vert W\right\vert ^{2}$, and
differentiating gives us 
\begin{equation}
d\left( \ln \frac{w_{0}}{1-\left\vert W\right\vert ^{2}}\right) =-\frac{4\mu
e^{-3\Delta }}{w_{0}}\frac{1+\left\vert W\right\vert ^{2}}{1-\left\vert
W\right\vert ^{2}}w.  \label{dlnw0}
\end{equation}%
By applying $d$ to (\ref{dlnw0}) and using (\ref{dWsq}), we find that in the
case when $w_{0}\neq 0$ and $\left\vert W\right\vert ^{2}\neq 1,$ 
\begin{equation}
d\left( \frac{4\mu e^{-\Delta }}{w_{0}}w\right) =0.  \label{dmuw}
\end{equation}

In the special case where $\lambda _{1}=0,$ then (\ref{dWsq}) immediately
gives 
\begin{equation}
\left\vert W\right\vert ^{2}=\frac{1-k_{1}e^{-3\Delta }}{1+k_{1}e^{-3\Delta }%
}  \label{normWsq}
\end{equation}%
for some constant $k_{1}$. Note that for non-zero $\lambda _{2}$ and $\mu $,
this is equivalent to (\ref{W2exp}) with $k_{1}=-\frac{\mu }{\lambda _{2}}$.

If on the other hand, $\lambda _{2}=0$, then from (\ref{dlnw01}) and (\ref%
{dWsq}), we find that for some constant $k_{2}$, 
\begin{equation}
w_{0}=k_{2}\left( 1+\left\vert W\right\vert ^{2}\right) e^{-3\Delta }
\label{normw0}
\end{equation}%
and then if $\lambda _{1}$ and $\mu $ are non-zero, then this is equivalent
to (\ref{W2exp}) with $k_{2}=\frac{\mu }{2\lambda _{1}}$.\ 

In the Minkowski case, when $\lambda _{1}=\lambda _{2}=0,$ combining (\ref%
{normWsq}) and (\ref{normw0}) gives us 
\begin{equation}
w_{0}=\frac{2k_{1}k_{2}e^{-3\Delta }}{1+k_{1}e^{-3\Delta }}.
\end{equation}

\subsection{External equation}

\setcounter{equation}{0}\label{secExteq}Recall that we have 
\begin{equation}
A=1+iW.
\end{equation}%
So that taking the $\mathbb{C}$-real and $\mathbb{C}$-imaginary parts of the
external equation (\ref{susyeqoct1b}), we obtain two equations 
\begin{subequations}%
\label{exteqs2} 
\begin{eqnarray}
G^{\left( 4\right) }\left( 1\right) &=&-MW+e^{-\Delta }\left( \lambda
_{1}+\lambda _{2}W\right) \\
G^{\left( 4\right) }\left( W\right) &=&M+e^{-\Delta }\left( \lambda
_{2}-\lambda _{1}W\right)
\end{eqnarray}%
\end{subequations}%
where $G^{\left( 4\right) }\left( 1\right) $ and $G^{\left( 4\right) }\left(
W\right) $ are now defined with respect to the $G_{2}$-structure $\varphi
_{X}.$ Let us now decompose the $4$-form $G$ with respect to $\varphi _{X}$
as:%
\begin{equation}
G=Q_{0}\psi _{X}+Q_{1}\wedge \varphi _{X}+\iota _{\psi _{X}}\left(
Q_{2}\right) .  \label{GtotalX}
\end{equation}

Taking $A=W=w_{0}+\hat{w}$ and then $A=1,$ the equations (\ref{exteqs2})
become 
\begin{subequations}%
\label{exteqs3}

\begin{eqnarray}
\left( 
\begin{array}{c}
\frac{7}{3}Q_{0} \\ 
\frac{4}{3}Q_{1}%
\end{array}%
\right) &=&\left( 
\begin{array}{c}
2e^{-\Delta }\left( \lambda _{2}-\mu e^{-3\Delta }\right) \\ 
-\partial \Delta%
\end{array}%
\right) W+2\lambda _{1}e^{-\Delta }  \label{exteqs3a} \\
&=&\left( 
\begin{array}{c}
2e^{-\Delta }\left( \lambda _{2}w_{0}+\lambda _{1}-\mu w_{0}e^{-3\Delta
}\right) +\left\langle \partial \Delta ,w\right\rangle \\ 
2e^{-\Delta }\left( \lambda _{2}-\mu e^{-3\Delta }\right) w-w_{0}\partial
\Delta -\partial \Delta \times w%
\end{array}%
\right)  \notag \\
\left( 
\begin{array}{c}
\frac{7}{3}w_{0}Q_{0}+\frac{4}{3}\left\langle Q_{1},w\right\rangle \\ 
\frac{4}{3}w_{0}Q_{1}-\frac{1}{3}Q_{0}w-\frac{4}{3}Q_{2}\left( w\right)%
\end{array}%
\right) &=&\left( 
\begin{array}{c}
2e^{-\Delta }\left( \lambda _{2}+\mu e^{-3\Delta }\right) \\ 
\partial \Delta%
\end{array}%
\right) -2\lambda _{1}e^{-\Delta }W \\
&=&\left( 
\begin{array}{c}
2e^{-\Delta }\left( \lambda _{2}-\lambda _{1}w_{0}+\mu e^{-3\Delta }\right)
\\ 
\partial \Delta -2\lambda _{1}e^{-\Delta }w%
\end{array}%
\right) .  \notag
\end{eqnarray}

\end{subequations}%
From this, we find the following relationships%
\begin{subequations}%
\label{Q0Q1rel2} 
\begin{eqnarray}
Q_{0} &=&\frac{3}{7}\left\langle \partial \Delta ,w\right\rangle +\frac{6}{7}%
e^{-\Delta }\left( \lambda _{2}w_{0}+\lambda _{1}-\mu w_{0}e^{-3\Delta
}\right)  \label{Q0rel} \\
\left\langle Q_{1},w\right\rangle &=&-\frac{7}{4}w_{0}Q_{0}+\frac{3}{2}%
e^{-\Delta }\left( \lambda _{2}+\mu e^{-3\Delta }-\lambda _{1}w_{0}\right)
\label{Q1wrel} \\
Q_{1} &=&-\frac{3}{4}w_{0}\partial \Delta -\frac{3}{4}\partial \Delta \times
w+\frac{3}{2}e^{-\Delta }\left( \lambda _{2}-\mu e^{-3\Delta }\right) w
\label{Q1rel} \\
Q_{2}\left( w\right) &=&w_{0}Q_{1}-\frac{1}{4}\left( Q_{0}-6\lambda
_{1}e^{-\Delta }\right) w-\frac{3}{4}\partial \Delta .  \label{Q2wrel}
\end{eqnarray}%
\end{subequations}%
Note that the expressions (\ref{Q1rel}) and (\ref{Q1wrel}) are in fact
compatible thanks to the relation (\ref{l1l2mu}).

In the Minkowski case, together with $w_{0}=0,$ these simplify to the same
relations as equation (3.12) in \cite{KasteMinasianFlux} once the slightly
different definitions of $Q_{0},Q_{1},Q_{2}$ are taken into account. In the $%
\func{AdS}_{4}$ case, if $\mu =\lambda _{2}=w_{0}=0,$ then these expressions
simplify: 
\begin{subequations}%
\label{Q0Q1rel2 copy(1)} 
\begin{eqnarray}
Q_{0} &=&\frac{3}{7}\left\langle \partial \Delta ,w\right\rangle +\frac{6}{7}%
\lambda _{1}e^{-\Delta } \\
\left\langle Q_{1},w\right\rangle  &=&0 \\
Q_{1} &=&-\frac{3}{4}\partial \Delta \times w  \label{Q1mu0} \\
Q_{2}\left( w\right)  &=&-\frac{1}{4}\left( Q_{0}-6\lambda _{1}e^{-\Delta
}\right) w-\frac{3}{4}\partial \Delta .
\end{eqnarray}%
\end{subequations}%

\section{Torsion}

\setcounter{equation}{0}\label{secInteq}Now consider the internal equation (%
\ref{susyeqoct2c}). Expanding $DA\ $in (\ref{susyeqoct2c}), we can
equivalently rewrite it as 
\begin{equation}
AT^{\left( X\right) }=\nabla A+\frac{1}{2}\mu e^{-4\Delta }\delta A-i\delta
G^{\left( 4\right) }\left( A\right) -12iG^{\left( 3\right) }\left( A\right)
+\left( d\ln \left\vert X\right\vert \right) A.  \label{susyoctnonzero4}
\end{equation}%
So that now taking the $\mathbb{C}$-real and $\mathbb{C}$-imaginary parts of
(\ref{susyoctnonzero4}) gives us two $\mathbb{C}$-real equations 
\begin{subequations}%
\label{MainEq1}%
\begin{eqnarray}
T^{\left( X\right) } &=&\frac{1}{2}\mu e^{-4\Delta }\delta +\delta G^{\left(
4\right) }\left( W\right) +12G^{\left( 3\right) }\left( W\right) +\left(
d\ln \left\vert X\right\vert \right) \hat{1}  \label{MainEq1a} \\
WT^{\left( X\right) } &=&\frac{1}{2}\mu e^{-4\Delta }\delta W+\nabla
W-\delta G^{\left( 4\right) }\left( 1\right) -12G^{\left( 3\right) }\left(
1\right) +\left( d\ln \left\vert X\right\vert \right) W.  \label{MainEq1b}
\end{eqnarray}%
\end{subequations}%
As shown Lemma \ref{LemG4G3FA} in the Appendix, for an octonion $A=a+\hat{%
\alpha},$ we have 
\begin{equation}
\delta \left( G^{\left( 4\right) }\left( A\right) \right) +12G^{\left(
3\right) }\left( A\right) =F\left( A\right) -\left( \hat{Q}_{2}\bar{A}%
\right) ^{t}  \label{QA1}
\end{equation}%
where 
\begin{equation}
F\left( A\right) =\left( 
\begin{array}{c}
\frac{4}{3}a_{0}Q_{1}-\frac{11}{6}\alpha Q_{0}+\frac{2}{3}Q_{2}\left( \alpha
\right) -Q_{1}\times \alpha  \\ 
-\left( \frac{1}{3}\left\langle Q_{1},\alpha \right\rangle +\frac{5}{6}%
a_{0}Q_{0}\right) \delta +\left( \frac{1}{6}Q_{0}\alpha +\frac{1}{3}%
a_{0}Q_{1}-\frac{1}{3}Q_{2}\left( \alpha \right) \right) \lrcorner \varphi
_{X}-\alpha Q_{1}%
\end{array}%
\right)   \label{FAexp}
\end{equation}%
and in particular, 
\begin{equation}
F\left( 1\right) =\left( 
\begin{array}{c}
\frac{4}{3}Q_{1} \\ 
-\frac{5}{6}Q_{0}\delta +\frac{1}{3}Q_{1}\lrcorner \varphi _{X}%
\end{array}%
\right) .  \label{F1exp}
\end{equation}%
Note that in (\ref{FAexp}), and in similar expressions hereafter, we are
implicitly using tensor products - so, $\alpha Q_{1}$ means $\alpha \otimes
Q_{1}\in \Omega ^{1}\left( M\right) \otimes \Gamma \left( \func{Im}\mathbb{O}%
M\right) $ and the order therefore matters. We also suppress indices for
brevity.

We can then use the relationships (\ref{Q0Q1rel2}) that we obtained from the
external equation, to simplify the expression for $F\left( W\right) $. In
particular, we also obtain 
\begin{eqnarray}
Q_{1}\times w &=&\frac{3}{4}w_{0}\partial \Delta \times w+\frac{3}{4}\left(
\partial \Delta \times w\right) \times w  \notag \\
&=&\frac{3}{4}w_{0}\partial \Delta \times w+\frac{3}{4}\left\langle \partial
\Delta ,w\right\rangle w-\frac{3}{4}\left\vert w\right\vert ^{2}\partial
\Delta  \notag \\
&=&w_{0}Q_{1}-\frac{1}{4}\left( 7Q_{0}-6\lambda _{1}e^{-\Delta }\right) w+%
\frac{3}{4}\left\vert W\right\vert ^{2}\partial \Delta  \label{Q1crossw}
\end{eqnarray}%
and hence, 
\begin{eqnarray}
F\left( W\right) &=&-\frac{1}{4}\left( w_{0}Q_{0}e^{\Delta }+2\left\vert
W\right\vert ^{2}\lambda _{2}-2\left\vert W\right\vert ^{2}\mu e^{-3\Delta
}+2w_{0}\lambda _{1}\right) e^{-\Delta }\delta  \notag \\
&&+\left( 
\begin{array}{c}
-\frac{1}{4}\left( 3\left\vert W\right\vert ^{2}+2\right) d\Delta
+w_{0}Q_{1}-\frac{1}{4}\left( Q_{0}+2\lambda _{1}e^{-\Delta }\right) w \\ 
\frac{1}{4}\left( \partial \Delta +Q_{0}w-2\lambda _{1}e^{-\Delta }w\right)
\lrcorner \varphi _{X}-wQ_{1}%
\end{array}%
\right)  \notag \\
&=&\left( \left( 
\begin{array}{c}
Q_{1} \\ 
-\frac{1}{4}\left( Q_{0}-2\lambda _{1}e^{-\Delta }\right) \delta%
\end{array}%
\right) \bar{W}\right) ^{t}+\left( 
\begin{array}{c}
-\frac{1}{4}\left( 3\left\vert W\right\vert ^{2}+2\right) d\Delta -\lambda
_{1}e^{-\Delta }w \\ 
\frac{1}{4}\partial \Delta \lrcorner \varphi _{X}-\frac{1}{2}\left( \lambda
_{2}+\mu e^{-3\Delta }\right) e^{-\Delta }\delta%
\end{array}%
\right)  \label{FWexp}
\end{eqnarray}%
where we have also used the expression for $Q_{2}\left( w\right) $ from (\ref%
{Q0Q1rel2}) the relationship (\ref{l1l2mu}) between $\lambda _{1},\lambda
_{2}$ and $\mu e^{-4\Delta }$. It is important to note that $F\left(
1\right) $ and $F\left( W\right) $ do not contain any dependence on $Q_{2}$.
Hence, overall, also using (\ref{FWexp}) and (\ref{dlnX}), the expression (%
\ref{MainEq1a}) for the torsion $T^{\left( X\right) }$ can be rewritten
succinctly as 
\begin{subequations}
\begin{eqnarray}
T^{\left( X\right) } &=&\frac{1}{2}\mu e^{-4\Delta }\delta +F\left( W\right)
-\left( \hat{Q}_{2}\bar{W}\right) ^{t}+\left( d\ln \left\vert X\right\vert
\right) \hat{1}  \notag \\
&=&-\left( \hat{Q}\bar{W}\right) ^{t}+\left( 
\begin{array}{c}
-\frac{3}{4}\partial \Delta \\ 
\frac{1}{4}\partial \Delta \lrcorner \varphi _{X}%
\end{array}%
\right) +\frac{1}{2}e^{-\Delta }\left( 
\begin{array}{c}
\lambda _{1}w \\ 
-\lambda _{2}\delta%
\end{array}%
\right)  \notag \\
&=&-\left( \hat{Q}\bar{W}\right) ^{t}-\frac{1}{4}\delta \left( \widehat{%
\partial \Delta }\right) +\left( 
\begin{array}{c}
\frac{1}{2}\lambda _{1}e^{-\Delta }w-\partial \Delta \\ 
-\frac{1}{2}\lambda _{2}e^{-\Delta }\delta%
\end{array}%
\right)  \label{TXexp}
\end{eqnarray}%
\end{subequations}%
where we set 
\begin{equation}
\hat{Q}=\left( 
\begin{array}{c}
-Q_{1} \\ 
\frac{1}{4}\left( Q_{0}-2\lambda _{1}e^{-\Delta }\right) \delta +Q_{2}%
\end{array}%
\right) .  \label{Qhat1}
\end{equation}%
From (\ref{TXexp}), we get 
\begin{equation}
\left( T^{\left( X\right) }\right) ^{t}=-\hat{Q}\bar{W}+\frac{1}{4}\delta
\left( \widehat{\partial \Delta }\right) +\left( 
\begin{array}{c}
\frac{1}{2}\lambda _{1}e^{-\Delta }w-\frac{1}{2}\partial \Delta \\ 
-\frac{1}{2}\lambda _{2}e^{-\Delta }\delta%
\end{array}%
\right) .  \label{TXtr}
\end{equation}

The equation (\ref{MainEq1b}) can be rewritten as 
\begin{subequations}
\begin{eqnarray}
WT^{\left( X\right) } &=&\frac{1}{2}\mu e^{-4\Delta }\delta W+\nabla
W-F\left( 1\right) +\hat{Q}_{2}+\left( d\ln \left\vert X\right\vert \right) W
\label{MainEq2b} \\
&=&\nabla W+\left( \frac{1}{2}\mu e^{-4\Delta }\delta +d\ln \left\vert
X\right\vert \right) W+\left( 
\begin{array}{c}
-\frac{4}{3}Q_{1} \\ 
\frac{5}{6}Q_{0}\delta -\frac{1}{3}Q_{1}\lrcorner \varphi _{X}+Q_{2}%
\end{array}%
\right) .  \label{WTXexp}
\end{eqnarray}%
\end{subequations}%

We can also write out (\ref{TXexp}) explicitly as 
\begin{eqnarray}
T^{\left( X\right) } &=&-\frac{1}{4}\left( w_{0}Q_{0}-2w_{0}\lambda
_{1}e^{-\Delta }+2\lambda _{2}e^{-\Delta }\right) \delta +\frac{1}{4}\left(
\partial \Delta +Q_{0}w-2\lambda _{1}e^{-\Delta }w\right) \lrcorner \varphi
_{X}  \label{TorsionMain} \\
&&-wQ_{1}-w_{0}Q_{2}+\left( Q_{2}\times w\right) ^{t}.  \notag
\end{eqnarray}%
This expression can be simplified if $\mu =\lambda _{2}=w_{0}=0$: 
\begin{equation}
T^{\left( X\right) }=\frac{1}{4}\left( \partial \Delta +Q_{0}w-2\lambda
_{1}e^{-\Delta }w\right) \lrcorner \varphi _{X}-wQ_{1}+\left( Q_{2}\times
w\right) ^{t}.
\end{equation}

From the expression for the torsion (\ref{TXexp}) we can immediately work
out the $1$-dimensional and $7$-dimensional components of $T^{\left(
X\right) }.$ From (\ref{Dir1}), we find that 
\begin{equation*}
\delta ^{a}\left( T^{\left( X\right) }\right) _{a}^{t}=\left( 
\begin{array}{c}
-7\tau _{1} \\ 
-6\tau _{7}%
\end{array}%
\right) 
\end{equation*}%
and then, using the property $\delta ^{a}\delta _{a}=-7,$ we get 
\begin{eqnarray*}
\delta ^{a}\left( T_{a}^{\left( X\right) }\right) ^{t} &=&\delta ^{a}\left( 
\hat{Q}_{a}\bar{W}\right) -\frac{9}{4}\widehat{\partial \Delta }+\lambda
_{1}e^{-\Delta }\hat{w}+\frac{7}{2}e^{-\Delta }\lambda _{2} \\
&=&\left( \delta ^{a}\hat{Q}_{a}\right) \bar{W}+\left[ \delta ^{a},\hat{Q}%
_{a},\bar{W}\right] -\frac{9}{4}\widehat{\partial \Delta }+\lambda
_{1}e^{-\Delta }\hat{w}+\frac{7}{2}\lambda _{2}e^{-\Delta }.
\end{eqnarray*}%
In components, $\left[ \delta ^{a},\hat{Q}_{a},\bar{W}\right] ^{c}=-2\psi
_{\ mnp}^{c}Q^{nm}w^{p}=0,$ since $Q^{nm}$ is symmetric. Moreover, using (%
\ref{exteqs3a}), 
\begin{eqnarray*}
\delta ^{a}\hat{Q}_{a} &=&\left( 
\begin{array}{c}
\frac{7}{4}Q_{0} \\ 
Q_{1}%
\end{array}%
\right) -\frac{7}{2}\lambda _{1}e^{-\Delta } \\
&=&\frac{3}{4}\left( 
\begin{array}{c}
2e^{-\Delta }\left( \lambda _{2}-\mu e^{-3\Delta }\right)  \\ 
-\partial \Delta 
\end{array}%
\right) W-2\lambda _{1}e^{-\Delta }.
\end{eqnarray*}%
Overall, 
\begin{equation*}
\delta ^{a}\left( T_{a}^{\left( X\right) }\right) ^{t}=\left( 
\begin{array}{c}
\frac{3}{2}\left( \lambda _{2}-\mu e^{-3\Delta }\right) e^{-\Delta
}\left\vert W\right\vert ^{2}-2\lambda _{1}e^{-\Delta }w_{0}+\frac{7}{2}%
e^{-\Delta }\lambda _{2} \\ 
-\frac{3}{4}\left( \left\vert W\right\vert ^{2}+3\right) \partial \Delta
+3\lambda _{1}e^{-\Delta }w%
\end{array}%
\right) 
\end{equation*}%
Using (\ref{l1l2mu}) to simplify the $\mathbb{O}$-real part, we obtain 
\begin{subequations}%
\label{t17X} 
\begin{eqnarray}
\tau _{1}^{\left( X\right) } &=&-\frac{1}{14}\left( 1+\left\vert
W\right\vert ^{2}\right) e^{-\Delta }\left( 5\lambda _{2}-\frac{5\left\vert
W\right\vert ^{2}+2}{\left\vert W\right\vert ^{2}+1}\mu e^{-3\Delta }\right) 
\label{TrT} \\
\tau _{7}^{\left( X\right) } &=&\frac{1}{8}\left( \left\vert W\right\vert
^{2}+3\right) d\Delta -\frac{1}{2}\lambda _{1}e^{-\Delta }w.
\label{tau7main}
\end{eqnarray}%
\end{subequations}%
If $\mu =\lambda _{2}=w_{0}=0$, we see that $\tau _{1}^{\left( X\right) }=0,$
and thus the torsion is in the class $\mathbf{7\oplus 14\oplus 27}.$ We can
use (\ref{dWsq}) to rewrite $\tau _{7}^{\left( X\right) }$ as 
\begin{equation}
\tau _{7}^{\left( X\right) }=\frac{1}{16}\left( 5\left\vert W\right\vert
^{2}+3\right) \partial \Delta +\frac{1}{8}d\left( \ln \left( 1+\left\vert
W\right\vert ^{2}\right) \right) 
\end{equation}%
Thus $\tau _{7}^{\left( X\right) }$ is closed if and only if $d\left\vert
W\right\vert ^{2}\wedge d\Delta =0$, or from (\ref{dWsq}) equivalently if 
\begin{equation*}
\lambda _{1}w\wedge d\Delta =0
\end{equation*}%
that is, if either $\lambda _{1}=0\ $or $w$ and $\partial \Delta $ are
linearly dependent. In the case when $\lambda _{1}=0,$ the above simplify to 
\begin{subequations}
\begin{eqnarray}
\tau _{1}^{\left( X\right) } &=&\frac{2}{7}\frac{\mu e^{-\Delta }}{k_{1}}%
\left( 10-3k_{1}e^{-3\Delta }\right)  \\
\tau _{7}^{\left( X\right) } &=&\frac{1}{8}\left( \left\vert W\right\vert
^{2}+3\right) d\Delta 
\end{eqnarray}%
\end{subequations}%
However recall that we have the expression (\ref{normWsq}) for $\left\vert
W\right\vert ^{2}$ in terms of $\Delta .$ Integrating that, we find that 
\begin{equation}
\tau _{7}^{\left( X\right) }=\frac{1}{12}d\ln \left( \left\vert
1+k_{1}e^{-3\Delta }\right\vert e^{6\Delta }\right)   \label{tau7main2}
\end{equation}%
Thus, $\tau _{7}^{\left( X\right) }$ is again exact. In the case of a
Minkowski background, when $\lambda _{2}$ and $\mu $ also vanish, $\tau
_{1}^{\left( X\right) }$ is then $0$. Hence in that case, $\varphi _{X}$ is
conformally equivalent to a $G_{2}$-structure in the class $\mathbf{14\oplus
27}.$

\subsection{Torsion of $\protect\varphi _{Y}$}

The equation (\ref{WTXexp}) can be reformulated to give $T^{\left( Y\right)
} $ - since this is torsion of the $G_{2}$-structure $\varphi _{Y}=\sigma
_{W}\left( \varphi _{X}\right) $. Indeed, we can rewrite it as 
\begin{equation}
DW=\left( 
\begin{array}{c}
\frac{4}{3}Q_{1} \\ 
-\frac{5}{6}Q_{0}\delta +\frac{1}{3}Q_{1}\lrcorner \varphi _{X}-Q_{2}%
\end{array}%
\right) -\left( \frac{1}{2}\mu e^{-4\Delta }\delta +d\ln \left\vert
X\right\vert \right) W  \label{DWexp}
\end{equation}%
and thus, 
\begin{eqnarray}
T^{\left( Y\right) } &=&-\left( DW\right) W^{-1}+d\ln \left\vert W\right\vert
\notag \\
&=&-\left( 
\begin{array}{c}
\frac{4}{3}Q_{1} \\ 
-\frac{5}{6}Q_{0}\delta +\frac{1}{3}Q_{1}\lrcorner \varphi _{X}-Q_{2}%
\end{array}%
\right) W^{-1}+\left( \frac{1}{2}\mu e^{-4\Delta }\delta +\left( d\ln
\left\vert Y\right\vert \right) \hat{1}\right)  \label{TYexp1}
\end{eqnarray}%
since $\left\vert W\right\vert =\left\vert Y\right\vert \left\vert
X\right\vert ^{-1}$. Now note that 
\begin{equation*}
\left( 
\begin{array}{c}
\frac{4}{3}Q_{1} \\ 
-\frac{5}{6}Q_{0}\delta +\frac{1}{3}Q_{1}\lrcorner \varphi _{X}-Q_{2}%
\end{array}%
\right) =-\hat{Q}+\left( 
\begin{array}{c}
\frac{1}{3}Q_{1} \\ 
-\frac{7}{12}Q_{0}\delta +\frac{1}{3}Q_{1}\lrcorner \varphi _{X}%
\end{array}%
\right) -\frac{1}{2}\lambda _{1}e^{-\Delta }\delta .
\end{equation*}%
But,%
\begin{eqnarray*}
\left( 
\begin{array}{c}
\frac{1}{3}Q_{1} \\ 
-\frac{7}{12}Q_{0}\delta +\frac{1}{3}Q_{1}\lrcorner \varphi _{X}%
\end{array}%
\right) &=&-\delta \left( 
\begin{array}{c}
\frac{7}{12}Q_{0} \\ 
\frac{1}{3}Q_{1}%
\end{array}%
\right) \\
&=&-\frac{1}{4}\delta \left( \left( 
\begin{array}{c}
2e^{-\Delta }\left( \lambda _{2}-\mu e^{-3\Delta }\right) \\ 
-\partial \Delta%
\end{array}%
\right) W+2e^{-\Delta }\lambda _{1}\right)
\end{eqnarray*}%
where we have also used (\ref{exteqs3}). Hence, 
\begin{equation*}
\left( 
\begin{array}{c}
\frac{4}{3}Q_{1} \\ 
-\frac{5}{6}Q_{0}\delta +\frac{1}{3}Q_{1}\lrcorner \varphi _{X}-Q_{2}%
\end{array}%
\right) =-\hat{Q}-\lambda _{1}e^{-\Delta }\delta -\frac{1}{4}\delta \left(
\left( 
\begin{array}{c}
2e^{-\Delta }\left( \lambda _{2}-\mu e^{-3\Delta }\right) \\ 
-\partial \Delta%
\end{array}%
\right) W\right) .
\end{equation*}%
Therefore, 
\begin{equation}
T^{\left( Y\right) }=\hat{Q}W^{-1}-\frac{1}{4}\left( \delta \left( \left( 
\widehat{\partial \Delta }\right) W\right) \right) W^{-1}+\lambda
_{1}e^{-\Delta }\left( \delta W^{-1}\right) +\frac{1}{2}\lambda
_{2}e^{-\Delta }\delta +\left( d\ln \left\vert Y\right\vert \right) \hat{1}.
\label{TYexp0}
\end{equation}%
Using (\ref{associd1}), note that 
\begin{equation*}
\left( \delta \left( \left( \widehat{\partial \Delta }\right) W\right)
\right) W^{-1}=\delta \circ _{Y}\left( \widehat{\partial \Delta }\right) .
\end{equation*}%
Thus, we conclude that 
\begin{equation}
T^{\left( Y\right) }=\hat{Q}W^{-1}-\frac{1}{4}\delta \circ _{Y}\left( 
\widehat{\partial \Delta }\right) +\lambda _{1}e^{-\Delta }\left( \delta
W^{-1}\right) +\frac{1}{2}\lambda _{2}e^{-\Delta }\delta +\left( d\ln
\left\vert Y\right\vert \right) \hat{1}.  \label{TYexp}
\end{equation}%
Using the same procedure as for $T^{\left( X\right) }$, we can find the
components $\tau _{1}^{\left( Y\right) }\ $and $\tau _{7}^{\left( Y\right) }$%
:%
\begin{subequations}%
\label{t17Y} 
\begin{eqnarray}
\tau _{1}^{\left( Y\right) } &=&\frac{1}{7}e^{-\Delta }\left( 5w_{0}\lambda
_{1}\left\vert W\right\vert ^{-2}+5\lambda _{2}-\frac{3}{2}\mu e^{-3\Delta
}\right)  \label{tau1Y} \\
\tau _{7}^{\left( Y\right) } &=&\frac{1}{8}\left( \left\vert W\right\vert
^{-2}+3\right) \partial \Delta +\frac{1}{2}\lambda _{1}e^{-\Delta
}w\left\vert W\right\vert ^{-2}.  \label{tau7Y}
\end{eqnarray}%
\end{subequations}%

Comparing (\ref{TYexp}) with $T^{\left( X\right) },$ we find 
\begin{equation}
\left( T^{\left( X\right) }\right) ^{t}+\left\vert W\right\vert
^{2}T^{\left( Y\right) }=\frac{1}{4}\left( \left\vert W\right\vert
^{2}\partial \Delta \lrcorner \varphi _{Y}-\partial \Delta \lrcorner \varphi
_{X}\right) +\lambda _{1}e^{-\Delta }w\lrcorner \varphi _{X}+\frac{1}{2}%
\left( \left\vert W\right\vert ^{2}+1\right) \mu e^{-4\Delta }\delta
\label{TXTYrel}
\end{equation}%
where we have also used (\ref{l1l2mu}). Since the right-hand side of this
expression has a vanishing traceless symmetric part, it follows that 
\begin{equation}
\tau _{27}^{\left( X\right) }+\left\vert W\right\vert ^{2}\tau _{27}^{\left(
Y\right) }=0.
\end{equation}%
On the other hand, recall that 
\begin{subequations}
\begin{eqnarray}
\pi _{7}T^{\left( X\right) } &=&\frac{1}{8}\left( \left\vert W\right\vert
^{2}+3\right) \partial \Delta \lrcorner \varphi _{X}-\frac{1}{2}\lambda
_{1}e^{-\Delta }w\lrcorner \varphi _{X} \\
\left\vert W\right\vert ^{2}\pi _{7}T^{\left( Y\right) } &=&\frac{1}{8}%
\left( 1+3\left\vert W\right\vert ^{2}\right) \partial \Delta \lrcorner
\varphi _{Y}+\frac{1}{2}\lambda _{1}e^{-\Delta }w\lrcorner \varphi _{Y}.
\end{eqnarray}%
\end{subequations}%
Thus, taking the skew-symmetric part of (\ref{TXTYrel}), subtracting the
appropriate $7$-dimensional components, and noting that $w\lrcorner \varphi
_{X}=w\lrcorner \varphi _{Y}$, we get 
\begin{equation*}
-\pi _{14}T^{\left( X\right) }+\left\vert W\right\vert ^{2}\pi
_{14}T^{\left( Y\right) }=\frac{1}{8}\left( \left\vert W\right\vert
^{2}+1\right) \left( \partial \Delta \lrcorner \varphi _{X}-\partial \Delta
\lrcorner \varphi _{Y}\right) .
\end{equation*}%
Going back to the octonion description again, we see that 
\begin{eqnarray}
\partial \Delta \lrcorner \varphi _{X}-\partial \Delta \lrcorner \varphi
_{Y} &=&\delta \circ _{Y}\left( \widehat{\partial \Delta }\right) -\delta
\circ _{X}\left( \widehat{\partial \Delta }\right)  \notag \\
&=&\left[ \delta ,\widehat{\partial \Delta },W\right] W^{-1}  \label{ddphixy}
\end{eqnarray}%
where we have used (\ref{OctoVAB}), (\ref{exteqs3}), as well as properties
of the associator. Hence, overall, 
\begin{equation}
\left\vert W\right\vert ^{2}\pi _{14}T^{\left( Y\right) }-\pi _{14}T^{\left(
X\right) }=\frac{1}{8}\left( \left\vert W\right\vert ^{2}+1\right) \left[
\delta ,\widehat{\partial \Delta },W\right] W^{-1}.  \label{tau14XYrel}
\end{equation}%
Using (\ref{ddphixy}) we can then rewrite the skew-symmetric part of (\ref%
{TXTYrel}) as 
\begin{eqnarray}
\func{Skew}\left( \left( T^{\left( X\right) }\right) ^{t}+\left\vert
W\right\vert ^{2}T^{\left( Y\right) }\right) &=&\frac{1}{4}\left\vert
W\right\vert ^{2}\left( \partial \Delta \lrcorner \varphi _{Y}-\partial
\Delta \lrcorner \varphi _{X}\right) +\frac{1}{4}\left( \left\vert
W\right\vert ^{2}-1\right) \partial \Delta \lrcorner \varphi _{X}+\lambda
_{1}e^{-\Delta }w\lrcorner \varphi _{X}  \notag \\
&=&-\frac{1}{4}\left[ \delta ,\widehat{\partial \Delta },W\right] \bar{W}%
+\left( \frac{1}{4}\left( \left\vert W\right\vert ^{2}-1\right) \partial
\Delta +\lambda _{1}e^{-\Delta }w\right) \lrcorner \varphi _{X}.
\end{eqnarray}%
We can summarize our findings in a theorem.

\begin{theorem}
Let $M$ be a $7$-dimensional manifold that admits $G_{2}$-structures. Then,
for a given metric $g,$ an arbitrary $G_{2}$-structure $\varphi $ that is
compatible with $g$, a $4$-form $G$, a real function $\Delta ,$and real
constants $\lambda _{1},\lambda _{2},\mu $, there exists a solution $Z=X+iY$
to the equations (\ref{susyeqoct}) if and only if all the following
conditions are satisfied:

\begin{enumerate}
\item Given $W=\left( w_{0},w\right) =YX^{-1}$, the quantities $w_{0}$ and $%
\left\vert W\right\vert ^{2}$ satisfy the following equations 
\begin{equation*}
\left\{ 
\begin{array}{c}
2w_{0}\lambda _{1}-\lambda _{2}\left( 1-\left\vert W\right\vert ^{2}\right)
=\mu e^{-3\Delta }\left( 1+\left\vert W\right\vert ^{2}\right) \\ 
dw_{0}=-\frac{3}{2}\left( 1+\left\vert W\right\vert ^{2}\right) w_{0}d\Delta
-4\left( w_{0}\lambda _{1}-\lambda _{2}\right) e^{-\Delta }w \\ 
\frac{d\left\vert W\right\vert ^{2}}{1+\left\vert W\right\vert ^{2}}=\frac{3%
}{2}\left( 1-\left\vert W\right\vert ^{2}\right) d\Delta -4\lambda
_{1}e^{-\Delta }w.%
\end{array}%
\right.
\end{equation*}

\item The components $Q_{0},Q_{1},Q_{2}$ (\ref{Gdecomp}) of $G$ with respect
to $\varphi _{X}$ satisfy%
\begin{equation*}
\left\{ 
\begin{array}{c}
Q_{2}\left( w\right) =w_{0}Q_{1}-\frac{1}{4}\left( Q_{0}-6\lambda
_{1}e^{-\Delta }\right) w-\frac{3}{4}\partial \Delta \\ 
\left( 
\begin{array}{c}
\frac{7}{3}Q_{0} \\ 
\frac{4}{3}Q_{1}%
\end{array}%
\right) =\left( 
\begin{array}{c}
2e^{-\Delta }\left( \lambda _{2}-\mu e^{-3\Delta }\right) \\ 
-\partial \Delta%
\end{array}%
\right) W+2\lambda _{1}e^{-\Delta }.%
\end{array}%
\right.
\end{equation*}

\item The torsion $T^{\left( X\right) }$ of $\varphi _{X}$ is given by%
\begin{equation*}
\left( T^{\left( X\right) }\right) ^{t}=\hat{Q}\bar{W}+\frac{1}{4}\delta
\left( \widehat{\partial \Delta }\right) +\left( 
\begin{array}{c}
\lambda _{1}e^{-\Delta }w-\frac{1}{2}\partial \Delta \\ 
-\frac{1}{2}\lambda _{2}e^{-\Delta }\delta%
\end{array}%
\right) \ \text{where }\hat{Q}=\left( 
\begin{array}{c}
Q_{1} \\ 
-\frac{1}{4}\left( Q_{0}-2\lambda _{1}e^{-\Delta }\right) \delta -Q_{2}%
\end{array}%
\right) .
\end{equation*}

\item The components of the torsion tensor $T^{\left( Y\right) }$ of $%
\varphi _{Y}$ are related to the components of $T^{\left( X\right) }$ via
the following relations 
\begin{equation}
\left\{ 
\begin{array}{c}
\tau _{1}^{\left( X\right) }+\left\vert W\right\vert ^{2}\tau _{1}^{\left(
Y\right) }=\frac{1}{2}\mu e^{-4\Delta }\left( 1+\left\vert W\right\vert
^{2}\right) \\ 
\tau _{7}^{\left( X\right) }+\left\vert W\right\vert ^{2}\tau _{7}^{\left(
Y\right) }=\frac{1}{2}\left( 1+\left\vert W\right\vert ^{2}\right) \partial
\Delta \\ 
\tau _{14}^{\left( X\right) }-\left\vert W\right\vert ^{2}\tau _{14}^{\left(
Y\right) }=-\frac{1}{8}\left( \left\vert W\right\vert ^{2}+1\right) \left[
\delta ,\widehat{\partial \Delta },W\right] W^{-1} \\ 
\tau _{27}^{\left( X\right) }+\left\vert W\right\vert ^{2}\tau _{27}^{\left(
Y\right) }=0%
\end{array}%
\right. .  \label{TXTYrel2}
\end{equation}
\end{enumerate}
\end{theorem}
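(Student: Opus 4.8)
The plan is to recognize all four conditions as repackagings of identities already established in Sections \ref{secoctosusy}--\ref{secInteq}, once we pass to a convenient gauge. Since $\sigma$ composes ($\sigma_U\sigma_V=\sigma_{UV}$) and the octonion covariant derivative is gauge-covariant (Proposition \ref{propDV}), the system (\ref{susyeqoct}) is faithfully rewritten, after conjugating by $X$ and setting $A=ZX^{-1}=1+iW$, as the pair (\ref{susyeqocts}); and because $\sigma_X(\varphi)$ sweeps out the whole metric class, we may take $\varphi_X=\varphi$ outright, so that $X$ is a positive function times $\hat 1$, the genuine unknown is the bioctonion field $W=(w_0,w)$, and $|X|^2$ is pinned down by $\Delta$ and $|W|^2$ through (\ref{normXsq}).

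\emph{Necessity.} Assume $Z$ solves (\ref{susyeqoct}). Condition 1 assembles the algebraic relation (\ref{l1l2mu}) with the evolution equations (\ref{dw0}) and (\ref{dWsq}) for $w_0$ and $|W|^2$, all derived from the external equation and the norm computations of Section \ref{secoctosusy}. Condition 2 is precisely (\ref{exteqs3a}) together with (\ref{Q2wrel}), i.e.\ the $\mathbb{C}$-real and $\mathbb{C}$-imaginary parts of the external equation (\ref{susyeqoct1b}) after substituting the $\varphi_X$-decomposition (\ref{GtotalX}) of $G$. Condition 3 is the transpose (\ref{TXtr}) of the expression (\ref{TXexp}) for $T^{(X)}$, which came from the $\mathbb{C}$-real part (\ref{MainEq1a}) of the internal equation via Lemma \ref{LemG4G3FA}, the external relations (\ref{Q0Q1rel2}), and (\ref{dlnX}). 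Finally, Condition 4 is read off from the $\mathbb{C}$-imaginary part (\ref{MainEq1b}): rewritten as (\ref{DWexp}) it identifies $T^{(Y)}$ with the torsion of $\varphi_Y=\sigma_W(\varphi_X)$ under (\ref{TorsV2}), giving (\ref{TYexp}); extracting the $\mathbf{1}$ and $\mathbf{7}$ components of (\ref{TXexp}) and (\ref{TYexp}) yields (\ref{t17X})--(\ref{t17Y}), whose combinations (using (\ref{l1l2mu})) give the first two lines of (\ref{TXTYrel2}), while the last two lines are (\ref{tau14XYrel}) and the vanishing of the traceless-symmetric part of (\ref{TXTYrel}).

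\emph{Sufficiency.} Conversely, given the four conditions one runs the above backwards. Put $\varphi_X=\varphi$, $|X|^2=e^{\Delta}/(1+|W|^2)$, $Y=WX$, $Z=X+iY$. Condition 2 reconstructs $G$ from $W$, $\Delta$, and the constants and makes the external equation (\ref{susyeqoct1b}) hold; Condition 3 reconstructs $T^{(X)}$ and makes (\ref{MainEq1a}) hold; Condition 1 forces $d\ln|X|$ and $d\ln|Y|$ to be the exact forms (\ref{dlnX})--(\ref{dlnY}) required for consistency; and Condition 4, once $T^{(Y)}$ is recognized as the $\sigma_W(\varphi_X)$-torsion, is equivalent to the remaining $\mathbb{C}$-imaginary internal equation (\ref{MainEq1b}). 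Reassembling the $\mathbb{C}$-real and $\mathbb{C}$-imaginary parts and undoing the conjugation by $X$ then recovers (\ref{susyeqoct}).

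\emph{Main obstacle.} The delicate point is this sufficiency direction: one must check that Conditions 1--4 impose no hidden constraint beyond what is listed, i.e.\ that the only integrability condition linking the $\mathbf{7}$-component relations (\ref{Q1rel}) and (\ref{Q1wrel}) is exactly (\ref{l1l2mu}) (as noted after (\ref{Q0Q1rel2})), and that the $T^{(Y)}$ produced by Condition 4 is compatible with applying the gauge-transformation formula (\ref{TorsV2}) to the $T^{(X)}$ of Condition 3. Concretely this reduces to confirming that the two decompositions used --- of $G$ with respect to $\varphi_X$ into $\mathbf{1}\oplus\mathbf{7}\oplus\mathbf{27}$, and of $T^{(X)},T^{(Y)}$ into $\mathbf{1}\oplus\mathbf{7}\oplus\mathbf{14}\oplus\mathbf{27}$ --- are exhaustive, so that matching all $G_2$-components in (\ref{susyeqoct1b}) and (\ref{susyeqoct2c}) is equivalent to the four listed conditions.
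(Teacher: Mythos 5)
Your proposal is correct and follows essentially the same route as the paper: the theorem there is stated as a summary of the derivations in Sections \ref{secoctosusy}--\ref{secInteq}, namely the relations (\ref{l1l2mu}), (\ref{dw0}), (\ref{dWsq}), the external-equation components (\ref{Q0Q1rel2}), the torsion expressions (\ref{TXexp})/(\ref{TXtr}) and (\ref{TYexp}), and the component relations (\ref{TXTYrel2}), which is exactly the chain you recapitulate. Your explicit remarks on reversibility --- that (\ref{Q1wrel}) is recovered from (\ref{Q1rel}) via (\ref{l1l2mu}) and that the $G_{2}$-decompositions of $G$ and of the torsions are exhaustive, so Conditions 1--4 reassemble into (\ref{susyeqoct1b}) and (\ref{susyeqoct2c}) --- only make explicit what the paper leaves implicit in the ``if and only if''.
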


Moreover, we can see that if $\partial \Delta $ and $w$ are linearly
independent, then both $T^{\left( X\right) }$ and $T^{\left( Y\right) }$
have all torsion components nonzero except the $1$-dimensional components.
If $\mu =0$, from (\ref{Q1mu0}), we see that the condition of $\partial
\Delta $ and $w$ being not multiples of one another is equivalent to $Q_{1}$
being nonzero.

\begin{theorem}
Let  $\left\vert \lambda \right\vert \neq 0\ $\ and suppose $\partial \Delta 
$ and $w$ are not linearly dependent everywhere. Then, if $\mu =0$, $%
T^{\left( X\right) }$ and $T^{\left( Y\right) }$ have the same torsion type $%
\mathbf{7\oplus 14\oplus 27}$ with all components non-zero. If $\mu \neq 0,$
then moreover the $\mathbf{1}$ component may also be non-zero.
\end{theorem}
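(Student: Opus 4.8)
The plan is to derive the statement from the torsion data already assembled, splitting on whether $\mu$ vanishes. Note first that, by Remark~\ref{RemarkNZ0} and the analysis around it, $\mu =0$ forces $\lambda _{2}=0$ and $N(Z)$ real, hence $w_{0}=0$; since $\left\vert \lambda \right\vert \neq 0$ this gives $\lambda _{1}\neq 0$. Feeding $\mu =\lambda _{2}=w_{0}=0$ into (\ref{TrT}) and (\ref{tau1Y}) immediately yields $\tau _{1}^{(X)}=\tau _{1}^{(Y)}=0$, so both torsions lie in $\mathbf{7}\oplus \mathbf{14}\oplus \mathbf{27}$. Moreover, since $\partial \Delta $ and $w$ are assumed linearly independent on an open set, each is nonzero there and they are not proportional; as the coefficient of $\partial \Delta $ in (\ref{tau7main}) is the positive function $\tfrac{1}{8}(|W|^{2}+3)$ and in (\ref{tau7Y}) the positive function $\tfrac{1}{8}(|W|^{-2}+3)$, neither $\tau _{7}^{(X)}$ nor $\tau _{7}^{(Y)}$ can vanish there, so the $\mathbf{7}$-components survive.

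For the $\mathbf{14}$- and $\mathbf{27}$-components I would combine the explicit torsion with the inter-structure relations. Specializing (\ref{TorsionMain}) to $\mu =\lambda _{2}=w_{0}=0$ and using $Q_{1}=-\tfrac{3}{4}\partial \Delta \times w$ from (\ref{Q1mu0}) leaves, besides a $2$-form in $\Lambda ^{2}_{7}$, the terms $\tfrac{3}{4}(\partial \Delta \times w)\otimes w$ and $(Q_{2}\times w)^{t}$; since $\partial \Delta \times w$ is nonzero and orthogonal to $w$, the first of these has nontrivial $\Lambda ^{2}_{14}$- and symmetric-traceless parts. To rule out cancellation by the $Q_{2}$-term I would use (\ref{TXTYrel2}): the relation $\tau _{27}^{(X)}+|W|^{2}\tau _{27}^{(Y)}=0$ shows the two $\mathbf{27}$-parts vanish together, while $\tau _{14}^{(X)}-|W|^{2}\tau _{14}^{(Y)}=-\tfrac{1}{8}(|W|^{2}+1)[\delta ,\widehat{\partial \Delta },W]W^{-1}$, where by the double cross product identity (\ref{doublecrossprod}) the associator equals $2\psi (\cdot ^{\sharp },\delta _{a},\partial \Delta ,w)$; its vanishing would force $(\partial \Delta \times w)\lrcorner \varphi _{X}$ to equal the decomposable $2$-form $w\wedge \partial \Delta $, which is impossible because $(\partial \Delta \times w)\lrcorner \varphi _{X}$ has rank $6$ whenever $\partial \Delta ,w$ are independent. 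Hence the $\mathbf{14}$-parts are not simultaneously zero; to upgrade this to "both nonzero" I would read $\tau _{14}^{(X)}$ and $\tau _{27}^{(X)}$ off (\ref{TorsionMain}) directly and transport to $\varphi _{Y}$ through the duality $\{X\to Y,\ W\to -W^{-1},\ \lambda \to -\lambda \}$, which sends $w$ to the parallel vector $w/|W|^{2}$ and fixes $\partial \Delta $, hence preserves the linear-independence hypothesis. This gives that the torsion type of each of $\varphi _{X},\varphi _{Y}$ is exactly $\mathbf{7}\oplus \mathbf{14}\oplus \mathbf{27}$.

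For $\mu \neq 0$ the $\mathbf{7}$-, $\mathbf{14}$-, $\mathbf{27}$-nonvanishing arguments above go through essentially verbatim: (\ref{tau7main}), (\ref{tau7Y}) and the relations in (\ref{TXTYrel2}) were derived in full generality, and the associator on the right of (\ref{tau14XYrel}) depends only on $\func{Im}W=w$, hence is independent of both $\mu$ and $w_{0}$. What changes is that $\tau _{1}$ need no longer vanish: by (\ref{TrT}) and (\ref{tau1Y}), $\tau _{1}^{(X)}$ and $\tau _{1}^{(Y)}$ are expressions in $\lambda _{1},\lambda _{2},\mu ,e^{-\Delta },|W|^{2}$ that can vanish only if the rigid identity $5\lambda _{2}(|W|^{2}+1)=(5|W|^{2}+2)\mu e^{-3\Delta }$ holds throughout, which for generic data it does not; so the $\mathbf{1}$ component is in general present, upgrading the type to $\mathbf{1}\oplus \mathbf{7}\oplus \mathbf{14}\oplus \mathbf{27}$.

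The main obstacle I anticipate is the "both nonzero" clause for $\mathbf{27}$ (and, when $\mu \neq 0$, also for $\mathbf{14}$): the explicit torsion (\ref{TorsionMain}) retains the component $Q_{2}$ of $G$, and although $Q_{2}(w)$ is pinned down by the external relations (\ref{Q0Q1rel2}), its values on $w^{\perp }$ are controlled only through the coupled internal system. Turning the assertion that these $Q_{2}$-terms cannot conspire to annihilate the $(\partial \Delta \times w)\otimes w$ contribution into a genuine rather than merely generic statement is the step that requires care, and is where one must play the two relations of (\ref{TXTYrel2}) against the structure of (\ref{TorsionMain}) rather than use either of them in isolation.
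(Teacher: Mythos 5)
Your treatment of the $\mathbf{1}$ and $\mathbf{7}$ components coincides with the paper's, and your use of the relations (\ref{TXTYrel2}) is the right starting point, but the proposal has a genuine gap exactly where you flag it: it never establishes that the $\mathbf{27}$ component (nor, in full, the $\mathbf{14}$ component) is actually non-zero. From $\tau _{27}^{\left( X\right) }+\left\vert W\right\vert ^{2}\tau _{27}^{\left( Y\right) }=0$ you only get that the two $\mathbf{27}$-parts vanish together, and from the non-vanishing of the associator you only get that $\tau _{14}^{\left( X\right) }$ and $\tau _{14}^{\left( Y\right) }$ cannot vanish simultaneously. The promised upgrade --- reading $\tau _{14}^{\left( X\right) }$ and $\tau _{27}^{\left( X\right) }$ off (\ref{TorsionMain}) and transporting by the $X\leftrightarrow Y$ duality --- does not go through as stated, because both of those components depend on the full symmetric tensor $Q_{2}$ through $\left( Q_{2}\times w\right) ^{t}$, and only $Q_{2}\left( w\right) $ is pinned down by the external relations (\ref{Q2wrel}); moreover the duality argument, applied with the roles of $X$ and $Y$ exchanged, reproduces the same ``not both zero'' statement rather than strengthening it. So the conclusion ``all components non-zero'' is not reached.

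The paper closes these two holes with arguments you did not supply. For $\mathbf{14}$: assume $\tau _{14}^{\left( Y\right) }=0$; then (\ref{TXTYrel2}) forces $\tau _{14}^{\left( X\right) }$ to equal the associator term, which in components is $-\frac{1}{4}\left( \left\vert W\right\vert ^{-2}+1\right) \psi _{abcd}\left( \partial \Delta \times w\right) ^{c}w^{d}$ as in (\ref{tau14xtau14Y0}); this tensor has a non-vanishing $\Lambda _{7}^{2}$ part proportional to $\pi _{7}\left( \left( \partial \Delta \times w\right) \wedge w\right) $ whenever $\partial \Delta $ is not a multiple of $w$, contradicting that the left-hand side lies in $\Lambda _{14}^{2}$; the symmetric argument then gives $\tau _{14}^{\left( X\right) }\neq 0$ as well, so each (not merely one) of the $\mathbf{14}$-parts is non-zero without ever computing them. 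For $\mathbf{27}$: write $\tau _{27}^{\left( X\right) }=\func{Sym}\left( Q_{2}\times w-w_{0}Q_{2}-wQ_{1}\right) $ as in (\ref{t27X}) and contract with $w$; the contraction involves $Q_{2}$ only through $Q_{2}\left( w\right) $, which (\ref{Q0Q1rel2}) expresses in terms of $w,\partial \Delta ,Q_{0},Q_{1}$, so $w\lrcorner \tau _{27}^{\left( X\right) }$ becomes an explicit linear combination of $w$, $\partial \Delta $ and $w\times \partial \Delta $ that vanishes only if $w$ and $\partial \Delta $ are proportional. Contracting with $w$ is precisely the device that neutralizes the undetermined part of $Q_{2}$ --- the obstacle you correctly identified but did not resolve --- and it is what turns your ``generic'' non-vanishing into the actual statement of the theorem.
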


\begin{proof}
Since $\partial \Delta $ and $w$ are not linearly dependent everywhere, we
may assume that at some point, $\partial \Delta $ is not proportional to $w$
and in particular, $\partial \Delta \neq 0$. If $\mu =0$, then we know $\tau
_{1}^{\left( X\right) }=\tau _{1}^{\left( Y\right) }=0$. To show that $%
T^{\left( X\right) }$ and $T^{\left( Y\right) }$ have the same torsion type
it is now sufficient to show using (\ref{TXTYrel2}) that if $\tau
_{i}^{\left( Y\right) }=0$ then $\tau _{i}^{\left( X\right) }=0$ for any $%
i=7,14,27$. The converse will follow by a similar argument. From (\ref%
{TXTYrel2}) this is immediately obvious for the $27$-dimensional component.
From (\ref{tau7main}) and (\ref{tau7Y}) it also follows immediately that
each of $\tau _{7}^{\left( X\right) }$ and $\tau _{7}^{\left( Y\right) }$
vanish only if $\partial \Delta $ is proportional to $w$. Thus, if $\partial
\Delta $ is not proportional to $w$, then both $\tau _{7}^{\left( X\right) }$
and $\tau _{7}^{\left( Y\right) }$ are non-zero.

Suppose now $\tau _{14}^{\left( Y\right) }=0$, then from (\ref{TXTYrel2}),%
\begin{equation*}
\tau _{14}^{\left( X\right) }=-\frac{1}{8}\left( \left\vert W\right\vert
^{2}+1\right) \left[ \delta ,\widehat{\partial \Delta },W\right] W^{-1}.
\end{equation*}%
If $\partial \Delta $ is not a multiple of $w$, then $\left[ \delta ,%
\widehat{\partial \Delta },W\right] \neq 0.$ However, we can rewrite this
expression for $\tau _{14}^{\left( X\right) }$ as%
\begin{equation}
\left( \tau _{14}^{\left( X\right) }\right) _{ab}=-\frac{1}{4}\left(
\left\vert W\right\vert ^{-2}+1\right) \psi _{abcd}\left( \partial \Delta
\times w\right) ^{c}w^{d}.  \label{tau14xtau14Y0}
\end{equation}%
With respect to $\varphi _{X}$ the right-hand side will always have a $%
\Lambda _{7}^{2}$ component that is proportional to $\pi _{7}\left( \left(
\partial \Delta \times w\right) \wedge w\right) .$ Since $w$ is not
identically zero, this component is non-zero if and only if $\partial \Delta 
$ is not a multiple of $w$. Hence, we get a contradiction, since the
left-hand side of (\ref{tau14xtau14Y0}) is in $\Lambda _{14}^{2}$ and the
right-hand side has a nonvanishing component in $\Lambda _{7}^{2}.$ So, $%
\tau _{14}^{\left( Y\right) }$ must be non-zero and a similar argument will
show that $\tau _{14}^{\left( X\right) }$ must also be non-zero.

To show that the $27$-dimensional component must be non-zero, from (\ref%
{TorsionMain}) we can write down $\tau _{27}^{\left( X\right) }$ in the case 
$\mu =\lambda _{2}=w_{0}=0$:%
\begin{equation}
\tau _{27}^{\left( X\right) }=\func{Sym}\left( Q_{2}\times
w-w_{0}Q_{2}-wQ_{1}\right) .  \label{t27X}
\end{equation}%
From (\ref{t27X}), we can work out $w\lrcorner \tau _{27}^{\left( X\right) },
$ which we simplify using (\ref{Q0Q1rel2}). This will give us a linear
combination of $w,\partial \Delta ,w\times \partial \Delta $ which will
vanish only if $w$ and $\partial \Delta $ are proportional. Hence, under our
hypothesis, hence $\tau _{27}^{\left( X\right) }$ is non-zero.
\end{proof}

\subsection{Derivative of $W$}

Instead of using (\ref{dwexp}) to calculate the torsion of $T^{\left(
Y\right) }$, we may instead use it to give a differential equation for $W$:%
\begin{eqnarray}
\nabla W &=&WT^{\left( X\right) }+\left( 
\begin{array}{c}
\frac{4}{3}Q_{1} \\ 
-\frac{5}{6}Q_{0}\delta +\frac{1}{3}Q_{1}\lrcorner \varphi _{X}-Q_{2}%
\end{array}%
\right) -\left( \frac{1}{2}\mu e^{-4\Delta }\delta +d\ln \left\vert
X\right\vert \right) W  \notag \\
&=&-W\left( \hat{Q}\bar{W}\right) ^{t}-\frac{1}{4}W\left( \delta \left( 
\widehat{\partial \Delta }\right) \right) +W\left( 
\begin{array}{c}
\frac{1}{2}\lambda _{1}e^{-\Delta }w-\partial \Delta  \\ 
-\frac{1}{2}\lambda _{2}e^{-\Delta }\delta 
\end{array}%
\right)   \notag \\
&&+\left( 
\begin{array}{c}
\frac{4}{3}Q_{1} \\ 
-\frac{5}{6}Q_{0}\delta +\frac{1}{3}Q_{1}\lrcorner \varphi _{X}-Q_{2}%
\end{array}%
\right) -\left( \frac{1}{2}\mu e^{-4\Delta }\delta +d\ln \left\vert
X\right\vert \right) W.  \label{dWeq2}
\end{eqnarray}%
The $\mathbb{O}$-real part of (\ref{dWeq2}) gives a differential equation
for $w_{0},$ however it can be shown to recover (\ref{dw0}). However, the $%
\mathbb{O}$-imaginary part gives a differential equation for the vector
field $w$ (and equivalently, its dual $1$-form, which we will also denote by 
$w$). Overall, 
\begin{eqnarray}
\nabla w &=&-\frac{1}{12}\left( \left( 3\left\vert W\right\vert
^{2}+17\right) Q_{0}-6\left( 1+\left\vert W\right\vert ^{2}\right) \lambda
_{1}e^{-\Delta }+12\mu w_{0}e^{-4\Delta }\right) g  \notag \\
&&+\frac{1}{4}\left( 2+3\left\vert W\right\vert ^{2}\right) w\left( \partial
\Delta \right) -\frac{3}{4}\left( \left\vert W\right\vert ^{2}+1\right)
\left( \partial \Delta \right) w  \notag \\
&&+w_{0}Q_{1}w-\frac{1}{4}\left( 7Q_{0}-2\lambda _{1}e^{-\Delta }\right)
ww-Q_{2}-\func{Im}_{\mathbb{O}}\left( W\left( \hat{Q}_{2}\bar{W}\right)
^{t}\right) .  \label{nablawexp}
\end{eqnarray}%
Using Lemma \ref{LemSkewWPWt} from the Appendix, we see that 
\begin{eqnarray*}
\func{Skew}\left( W\left( \hat{Q}_{2}\bar{W}\right) ^{t}\right)  &=&\left(
Q_{2}\left( w\right) \right) w-w\left( Q_{2}\left( w\right) \right)  \\
&=&w_{0}\left( Q_{1}w-wQ_{1}\right) +\frac{3}{4}w\left( \partial \Delta
\right) -\frac{3}{4}\left( \partial \Delta \right) w.
\end{eqnarray*}%
Hence, taking the skew-symmetric part of (\ref{nablawexp}), we find that 
\begin{equation}
dw=-\frac{1}{2}\left( 3\left\vert W\right\vert ^{2}+1\right) d\Delta \wedge
w.  \label{dwexp1}
\end{equation}%
Note that this is equivalent to (\ref{ddyx0}). If $\mu =0$, then for $%
\left\vert W\right\vert ^{2}=1,$ then we can rewrite it as 
\begin{equation}
d\left( e^{2\Delta }w\right) =0.  \label{dwexp}
\end{equation}%
This exactly recovers equation (3.28) from \cite{KasteMinasianFlux}.
However, whenever $\left\vert W\right\vert ^{2}\neq 1$, using (\ref{dWsq}),
we obtain 
\begin{equation}
d\left( \left( \left\vert W\right\vert ^{2}-1\right) ^{-\frac{2}{3}}\left(
\left\vert W\right\vert ^{2}+1\right) ^{-\frac{1}{3}}w\right) =0.
\label{dwexp2}
\end{equation}

For the symmetric part of $\nabla w,$ we obtain 
\begin{eqnarray}
\func{Sym}\nabla w &=&-\frac{1}{12}\left( \left( 3\left\vert W\right\vert
^{2}+17\right) Q_{0}-6\left( 1+\left\vert W\right\vert ^{2}\right) \lambda
_{1}e^{-\Delta }+12\mu w_{0}e^{-4\Delta }\right) g  \notag \\
&&-\frac{1}{4}\left( 7Q_{0}-2\lambda _{1}e^{-\Delta }\right) ww-Q_{2}  \notag
\\
&&+\func{Sym}\left( w_{0}Q_{1}w-\frac{1}{4}w\left( \partial \Delta \right) -%
\func{Im}_{\mathbb{O}}\left( W\left( \hat{Q}_{2}\bar{W}\right) ^{T}\right)
\right) .  \label{symdwexp}
\end{eqnarray}%
Explicitly, in coordinates, this gives 
\begin{eqnarray}
\nabla _{(a}w_{b)} &=&-\frac{1}{12}\left( \left( 3\left\vert W\right\vert
^{2}+17\right) Q_{0}-6\left( 1+\left\vert W\right\vert ^{2}\right) \lambda
_{1}e^{-\Delta }+12\mu w_{0}e^{-4\Delta }\right) g_{ab}  \notag \\
&&-\frac{1}{2}\left( 3Q_{0}+2\lambda _{1}e^{-\Delta }\right)
w_{a}w_{b}-\left( Q_{2}\right) _{ab}\left( w_{0}^{2}+1\right)  \notag \\
&&+\frac{1}{2}w_{(a}\left( \partial _{b)}\Delta \right) -2w_{0}\left(
Q_{2}\right) _{(a}^{\ m}w^{n}\varphi _{b)mn}^{\ }-\varphi _{amn}\varphi
_{bpq}\left( Q_{2}\right) ^{mp}w^{n}w^{q}.  \label{symdw}
\end{eqnarray}%
In particular, if $\mu =w_{0}=\lambda _{2}=0$, this simplifies to%
\begin{eqnarray}
\nabla _{(a}w_{b)} &=&-\frac{1}{28}\left( \left( 3\left\vert W\right\vert
^{2}+17\right) \left\langle \partial \Delta ,w\right\rangle +4\left(
5-2\left\vert W\right\vert ^{2}\right) \lambda _{1}e^{-\Delta }\right) g_{ab}
\notag \\
&&-\frac{1}{14}\left( 3\left\langle \partial \Delta ,w\right\rangle
+20\lambda _{1}e^{-\Delta }\right) w_{a}w_{b}-\left( Q_{2}\right) _{ab}+%
\frac{1}{2}w_{(a}\left( \partial _{b)}\Delta \right)  \notag \\
&&-\varphi _{amn}\varphi _{bpq}\left( Q_{2}\right) ^{mp}w^{n}w^{q}.
\end{eqnarray}%
where we have also used (\ref{Q0Q1rel2 copy(1)}). Furthermore, in the
Minkowski case, together with $\left\vert W\right\vert ^{2}=1,$ we get 
\begin{equation}
\nabla _{(a}w_{b)}=-\frac{5}{3}Q_{0}g_{ab}-\frac{3}{2}Q_{0}w_{a}w_{b}-\left(
Q_{2}\right) _{ab}+\frac{1}{2}w_{(a}\left( \partial _{b)}\Delta \right)
-\varphi _{amn}\varphi _{bpq}\left( Q_{2}\right) ^{mp}w^{n}w^{q}.
\end{equation}%
This has some additional terms compared to the corresponding expression
(3.29) in \cite{KasteMinasianFlux}, however it agrees with \cite%
{Behrndt:2003zg}.

Taking the trace in (\ref{symdw}), using (\ref{l1l2mu}), and then using the
expression (\ref{Q0rel}) for $Q_{0},$ we obtain 
\begin{equation}
\func{div}w=-\frac{3}{2}\left( \left\vert W\right\vert ^{2}+3\right)
\left\langle \partial \Delta ,w\right\rangle +\left( 4w_{0}^{2}+\left\vert
W\right\vert ^{2}-5\right) \lambda _{1}e^{-\Delta }-10w_{0}\lambda
_{2}e^{-\Delta }.  \label{divw}
\end{equation}

We can also use the equation (\ref{dWeq2}) to express $Q_{2}$ in terms of $W$
and $\partial \Delta $. First note that 
\begin{equation*}
\bar{W}\hat{Q}_{2}=-\hat{Q}_{2}\bar{W}+2Q_{2}\left( w\right) +2w_{0}\hat{Q}%
_{2}
\end{equation*}%
and hence multiplying (\ref{dWeq2}) by $\bar{W}$ on the left, and
rearranging, we obtain 
\begin{equation}
\hat{Q}_{2}\bar{W}-\left\vert W\right\vert ^{2}\left( \hat{Q}_{2}\bar{W}%
\right) ^{t}-2w_{0}\hat{Q}_{2}=\bar{W}\nabla W-\left\vert W\right\vert
^{2}F\left( W\right) -\bar{W}F\left( 1\right) +2Q_{2}\left( w\right) +\frac{1%
}{2}\mu e^{-4\Delta }\left( \bar{W}\delta W-\left\vert W\right\vert
^{2}\delta \right) .  \label{Q2Weq}
\end{equation}%
Note that due to (\ref{Q2wrel}), the right-hand side of (\ref{Q2Weq}) has no
dependence on $Q_{2}$. When $W^{2}\neq -1$, using Lemma \ref{LemOctSol} we
can solve equation (\ref{Q2Weq}) for $\hat{Q}_{2}$ to obtain%
\begin{equation}
\hat{Q}_{2}=-\frac{\left( K+\left\vert W\right\vert ^{2}K^{t}\right) \left( 
\bar{W}+\left\vert W\right\vert ^{2}W\right) }{\left( 1+\left\vert
W\right\vert ^{2}\right) \left\vert W\right\vert ^{2}\left\vert
1+W^{2}\right\vert ^{2}}  \label{Q2sol}
\end{equation}%
where $K=\bar{W}\nabla W-\left\vert W\right\vert ^{2}F\left( W\right) -\bar{W%
}F\left( 1\right) +2Q_{2}\left( w\right) +\frac{1}{2}\mu e^{-4\Delta }\left( 
\bar{W}\delta W-\left\vert W\right\vert ^{2}\delta \right) $. The condition $%
W^{2}\neq -1$ is precisely equivalent to saying that the complex octonion $%
A=1+iW$ (or equivalently, $Z=X+iY$) is not a zero divisor. We know this is
true when $\mu \neq 0$ and for nonzero $\lambda $, this is true at least
locally even if $\mu =0$, so we may assume that $W^{2}\neq -1$. The
expression (\ref{Q2sol}) gives us a solution for $\hat{Q}_{2}$ that only
involves $\nabla w,$ $d\Delta ,$ $w_{0},$ $\left\vert W\right\vert ^{2}$,
and $\Delta $. 

\section{Integrability conditions}

\setcounter{equation}{0}\label{secInteg}From \cite%
{CleytonIvanovClosed,GrigorianG2Torsion1,karigiannis-2007} we know that the
torsion of a $G_{2}$-structure satisfies the following integrability
condition:%
\begin{equation}
\frac{1}{4}\func{Riem}_{ij}^{\ \ \beta \gamma }\varphi _{\ \ \beta \gamma
}^{\alpha }=\nabla _{i}T_{j}^{\ \ \alpha }-\nabla _{j}T_{i}^{\ \ \alpha
}+2T_{i}^{\ \beta }T_{j}^{\ \gamma }\varphi _{\ \beta \gamma }^{\alpha }.
\label{dtinteg}
\end{equation}%
In \cite{GrigorianOctobundle}, the right-hand side of this expression was
interpreted as the $\func{Im}\mathbb{O}M$-valued $2$-form $d_{D}T,$ where $%
d_{D}$ is the octonionic covariant exterior derivative obtained from $D$ (%
\ref{octocov}). The left-hand side expression has been denoted in \cite%
{karigiannis-2007} as $\pi _{7}\func{Riem}$ since we are effectively taking
a projection to $\Lambda _{7}^{2}$ of one of the $2$-form pieces of $\func{%
Riem}.$ Hence, (\ref{dtinteg}) is equivalently written in terms of octonions
as%
\begin{equation}
d_{D}T=\frac{1}{4}\pi _{7}\func{Riem}.  \label{dtinteg2}
\end{equation}

The quantity $d_{D}T$ is pointwise in the $\left( \mathbf{7}\oplus \mathbf{14%
}\right) \otimes \mathbf{7}$ representation of $G_{2}$, which, as shown in 
\cite{CleytonIvanovCurv}, decomposes in terms irreducible representations as 
\begin{equation}
\left( \mathbf{7}\oplus \mathbf{14}\right) \mathbf{\otimes 7}\cong \left( 
\mathbf{1}\oplus \mathbf{7}\oplus \mathbf{14}\oplus \mathbf{27}\right)
\oplus \left( \mathbf{7\oplus 27\oplus 64}\right) .  \label{torsintegdecomp}
\end{equation}%
We can consider different projections of $d_{D}T$ to project out the
component in $\mathbf{64}.$ In particular, let 
\begin{subequations}
\begin{eqnarray}
\left( \mathcal{T}_{1}\right) _{ab} &=&\left( d_{D}T\right)
_{acd}^{{}}\varphi _{\ \ b}^{cd}  \label{bianchit1proj} \\
\left( \mathcal{T}_{2}\right) _{ab} &=&\left( d_{D}T\right)
_{cda}^{{}}\varphi _{\ \ b}^{cd}.  \label{bianchit2proj}
\end{eqnarray}%
\end{subequations}%
From (\ref{dtinteg}), these conditions then give%
\begin{subequations}%
\label{bianchiphi} 
\begin{eqnarray}
\mathcal{T}_{1} &=&\frac{1}{2}\func{Ric}  \label{torsricci} \\
\mathcal{T}_{2} &=&\frac{1}{4}\func{Ric}^{\ast }  \label{starRicci}
\end{eqnarray}%
\end{subequations}%
where $\left( \func{Ric}^{\ast }\right) _{ab}=\func{Riem}_{mnpq}^{\ }\varphi
_{a}^{\ mn}\varphi _{b}^{\ pq}$ is the $\ast $-Ricci curvature as defined in 
\cite{CleytonIvanovClosed}. From (\ref{torsricci}) we can then obtain an
expression for $\func{Ric}$ in terms of $T$ and its derivatives which has
originally been shown by Bryant in \cite{bryant-2003}. In particular, both $%
\func{Ric}$ and $\func{Ric}^{\ast }$ are symmetric and their traces give the
scalar curvature $R$. Together, they generate the $\mathbf{1\oplus 27\oplus
27}$ part in (\ref{torsintegdecomp}). The skew-symmetric parts of $\mathcal{T%
}_{1}$ and $\mathcal{T}_{2}$ therefore vanish - the two $\mathbf{7}$
components generate the $\mathbf{7\oplus 7}$ part in (\ref{torsintegdecomp}%
), and the two $\mathbf{14}$ components are actually proportional and
correspondingly give the $\mathbf{14}$ component in (\ref{torsintegdecomp}).
An equivalent way to obtain the $\mathbf{7\oplus 7\oplus 14}$ components of
the integrability condition is from $d^{2}\varphi =0$ and $d^{2}\psi =0$.
The expression for the scalar curvature can be easily written out explicitly
in terms of components of the torsion:%
\begin{equation}
\frac{1}{4}R=42\tau _{1}^{2}+30\left\vert \tau _{7}\right\vert
^{2}-\left\vert \tau _{14}\right\vert ^{2}-\left\vert \tau _{27}\right\vert
^{2}+6\func{div}\tau _{7}  \label{ScalCurv}
\end{equation}%
or equivalently,%
\begin{equation}
\frac{1}{4}R=49\tau _{1}^{2}+36\left\vert \tau _{7}\right\vert
^{2}-\left\vert T\right\vert ^{2}+6\func{div}\tau _{7}.  \label{ScalCurv1}
\end{equation}

In our case, we have two $G_{2}$-structure $\varphi _{X}$ and $\varphi _{Y}$
with torsion tensors $T^{\left( X\right) }$ and $T^{\left( Y\right) }$ given
by (\ref{TorsionMain}) and (\ref{TYexp}), respectively. Each of these
torsion tensors satisfies their corresponding integrability conditions (\ref%
{dtinteg}). However, since they correspond to the same metric, and thus the
same Ricci curvature, we get $\mathcal{T}_{1}^{\left( X\right) }=\mathcal{T}%
_{1}^{\left( Y\right) }$. It also turns out that $\mathbf{7\oplus 7\oplus 14}
$ components of (\ref{dtinteg}) for $T^{\left( X\right) }$ and $T^{\left(
Y\right) }$ are equivalent, and together give the condition $dG=0$.
Moreover, whenever $w_{0}\neq 0$ and $\left\vert W\right\vert ^{2}\neq 1$
(which holds almost everywhere if $\mu \neq 0$), we will find that $\mathcal{%
T}_{1}^{\left( X\right) }-\mathcal{T}_{1}^{\left( Y\right) }=0$ gives
precisely the equation of motion (\ref{eom7D}) and the $1$-dimensional
component is moreover equivalent to the $4$-dimensional Einstein's equation %
\ref{ric4cond}.

Let us first work out $R$ using both $T^{\left( X\right) }$ and $T^{\left(
Y\right) }.$ To work out $R,$ from (\ref{ScalCurv1}) we need to know $\left(
\tau _{1}^{\left( X\right) }\right) ^{2},$ $\left\vert \tau _{7}^{\left(
X\right) }\right\vert ^{2},$ $\left\vert T^{\left( X\right) }\right\vert
^{2},$ and $\func{div}\tau _{7}^{\left( X\right) }.$ The first two are
immediately obtained from expressions (\ref{TrT}) and (\ref{tau7main}).
Then, $\left\vert T^{\left( X\right) }\right\vert ^{2}=\left\vert \left(
T^{\left( X\right) }\right) ^{t}\right\vert ^{2}$ can be worked out from the
expression (\ref{TXtr}) for $\left( T^{\left( X\right) }\right) ^{t}$. After
some manipulations, we obtain 
\begin{eqnarray}
\left\vert T^{\left( X\right) }\right\vert ^{2} &=&\left\vert W\right\vert
^{2}\left\vert Q_{2}\right\vert ^{2}-\frac{21}{8}\left\vert W\right\vert
^{2}Q_{0}\left( Q_{0}-\frac{4}{3}\lambda _{1}e^{-\Delta }\right) +\frac{3}{16%
}\left( 3\left\vert W\right\vert ^{4}+2\left\vert W\right\vert ^{2}+1\right)
\left\vert d\Delta \right\vert ^{2}  \label{TXsq} \\
&&-\frac{3}{2}\left\vert W\right\vert ^{2}\lambda _{1}^{2}e^{-2\Delta }+%
\frac{1}{2}\left( 5\left\vert W\right\vert ^{4}+4\left\vert W\right\vert
^{2}+2\right) \lambda _{2}^{2}e^{-2\Delta }-\frac{1}{2}\left( 10\left\vert
W\right\vert ^{4}+5\left\vert W\right\vert ^{2}+1\right) \mu \lambda
_{2}e^{-5\Delta }  \notag \\
&&+\frac{1}{4}\left( 10\left\vert W\right\vert ^{4}+2\left\vert W\right\vert
^{2}+1\right) \mu ^{2}e^{-8\Delta }.  \notag
\end{eqnarray}%
However, from the expansion of the $4$-form $G$ (\ref{GtotalX}) and then
using (\ref{Q0Q1rel2}) to expand $Q_{1}$ we obtain 
\begin{eqnarray}
\left\vert G\right\vert ^{2} &=&\frac{1}{24}G_{abcd}G^{abcd}=2\left\vert
Q_{2}\right\vert ^{2}+4\left\vert Q_{1}\right\vert ^{2}+7Q_{0}^{2}
\label{Gsq} \\
&=&2\left\vert Q_{2}\right\vert ^{2}-\frac{21}{4}Q_{0}\left( Q_{0}-4\lambda
_{1}e^{-\Delta }\right) +\frac{9}{4}\left\vert W\right\vert ^{2}\left\vert
d\Delta \right\vert ^{2}-9\lambda _{1}^{2}e^{-2\Delta }  \notag \\
&&+9\left\vert W\right\vert ^{2}\left( \lambda _{2}-\mu e^{-3\Delta }\right)
^{2}e^{-2\Delta }.  \notag
\end{eqnarray}%
Hence, we can use (\ref{Gsq}) to rewrite (\ref{TXsq}) as 
\begin{eqnarray}
\left\vert T^{\left( X\right) }\right\vert ^{2} &=&\frac{1}{2}\left\vert
W\right\vert ^{2}\left\vert G\right\vert ^{2}+\frac{3}{16}\left(
1-3\left\vert W\right\vert ^{2}\right) \left( 1-\left\vert W\right\vert
^{2}\right) \left\vert d\Delta \right\vert ^{2}-2\left\vert W\right\vert
^{2}\lambda _{1}e^{-\Delta }\left\langle d\Delta ,w\right\rangle
\label{TXsq2} \\
&&-3\left\vert W\right\vert ^{2}\lambda _{1}^{2}e^{-2\Delta }+\left(
\left\vert W\right\vert ^{4}-\left\vert W\right\vert ^{2}+1\right) \lambda
_{2}^{2}e^{-2\Delta }-\frac{1}{2}\left( 1+\left\vert W\right\vert
^{2}\right) \left( 1+4\left\vert W\right\vert ^{2}\right) \mu \lambda
_{2}e^{-5\Delta }  \notag \\
&&+\frac{1}{4}\left( 4\left\vert W\right\vert ^{4}+14\left\vert W\right\vert
^{2}+1\right) \mu ^{2}e^{-8\Delta }.  \notag
\end{eqnarray}%
where we have also used (\ref{Q0rel}) to eliminate $Q_{0}$ and (\ref{l1l2mu}%
) to eliminate $w_{0}$.

Now let us work out $\func{div}\tau _{7}^{\left( X\right) }$. From (\ref%
{tau7main}), we have 
\begin{equation*}
\func{div}\tau _{7}^{\left( X\right) }=\frac{1}{8}\left( 3+\left\vert
W\right\vert ^{2}\right) \nabla ^{2}\Delta +\frac{1}{8}\left\langle
d\left\vert W\right\vert ^{2},d\Delta \right\rangle ++\frac{1}{2}\lambda
_{1}e^{-\Delta }\left\langle d\Delta ,w\right\rangle -\frac{1}{2}\lambda
_{1}e^{-\Delta }\func{div}w.
\end{equation*}%
Using the expression (\ref{divw}) for $\func{div}w$ and (\ref{dWsq}) for $%
d\left\vert W\right\vert ^{2},$ as well as (\ref{Q0rel}) and (\ref{l1l2mu})
again, we obtain 
\begin{eqnarray}
\func{div}\tau _{7}^{\left( X\right) } &=&\frac{1}{8}\left( 3+\left\vert
W\right\vert ^{2}\right) \nabla ^{2}\Delta +\frac{3}{16}\left( 1-\left\vert
W\right\vert ^{4}\right) \left\vert d\Delta \right\vert ^{2}+\frac{1}{4}%
\left( 9+\left\vert W\right\vert ^{2}\right) \lambda _{1}e^{-\Delta
}\left\langle d\Delta ,w\right\rangle  \label{divt7X} \\
&&+\frac{1}{2}\left( 5-\left\vert W\right\vert ^{2}\right) \lambda
_{1}^{2}e^{-2\Delta }+\frac{1}{2}\left( 4+\left\vert W\right\vert
^{2}\right) \left( 1-\left\vert W\right\vert ^{2}\right) \lambda
_{2}^{2}e^{-2\Delta }  \notag \\
&&+\frac{1}{2}\left( 3+2\left\vert W\right\vert ^{2}\right) \left(
1+\left\vert W\right\vert ^{2}\right) \mu \lambda _{2}e^{-5\Delta }-\frac{1}{%
2}\left( 1+\left\vert W\right\vert ^{2}\right) ^{2}\mu ^{2}e^{-8\Delta }. 
\notag
\end{eqnarray}%
Overall, combining everything, we conclude that 
\begin{eqnarray}
R &=&-2\left\vert W\right\vert ^{2}\left\vert G\right\vert ^{2}+3\left(
3+\left\vert W\right\vert ^{2}\right) \nabla ^{2}\Delta +12\left(
2+\left\vert W\right\vert ^{2}\right) \left\vert d\Delta \right\vert
^{2}+12\left( 5+3\left\vert W\right\vert ^{2}\right) \left( \lambda
_{1}^{2}+\lambda _{2}^{2}\right) e^{-2\Delta }  \label{scalX} \\
&&-18\left( 1+2\left\vert W\right\vert ^{2}\right) \mu ^{2}e^{-8\Delta }. 
\notag
\end{eqnarray}%
Now recall that since the original choice of the $G_{2}$-structure $\varphi
_{X}$ or $\varphi _{Y}$ was arbitrary, our expressions are invariant under
the transformation $\left\{ X\longrightarrow Y,W\longrightarrow
-W^{-1},\lambda \longrightarrow -\lambda \right\} $. In (\ref{scalX}), this
means that it must be invariant under the transformation $\left\vert
W\right\vert ^{2}\longrightarrow \left\vert W\right\vert ^{-2}$, since $%
\left\vert G\right\vert ^{2}$ and other quantities are independent of the
choice of the $G_{2}$-structure. Hence the corresponding corresponding
expression for $R$ that that we would have obtained from $T^{\left( Y\right)
}$ is 
\begin{eqnarray}
R &=&-2\left\vert W\right\vert ^{-2}\left\vert G\right\vert ^{2}+3\left(
3+\left\vert W\right\vert ^{-2}\right) \nabla ^{2}\Delta +12\left(
2+\left\vert W\right\vert ^{-2}\right) \left\vert d\Delta \right\vert ^{2}
\label{scalY} \\
&&+12\left( 5+3\left\vert W\right\vert ^{-2}\right) \left( \lambda
_{1}^{2}+\lambda _{2}^{2}\right) e^{-2\Delta }-18\left( 1+2\left\vert
W\right\vert ^{-2}\right) \mu ^{2}e^{-8\Delta }.  \notag
\end{eqnarray}%
Subtracting (\ref{scalX}) from (\ref{scalY}), we obtain 
\begin{equation}
\left( \left\vert W\right\vert ^{-2}-\left\vert W\right\vert ^{2}\right)
\left( -2\left\vert G\right\vert ^{2}+3\nabla ^{2}\Delta +12\left\vert
d\Delta \right\vert ^{2}+36\left( \lambda _{1}^{2}+\lambda _{2}^{2}\right)
e^{-2\Delta }-36\mu ^{2}e^{-8\Delta }\right) =0.  \label{scaldiff}
\end{equation}%
Therefore, we see that if $\left\vert W\right\vert ^{2}\neq 1$, equality of (%
\ref{scalX}) and (\ref{scalY}) is precisely equivalent to equation (\ref%
{ric4cond}), which is the $4$-dimensional Einstein's equation in
supergravity. As we have argued previously, when $\mu \neq 0$, $\left\vert
W\right\vert ^{2}$ cannot be equal to $1$ in a neighborhood, therefore, if (%
\ref{ric4cond}) holds where $\left\vert W\right\vert ^{2}\neq 1$, by
continuity it will have to hold where $\left\vert W\right\vert ^{2}=1$. If
on the other hand, $\mu =0$, we already assume that $w_{0}=0$, so $%
\left\vert W\right\vert ^{2}=1$ implies that the complex octonion $Z=X+iY$
is a zero divisor. However, if $\lambda \neq 0$, again we know that $Z$
cannot be a zero divisor everywhere in a neighborhood, so again, we can
conclude by continuity that (\ref{ric4cond}) holds everywhere.

Now using (\ref{ric4cond}) to eliminate $\left\vert \lambda \right\vert ^{2}$
from (\ref{scalX}), we precisely obtain (\ref{scal7}), which is the scalar
curvature expression obtained from the $7$-dimensional Einstein's equation.
Therefore, the two integrability conditions that lie in the $\mathbf{1}$
representation give us%
\begin{subequations}%
\label{1conds}%
\begin{eqnarray}
\frac{1}{4}R &=&\nabla ^{2}\Delta +\left\vert d\Delta \right\vert ^{2}+\frac{%
5}{6}\left\vert G\right\vert ^{2}+\frac{21}{2}\mu ^{2}e^{-8\Delta }
\label{1cond1} \\
\nabla ^{2}\Delta &=&12\mu ^{2}e^{-8\Delta }+\frac{2}{3}\left\vert
G\right\vert ^{2}-4\left\vert d\Delta \right\vert ^{2}-12\left\vert \lambda
\right\vert ^{2}e^{-2\Delta }.  \label{1cond2}
\end{eqnarray}%
\end{subequations}%

We may now attempt to obtain further conditions by calculating the Ricci
tensor using $T^{\left( X\right) }$ and $T^{\left( Y\right) },$ and then
equating. These are long computations that have been completed in \emph{Maple%
}, but below we outline the general approach. To express $\func{Ric}$ in
terms of octonions, let us first extend the Dirac operator (\ref{defDiracOp}%
) to $\Omega ^{1}\left( \mathbb{O}M\right) ,$ so that if $\hat{P}=\left(
P_{0},P\right) \in \Omega ^{1}\left( \mathbb{O}M\right) $, 
\begin{equation}
\slashed{D}\hat{P}_{a}=\delta ^{i}\left( D_{i}\hat{P}_{a}\right) .
\label{defDirac1form}
\end{equation}%
Similarly, define 
\begin{eqnarray}
\slashed{\nabla}\hat{P}_{a} &=&\delta ^{i}\left( \nabla _{i}\hat{P}%
_{a}\right) =\left( 
\begin{array}{c}
-\nabla _{i}P_{a}^{\ i} \\ 
\nabla ^{\alpha }\left( P_{0}\right) _{a}+\nabla _{i}P_{aj}^{\ }\varphi
^{ij\alpha }%
\end{array}%
\right)  \notag \\
&:&=\left( 
\begin{array}{c}
-\left( \func{div}P^{t}\right) _{a} \\ 
\func{grad}\left( P_{0}\right) _{a}+\func{curl}\left( P^{t}\right)%
\end{array}%
\right) .  \label{nabslash}
\end{eqnarray}%
We can rewrite (\ref{defDirac1form}) as 
\begin{equation*}
\delta ^{i}\left( D_{i}\hat{P}_{a}\right) =\delta ^{i}\left( D_{i}\left(
\left( P_{0}\right) _{a}+\delta ^{j}P_{aj}\right) \right)
\end{equation*}%
and hence 
\begin{equation}
\slashed{D}\hat{P}=\slashed{\nabla}\hat{P}+P_{0}\left( \slashed{D}1\right)
-P_{aj}\delta ^{i}\left( \delta ^{j}T_{i}\right) .  \label{DirTdef}
\end{equation}
Using these definitions, we have the following.

\begin{lemma}
\label{lemRicOct}Given a $G_{2}$-structure $\varphi $ with torsion $T$, the
Ricci curvature of the corresponding metric is given by 
\begin{eqnarray}
\frac{1}{2}\widehat{\func{Ric}}_{j} &=&-\delta ^{i}\left( \left(
d_{D}T\right) _{ij}\right)  \notag \\
&=&-\slashed{D}T-D\slashed{D}1  \label{RicDT}
\end{eqnarray}%
where $\widehat{\func{Ric}}_{a}=\func{Ric}_{ab}\delta ^{b}$ is an $\func{Im}%
\mathbb{O}$-valued $1$-form.
\end{lemma}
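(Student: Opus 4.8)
The plan is to obtain both asserted equalities by contracting the integrability condition (\ref{dtinteg2}), $d_{D}T=\tfrac14\pi_{7}\func{Riem}$, with $\delta^{i}$ in the octonion product, i.e.\ by applying the octonionic Clifford multiplication to the free index $i$ and summing. Writing $(d_{D}T)_{ij}$ for the $\func{Im}\mathbb{O}M$-valued $2$-form with components $(d_{D}T)_{ij}^{\ \ \alpha}=\nabla_{i}T_{j}^{\ \alpha}-\nabla_{j}T_{i}^{\ \alpha}+2T_{i}^{\ \beta}T_{j}^{\ \gamma}\varphi^{\alpha}_{\ \beta\gamma}$ as read off from (\ref{dtinteg}), the quantity $-\delta^{i}\big((d_{D}T)_{ij}\big)$ is an $\mathbb{O}M$-valued $1$-form, and I would compute it in two ways: once through the curvature side of (\ref{dtinteg2}) to get $\tfrac12\widehat{\func{Ric}}_{j}$, and once through the Dirac side to get $-(\slashed{D}T)_{j}-D_{j}(\slashed{D}1)$.

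For the curvature side, I would use the octonion product formula of Lemma \ref{lemDelProps} to split $-\delta^{i}\big(\tfrac14\pi_{7}\func{Riem}\big)_{ij}$ into its $\mathbb{O}$-real and $\mathbb{O}$-imaginary parts. The $\mathbb{O}$-real part is proportional to $\func{Riem}_{kj}^{\ \ \beta\gamma}\varphi^{k\beta\gamma}$; since $\varphi^{k\beta\gamma}$ is totally antisymmetric this only sees the total antisymmetrization of $\func{Riem}$ over the three indices $k,\beta,\gamma$, which vanishes by the first Bianchi identity together with the pair symmetry of $\func{Riem}$. The $\mathbb{O}$-imaginary part is proportional to $\func{Riem}_{ij}^{\ \ \beta\gamma}\varphi^{a}_{\ \beta\gamma}\varphi_{a}^{\ im}$; applying the contraction identity (\ref{phiphi1}) in the form $\varphi^{a}_{\ \beta\gamma}\varphi_{a}^{\ im}=\delta^{i}_{\beta}\delta^{m}_{\gamma}-\delta^{i}_{\gamma}\delta^{m}_{\beta}+\psi_{\beta\gamma}^{\ \ im}$, the $\psi$-term again dies by the first Bianchi identity, and the two Kronecker terms each collapse to $-\func{Ric}_{j}^{\ m}$ by the curvature convention (\ref{riemconvention}). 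This yields $-\delta^{i}\big((d_{D}T)_{ij}\big)=\tfrac12\func{Ric}_{jm}\delta^{m}=\tfrac12\widehat{\func{Ric}}_{j}$, which is just (\ref{torsricci}) transcribed into octonion language.

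For the Dirac side I would expand everything in terms of $D$, $\nabla$ and $T$. From $D_{i}1=-T_{i}$ one gets $\slashed{D}1=-\delta^{i}T_{i}$, so $D_{j}(\slashed{D}1)=\nabla_{j}(\slashed{D}1)-(\slashed{D}1)T_{j}$, where $\nabla_{j}(\delta^{i}T_{i})$ is computed with the modified Leibniz rule (\ref{nablaXAB}) for $\nabla$ on the octonion product (which carries an associator correction) and the fact that $\delta$ is parallel. Likewise $(\slashed{D}T)_{j}=\delta^{i}D_{i}T_{j}=\delta^{i}\nabla_{i}T_{j}-\delta^{i}(T_{j}T_{i})$, in agreement with (\ref{DirTdef}) and (\ref{nabslash}) for the purely imaginary $1$-form $T$. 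Collecting terms and using $2(T_{i}\times T_{j})=T_{i}T_{j}-T_{j}T_{i}$, the difference between $-\delta^{i}\big((d_{D}T)_{ij}\big)$ and $-(\slashed{D}T)_{j}-D_{j}(\slashed{D}1)$ reduces to $\sum_{i}\big([\delta^{i},T_{i},T_{j}]-[T_{j},\delta^{i},T_{i}]\big)$, which vanishes because the associator of the fiberwise alternative algebra $\mathbb{O}M$ is an alternating trilinear form and hence invariant under the cyclic permutation $(T_{j},\delta^{i},T_{i})\mapsto(\delta^{i},T_{i},T_{j})$. The main obstacle is the curvature-side bookkeeping: tracking index positions consistently with (\ref{riemconvention}) and invoking the first Bianchi identity in the two separate places where it is needed; the Dirac-side identity is then routine, provided the associator correction in (\ref{nablaXAB}) is kept and the alternating property of the associator is exploited.
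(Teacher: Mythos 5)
Your proposal is correct and follows the paper's own route: contract $d_{D}T=\tfrac14\pi_{7}\func{Riem}$ with $\delta^{i}$, split the result into $\mathbb{O}$-real and $\mathbb{O}$-imaginary parts, kill the real part and the $\psi$-term by the first Bianchi identity together with pair symmetry, and read off $\func{Ric}$ from the Kronecker terms of the contraction identity (\ref{phiphi1}). Where you diverge is in the Dirac-side bookkeeping: you unwind $D_{j}(\delta^{i}T_{i})$ via $D_{X}A=\nabla_{X}A-AT_{X}$ and the $\nabla$-Leibniz rule (\ref{nablaXAB}), which introduces an associator, and then cancel $[\delta^{i},T_{i},T_{j}]$ against $[T_{j},\delta^{i},T_{i}]$ by alternation. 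The paper instead applies the partial Leibniz rule (\ref{octocovprod}), $D_{X}(AB)=(\nabla_{X}A)B+A(D_{X}B)$, directly to $\delta^{i}T_{i}$; since $\nabla\delta=0$ this gives $D_{j}(\delta^{i}T_{i})=\delta^{i}(D_{j}T_{i})$ in one step, with no associator to track. Both routes are valid, but the second is noticeably shorter precisely because $D$ was engineered so that (\ref{octocovprod}) carries no associator correction. One small sign slip worth flagging: with your index placement and the convention (\ref{riemconvention}), each Kronecker term of (\ref{phiphi1}) contracted against $\func{Riem}_{ij}^{\ \ \beta\gamma}$ gives $+\func{Ric}_{j}^{\ m}$ rather than $-\func{Ric}_{j}^{\ m}$; this is absorbed by your unstated proportionality constant, so the conclusion is unaffected.
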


\begin{proof}
Using (\ref{dtinteg2}), we have%
\begin{eqnarray*}
\delta ^{i}\left( \left( d_{D}T\right) _{ij}\right) &=&\frac{1}{4}\delta
^{i}\left( \left( 
\begin{array}{c}
0 \\ 
\left( \pi _{7}\func{Riem}\right) _{ij}^{\ \ }%
\end{array}%
\right) \right) \\
&=&\frac{1}{4}\left( 
\begin{array}{c}
\left\langle \delta ^{i},\left( \pi _{7}\func{Riem}\right) _{ij}^{\ \
}\right\rangle \\ 
\delta ^{i}\times \left( \pi _{7}\func{Riem}\right) _{ij}^{\ \ }%
\end{array}%
\right) .
\end{eqnarray*}%
However, 
\begin{eqnarray*}
\left\langle \delta ^{i},\left( \pi _{7}\func{Riem}\right) _{ij}^{\ \
}\right\rangle &=&\delta ^{i\alpha }R_{ij\beta \gamma }\varphi _{\alpha }^{\
\beta \gamma }=R_{\alpha j\beta \gamma }\varphi ^{\alpha \beta \gamma }=0 \\
\left( \delta ^{i}\times \left( \pi _{7}\func{Riem}\right) _{ij}^{\ \
}\right) ^{\alpha } &=&\delta _{\beta }^{i}R_{ijmn}\varphi _{\gamma }^{\
mn}\varphi ^{\alpha \beta \gamma } \\
&=&R_{ijmn}\varphi _{\gamma }^{\ mn}\varphi ^{\alpha i\gamma
}=R_{ijmn}\left( \psi ^{mn\alpha i}+g^{m\alpha }g^{ni}-g^{mi}g^{n\alpha
}\right) \\
&=&-2\func{Ric}_{\ j}^{\alpha }.
\end{eqnarray*}%
Hence, we get the expression (\ref{RicDT}).

On the other hand, 
\begin{eqnarray*}
\delta ^{i}\left( \left( d_{D}T\right) _{ij}\right) &=&\delta ^{i}\left(
D_{i}T_{j}-D_{j}T_{i}\right) =\slashed{D}T_{j}-\delta ^{i}\left(
D_{j}T_{i}\right) \\
&=&\slashed{D}T_{j}-D_{j}\left( \delta ^{i}T_{i}\right) =\slashed{D}%
T_{j}-D_{j}\NEG{D}1.
\end{eqnarray*}
\end{proof}

Recall from (\ref{Dir1}) that $\slashed{D}1=\left( 
\begin{array}{c}
7\tau _{1} \\ 
-6\tau _{7}%
\end{array}%
\right) ,$ so only the $\slashed{D}T$ term in (\ref{RicDT}) involves
derivatives of $\tau _{14}$ and $\tau _{27}$. The expression (\ref{RicDT})
in fact contains all integrability conditions except one of the $\mathbf{27}$
components (related to $\func{Ric}^{\ast }$) and the $\mathbf{64}$
component. In fact, taking the $\mathbb{O}$-real part of (\ref{RicDT}) gives
a $\mathbf{7}$ component, taking the projections to $\Lambda _{7}^{2}$ and $%
\Lambda _{14}^{2}$ of the $\mathbb{O}$-imaginary part gives another $\mathbf{%
7}$ component and the $\mathbf{14}$ component, respectively. Of course, the
symmetric part of the $\mathbb{O}$-imaginary part of (\ref{RicDT}) gives the 
$\mathbf{1\oplus 27}$ components of the integrability conditions (scalar
curvature and the traceless Ricci tensor).

Now consider a $G_{2}$-structure $\varphi _{V}$ for some nowhere-vanishing
octonion section $V$. Then, (\ref{dtinteg2}) becomes 
\begin{eqnarray*}
\frac{1}{4}\pi _{7}^{\left( V\right) }\func{Riem} &=&d_{D^{\left( V\right)
}}T^{\left( V\right) } \\
&=&d_{D^{\left( V\right) }}\left( -\left( DV\right) V^{-1}+d\ln \left\vert
V\right\vert \right) \\
&=&-\left( d_{D}^{2}V\right) V^{-1}+\left( d\ln \left\vert V\right\vert
\right) \wedge \left( DV\right) V^{-1}+d\ln \left\vert V\right\vert \wedge
T^{\left( V\right) } \\
&=&-\left( d_{D}^{2}V\right) V^{-1}
\end{eqnarray*}%
Thus, 
\begin{equation*}
\left( d_{D}^{2}V\right) =-\frac{1}{4}\left( \pi _{7}^{\left( V\right) }%
\func{Riem}\right) V
\end{equation*}%
where $T^{\left( V\right) }$ is the torsion of $\varphi _{V}$ and we have
used (\ref{TorsV2}). Also, using (\ref{RicDT}) for $T^{\left( V\right) }$,
we have 
\begin{eqnarray*}
\frac{1}{2}\widehat{\func{Ric}}_{j} &=&-\delta ^{i}\circ _{V}\left( \left(
d_{D^{\left( V\right) }}T^{\left( V\right) }\right) _{ij}\right) \\
&=&-\delta ^{i}\circ _{V}\left( d_{D}\left( T^{\left( V\right) }V\right)
_{ij}V^{-1}-\left( \partial _{i}\ln \left\vert V\right\vert \right)
T_{j}^{\left( V\right) }+\left( \partial _{j}\ln \left\vert V\right\vert
\right) T_{i}^{\left( V\right) }\right) \\
&=&-\left( \delta ^{i}\left( d_{D}\left( T^{\left( V\right) }V\right)
_{ij}\right) \right) V^{-1}+\left( \left( \hat{\partial}\ln \left\vert
V\right\vert \right) \left( T_{j}^{\left( V\right) }V\right) \right) V^{-1}
\\
&&+\left( \partial _{j}\ln \left\vert V\right\vert \right) \slashed{D}%
^{\left( V\right) }1
\end{eqnarray*}%
where we have also used (\ref{associd1}) and (\ref{DtildeAV3}). Now
moreover, 
\begin{eqnarray*}
\delta ^{i}\left( d_{D}\left( T^{\left( V\right) }V\right) _{ij}\right)
&=&\delta ^{i}\left( D_{i}\left( T_{j}^{\left( V\right) }V\right)
-D_{j}\left( T_{i}^{\left( V\right) }V\right) \right) \\
&=&\slashed{D}\left( T_{j}^{\left( V\right) }V\right) -D_{j}\left( \delta
^{i}\left( T_{i}^{\left( V\right) }V\right) \right) \\
&=&\slashed{D}\left( T_{j}^{\left( V\right) }V\right) +D_{j}\left( \left( 
\NEG{D}^{\left( V\right) }1\right) V\right) .
\end{eqnarray*}%
Hence, we find that 
\begin{eqnarray}
\frac{1}{2}\widehat{\func{Ric}} &=&-\left( \slashed{D}\left( T^{\left(
V\right) }V\right) \right) V^{-1}-\left( D\left( \left( \slashed{D}^{\left(
V\right) }1\right) V\right) \right) V^{-1}  \label{RicTV} \\
&&+\left( \left( \hat{\partial}\ln \left\vert V\right\vert \right) \left(
T_{j}^{\left( V\right) }V\right) \right) V^{-1}+\left( \partial _{j}\ln
\left\vert V\right\vert \right) \slashed{D}^{\left( V\right) }1  \notag
\end{eqnarray}%
and then by comparing with (\ref{RicDT}), which we can also transpose since $%
\func{Ric}$ is symmetric, we conclude the following.

\begin{lemma}
\label{LemRicSame}Suppose $\varphi $ and $\varphi _{V}=\sigma _{V}\left(
\varphi \right) $ are isometric $G_{2}$-structures with torsion $T$ and $%
T^{\left( V\right) }$, respectively. Then, 
\begin{eqnarray}
\left( \slashed{D}T\right) V-\slashed{D}\left( T^{\left( V\right) }V\right)
&=&D\left( \left( \slashed{D}^{\left( V\right) }1\right) V\right) -\left( D%
\NEG{D}1\right) V  \label{TVcomp} \\
&&-\left( \left( \hat{\partial}\ln \left\vert V\right\vert \right) \left(
T_{j}^{\left( V\right) }V\right) \right) V^{-1}-\left( \partial _{j}\ln
\left\vert V\right\vert \right) \slashed{D}^{\left( V\right) }1  \notag
\end{eqnarray}%
and equivalently, 
\begin{eqnarray}
\left( \slashed{D}T\right) ^{t}V-\slashed{D}\left( T^{\left( V\right)
}V\right) &=&D\left( \left( \slashed{D}^{\left( V\right) }1\right) V\right)
-\left( D\NEG{D}1\right) ^{t}V  \label{TVcomp2} \\
&&-\left( \left( \hat{\partial}\ln \left\vert V\right\vert \right) \left(
T_{j}^{\left( V\right) }V\right) \right) V^{-1}-\left( \partial _{j}\ln
\left\vert V\right\vert \right) \slashed{D}^{\left( V\right) }1  \notag
\end{eqnarray}
\end{lemma}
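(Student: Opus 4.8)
The plan is to leverage the fact that $\varphi$ and $\varphi_V$ are isometric: they share the Levi-Civita connection and hence the same Ricci tensor, so the two octonionic expressions for $\tfrac{1}{2}\widehat{\func{Ric}}$ --- the one obtained by applying Lemma \ref{lemRicOct} (equation (\ref{RicDT})) to the reference structure $\varphi$, and the one in (\ref{RicTV}) which was derived for $\varphi_V$ --- must coincide. The identities (\ref{TVcomp}) and (\ref{TVcomp2}) are then nothing more than what this coincidence becomes after right-multiplying by $V$.

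Concretely, I would first write (\ref{RicDT}) for $\varphi$, namely $\tfrac{1}{2}\widehat{\func{Ric}} = -\slashed{D}T - D\slashed{D}1$, and equate its right-hand side with that of (\ref{RicTV}). I would then multiply the resulting equality on the right by $V$. The terms appearing in (\ref{RicTV}) that carry a trailing factor $V^{-1}$ collapse under right-multiplication by $V$: since $V$ is nowhere a zero divisor in the (alternative) bioctonion algebra, Artin's theorem guarantees that $V$, $V^{-1}$ and any third section generate an associative subalgebra, whence $(XV^{-1})V = X$. After this simplification and plain distributivity, moving $\slashed{D}(T^{(V)}V)$ and $(D\slashed{D}1)V$ across the equality isolates $(\slashed{D}T)V - \slashed{D}(T^{(V)}V)$ on the left and leaves exactly the right-hand side of (\ref{TVcomp}), once the logarithmic-derivative corrections inherited from (\ref{DtildeAV3})/(\ref{RicTV}) are placed on the appropriate side with the correct residual $V$- or $V^{-1}$-factor.

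For the transposed identity (\ref{TVcomp2}), I would first check that $\widehat{\func{Ric}}$ equals its own transpose in the sense of (\ref{Atranspose}): it has vanishing $\mathbb{O}$-real part, and $\langle \delta_i, \widehat{\func{Ric}}_j\rangle = \func{Ric}_{ji} = \func{Ric}_{ij}$ by symmetry of the Ricci tensor, so $(\widehat{\func{Ric}})^t = \widehat{\func{Ric}}$. Consequently (\ref{RicDT}) may be replaced by $\tfrac{1}{2}\widehat{\func{Ric}} = -(\slashed{D}T)^t - (D\slashed{D}1)^t$, and repeating the same right-multiplication argument with this transposed form in place of the original produces (\ref{TVcomp2}).

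The conceptual content is minimal; the only real obstacle is bookkeeping of non-associativity. One must verify that the sole non-trivial manipulations used are the alternative cancellations $(XV^{-1})V = X$ together with distributivity --- in particular that no associator corrections are introduced when regrouping the products $D((\slashed{D}^{(V)}1)V)$, $(\slashed{D}T)V$, $\slashed{D}(T^{(V)}V)$ and the correction terms --- and that each logarithmic-derivative term ends up with the $V$- or $V^{-1}$-weight dictated by (\ref{RicTV}). Once these placements are tracked carefully, both (\ref{TVcomp}) and (\ref{TVcomp2}) follow by pure rearrangement of the two curvature formulas.
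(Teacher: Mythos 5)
Your proposal is correct and is essentially the paper's own argument: the paper derives (\ref{RicTV}) for $\varphi_V$ and obtains the lemma precisely by comparing it with (\ref{RicDT}) for $\varphi$ (the two being equal since isometric $G_{2}$-structures share the same metric and hence the same $\func{Ric}$), right-multiplying by $V$ and using the alternativity cancellation $(XV^{-1})V=X$. Your observation that $(\widehat{\func{Ric}})^{t}=\widehat{\func{Ric}}$ by symmetry of the Ricci tensor is exactly the paper's justification for the transposed identity (\ref{TVcomp2}).
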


We will use Lemma \ref{LemRicSame} to our torsions $T^{\left( X\right) }$
and $T^{\left( Y\right) }$, however, recall from (\ref{TXexp}) that our
expression for $T^{\left( X\right) }$ involves the transpose of an octonion
product, so in Lemma \ref{lemDtrans} below, we find how to work out $%
\slashed{D}$ of such expressions.

\begin{lemma}
\label{lemDtrans}Suppose $\hat{P}$ is an $\mathbb{O}M$-valued $1$-form.
Then, for any $V\in \Gamma \left( \mathbb{O}M\right) $, 
\begin{eqnarray}
\slashed{D}\left( \left( \hat{P}V\right) _{a}^{t}\right) &=&\left( \left( 
\func{curl}\hat{P}\right) ^{t}V\right) _{a}^{t}-\left\langle \left( \func{%
curl}\hat{P}\right) _{a}^{t}+\delta _{a}\left( \func{div}\hat{P}\right) ,%
\bar{V}\right\rangle \hat{1}  \label{DirPWt} \\
&&+\slashed{D}\left\langle \hat{P}_{a},\bar{V}\right\rangle +\left\langle 
\hat{P}_{j},\delta _{a}\left( \nabla _{i}\bar{V}\right) \right\rangle \delta
^{i}\delta ^{j}  \notag \\
&&-\left\langle \hat{P}_{j},\delta _{a}\bar{V}\right\rangle \delta
^{i}\left( \delta ^{j}T_{i}\right) -\left\langle \hat{P}_{j},\left[
T_{i},\delta _{a},\bar{V}\right] \right\rangle \delta ^{i}\delta ^{j}. 
\notag
\end{eqnarray}
\end{lemma}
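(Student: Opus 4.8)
The plan is to expand $(\hat P V)^t$ explicitly, apply $\slashed{D}=\delta^i D_i$ directly, and then regroup the resulting terms. First I would use the transpose definition (\ref{Atranspose}) together with the adjunction identity $\langle AB,C\rangle=\langle B,\bar A C\rangle$ (and its conjugate $\langle AB,C\rangle=\langle A,C\bar B\rangle$) to write
\begin{equation*}
\left(\hat P V\right)^t_a=\left\langle \hat P_a,\bar V\right\rangle\hat 1+\left\langle \hat P_b,\delta_a\bar V\right\rangle\delta^b,
\end{equation*}
so that the computation reduces to applying $\slashed{D}$ to a function-times-$\hat 1$ term and to an $\func{Im}\mathbb{O}M$-valued term whose $b$-component is the scalar $g_b:=\langle\hat P_b,\delta_a\bar V\rangle$. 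The first piece reproduces $\slashed{D}\langle\hat P_a,\bar V\rangle$ verbatim, using $D_i(f\hat 1)=(\partial_i f)\hat 1-fT_i$ and the definition of $\slashed{D}$ on functions.

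For the second piece I would apply $D_i$ via the partial-derivation rule (\ref{octocovprod}) and the parallelism of $\delta$ and $\hat 1$, obtaining $D_i(g_b\delta^b)=(\partial_i g_b)\delta^b-g_b(\delta^b T_i)$ from $D_i\delta^b=-\delta^b T_i$; contracting with $\delta^i$ gives $(\partial_i g_b)\,\delta^i\delta^b-g_b\,\delta^i(\delta^b T_i)$. The second summand is already the torsion term $-\langle\hat P_j,\delta_a\bar V\rangle\,\delta^i(\delta^j T_i)$ from the statement (the analogue of the last term of (\ref{DirTdef})). For the first summand I differentiate $g_b$ by metric-compatibility, $\partial_i g_b=\langle\nabla_i\hat P_b,\delta_a\bar V\rangle+\langle\hat P_b,\nabla_i(\delta_a\bar V)\rangle$; the crucial point is that $\nabla$ is \emph{not} an ordinary derivation of the octonion product but obeys (\ref{nablaXAB}), so $\nabla_i(\delta_a\bar V)=\delta_a(\nabla_i\bar V)-[T_i,\delta_a,\bar V]$. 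This produces exactly the remaining two terms in which $\bar V$ is differentiated, namely $\langle\hat P_j,\delta_a(\nabla_i\bar V)\rangle\,\delta^i\delta^j$, together with the associator term $-\langle\hat P_j,[T_i,\delta_a,\bar V]\rangle\,\delta^i\delta^j$.

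It then remains to identify the leftover piece $\langle\nabla_i\hat P_b,\delta_a\bar V\rangle\,\delta^i\delta^b$ with $\big((\func{curl}\hat P)^t V\big)^t_a-\langle(\func{curl}\hat P)^t_a+\delta_a(\func{div}\hat P),\bar V\rangle\hat 1$. For this I would use $\delta^i\delta^b=(-g^{ib},\,\varphi^{ib\gamma})$ from Lemma \ref{lemDelProps}(1): the $-g^{ib}$ part collapses $\nabla_i\hat P_b$ to the divergence $\func{div}\hat P$ (as packaged in (\ref{nabslash})) and, after moving $\bar V$ out of the inner product, lands in the $\hat 1$-component; the $\varphi^{ib\gamma}$ part is precisely the $\varphi$-twisted derivative that (\ref{nabslash}) calls $\func{curl}\hat P$, and resolving the octonion product $(\func{curl}\hat P)^t V$ into its $\mathbb{O}$-real and $\mathbb{O}$-imaginary parts and then transposing in the free index $a$ distributes it between the $\big((\func{curl}\hat P)^t V\big)^t_a$ term and the rest of the $\hat 1$-coefficient. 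Matching signs throughout then yields the identity.

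The step I expect to be the main obstacle is precisely this last identification: one must be scrupulous about which slot each transpose acts on (the outer $(\cdot)^t_a$ transposes in $a$, whereas $(\func{curl}\hat P)^t$ transposes the internal derivative index against the $\mathbb{O}$-imaginary index), about the non-associative ordering $\delta^i(\delta^b T_i)$ versus $(\delta^i\delta^b)T_i$, and about the real/imaginary bookkeeping hidden inside $\langle\,\cdot\,,\delta_a\bar V\rangle$ and inside the product $(\func{curl}\hat P)^t V$. The sign conventions $\psi=\ast\varphi$ (Bryant's orientation) and the normalisation of the torsion $T$ used throughout must be respected when invoking Lemma \ref{lemDelProps} and (\ref{nabslash}).
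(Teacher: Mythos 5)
Your proposal follows essentially the same route as the paper's proof: expand $\left(\hat{P}V\right)_{a}^{t}=\left\langle \hat{P}_{a},\bar{V}\right\rangle \hat{1}+\left\langle \hat{P}_{j},\delta _{a}\bar{V}\right\rangle \delta ^{j}$, apply $\slashed{D}$ using the parallelism of $\delta$ and $D_{i}\delta ^{j}=-\delta ^{j}T_{i}$, differentiate the scalar coefficients, and identify the remaining $\left\langle \nabla _{i}\hat{P}_{j},\delta _{a}\bar{V}\right\rangle \delta ^{i}\delta ^{j}$ piece with the $\func{curl}$ and $\func{div}$ terms. The only cosmetic difference is that you use metric compatibility of $\nabla$ together with (\ref{nablaXAB}), whereas the paper invokes (\ref{octocovprod}) and (\ref{DXmet}); the torsion corrections recombine into the same associator term, so the two computations produce identical terms and your argument is correct.
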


\begin{proof}
We have 
\begin{eqnarray*}
\slashed{D}\left( \left( \hat{P}V\right) _{a}^{t}\right) &=&\delta
^{i}\left( D_{i}\left( \left( \hat{P}V\right) _{a}^{t}\right) \right) \\
&=&D_{i}\left( \delta ^{i}\left( \hat{P}V\right) _{a}^{t}\right) .
\end{eqnarray*}%
However, 
\begin{equation*}
\delta ^{i}\left( \hat{P}V\right) _{a}^{t}=\left\langle \hat{P}_{a},\bar{V}%
\right\rangle \delta ^{i}+\left\langle \hat{P}_{j},\delta _{a}\bar{V}%
\right\rangle \delta ^{i}\delta ^{j}.
\end{equation*}%
Then, using (\ref{octocovprod}) and (\ref{DXmet}), we get%
\begin{eqnarray*}
D_{i}\left( \delta ^{i}\left( \hat{P}V\right) _{a}^{t}\right) &=&\NEG%
{D}\left\langle \hat{P}_{a},\bar{V}\right\rangle +\left\langle \nabla _{i}%
\hat{P}_{j},\delta _{a}\bar{V}\right\rangle \delta ^{i}\delta ^{j} \\
&&+\left\langle \hat{P}_{j},\delta _{a}\left( \nabla _{i}\bar{V}\right)
\right\rangle \delta ^{i}\delta ^{j}-\left\langle \hat{P}_{j},\left[
T_{i},\delta _{a},\bar{V}\right] \right\rangle \delta ^{i}\delta ^{j} \\
&&-\left\langle \hat{P}_{j},\delta _{a}\bar{V}\right\rangle \delta
^{i}\left( \delta ^{j}T_{i}\right) \\
&=&-\left\langle \left( \func{div}\hat{P}\right) V,\delta _{a}\right\rangle 
\hat{1}+\left\langle \left( \func{curl}\hat{P}\right) _{k}^{t}V,\delta
_{a}\right\rangle \delta ^{k} \\
&&+\slashed{D}\left\langle \hat{P}_{a},\bar{V}\right\rangle +\left\langle 
\hat{P}_{j},\delta _{a}\left( \nabla _{i}\bar{V}\right) \right\rangle \delta
^{i}\delta ^{j} \\
&&-\left\langle \hat{P}_{j},\delta _{a}\bar{V}\right\rangle \delta
^{i}\left( \delta ^{j}T_{i}\right) -\left\langle \hat{P}_{j},\left[
T_{i},\delta _{a},\bar{V}\right] \right\rangle \delta ^{i}\delta ^{j}
\end{eqnarray*}%
from which we obtain (\ref{DirPWt}).
\end{proof}

\qquad Now recall from (\ref{TXexp}) and (\ref{TYexp0})that we can write $%
T^{\left( X\right) }$ and $T^{\left( Y\right) }W$ as: 
\begin{subequations}%
\begin{eqnarray}
T^{\left( X\right) } &=&-\left( \hat{Q}\bar{W}\right) ^{t}+\left(
K_{1}^{\left( X\right) }\right) ^{t} \\
T^{\left( Y\right) }W &=&\hat{Q}+K_{1}^{\left( Y\right) }
\end{eqnarray}%
\end{subequations}%
where 
\begin{subequations}
\begin{eqnarray}
\left( K_{1}^{\left( X\right) }\right) ^{t} &=&-\frac{1}{4}\delta \left( 
\begin{array}{c}
-\frac{1}{2}\lambda _{2}e^{-\Delta } \\ 
\partial \Delta 
\end{array}%
\right) +\left( \lambda _{1}e^{-\Delta }w-\partial \Delta \right) \hat{1} \\
K_{1}^{\left( Y\right) } &=&-\frac{1}{4}\left( \delta \left( \left( \widehat{%
\partial \Delta }\right) W\right) \right) +\lambda _{1}e^{-\Delta }\delta +%
\frac{1}{2}\lambda _{2}e^{-\Delta }\delta W+\left( d\ln \left\vert
Y\right\vert \right) W \\
\hat{Q} &=&\left( 
\begin{array}{c}
-Q_{1} \\ 
\frac{1}{4}\left( Q_{0}-2\lambda _{1}e^{-\Delta }\right) \delta +Q_{2}%
\end{array}%
\right) .
\end{eqnarray}%
\end{subequations}%
Now from (\ref{DirPWt}), 
\begin{eqnarray}
\slashed{D}T^{\left( X\right) } &=&-\slashed{D}\left( \left( \hat{Q}\bar{W}%
\right) ^{t}\right) +\slashed{D}\left( K_{1}^{\left( X\right) }\right) ^{t} 
\notag \\
&=&-\left( \left( \func{curl}\hat{Q}\right) ^{t}\bar{W}\right)
_{a}^{t}+\left\langle \left( \func{curl}\hat{Q}\right) _{a}^{t}+\delta
_{a}\left( \func{div}\hat{Q}\right) ,W\right\rangle \hat{1}  \label{DirTX} \\
&&+\slashed{D}\left\langle \hat{Q}_{a},W\right\rangle -\left\langle \hat{Q}%
_{j},\delta _{a}\left( \nabla _{i}W\right) \right\rangle \delta ^{i}\delta
^{j}  \notag \\
&&+\left\langle \hat{Q}_{j},\delta _{a}W\right\rangle \delta ^{i}\left(
\delta ^{j}T_{i}^{\left( X\right) }\right) +\left\langle \hat{Q}_{j},\left[
T_{i}^{\left( X\right) },\delta _{a},W\right] \right\rangle \delta
^{i}\delta ^{j}.  \notag \\
&&+\slashed{D}\left\langle \left( K_{1}^{\left( X\right) }\right)
_{a},1\right\rangle +\left( \func{curl}K_{1}^{\left( X\right) }\right) _{a} 
\notag \\
&&-\left\langle \left( \func{curl}K_{1}^{\left( X\right) }\right)
_{a}+\delta _{a}\left( \func{div}K_{1}^{\left( X\right) }\right)
,1\right\rangle \hat{1}  \notag \\
&&-\left\langle \left( K_{1}^{\left( X\right) }\right) _{j},\delta
_{a}T_{i}\right\rangle \delta ^{i}\delta ^{j}-\left\langle \left(
K_{1}^{\left( X\right) }\right) _{j},\delta _{a}\right\rangle \delta
^{i}\left( \delta ^{j}T_{i}^{\left( X\right) }\right) .  \notag
\end{eqnarray}%
For brevity, we focus on terms that include derivatives of $Q_{2}$. From (%
\ref{Q0Q1rel2}) we know that the other components of $G$ - $Q_{0}$ and $Q_{1}
$ can be expressed in terms of $e^{-\Delta },d\Delta ,w,w_{0},\left\vert
W\right\vert ^{2}.$ Similarly, $\tau _{1}^{\left( X\right) }$ and $\tau
_{7}^{\left( X\right) }$, and hence $\slashed{D}1$, also depend only on
these variables. Also, from (\ref{dw0}) and (\ref{dWsq}) we know derivatives
of $w_{0}$ and $\left\vert W\right\vert ^{2}$ are also expressed in terms of
these variables and we know that $\nabla w$ and $T^{\left( X\right) }$ are
also expressed in terms of these variables and are linear $Q_{2}$. Also, $%
\nabla _{i}\left\langle \hat{Q}_{2},W\right\rangle =\nabla _{i}\left(
Q_{2}\left( w\right) \right) $ and 
\begin{eqnarray*}
\left\langle \left( \func{curl}\hat{Q}_{2}\right) _{a}^{t},W\right\rangle 
&=&\left( \nabla _{i}\left( Q_{2}\right) _{mj}\right) \varphi _{a}^{\
ij}w^{m} \\
&=&\nabla _{i}Q_{2}\left( w\right) _{j}\varphi _{a}^{\ ij}-\varphi _{a}^{\
ij}\left( Q_{2}\right) _{mj}\nabla _{i}w^{m},
\end{eqnarray*}%
but we know from (\ref{Q2wrel}) and (\ref{dwexp}) that $Q_{2}\left( w\right) 
$ and $\nabla w$ are also given as polynomial expressions in our set of
basic variables. Therefore, let us rewrite (\ref{DirTX}) as 
\begin{equation}
\left( \slashed{D}T^{\left( X\right) }\right) ^{t}=-\left( \func{curl}\hat{Q}%
_{2}\right) ^{t}\bar{W}+\left\langle \delta \left( \func{div}\hat{Q}%
_{2}\right) ,W\right\rangle \hat{1}+K_{2}^{\left( X\right) }  \label{DirTX2}
\end{equation}%
where $K_{2}^{\left( X\right) }$ contains no derivatives of $Q_{2}$. On the
other hand, using (\ref{nabslash}), 
\begin{eqnarray}
\slashed{D}\left( T^{\left( Y\right) }W\right)  &=&\slashed{D}\hat{Q}+\NEG%
{D}K_{1}^{\left( Y\right) }  \notag \\
&=&\slashed{\nabla}\hat{Q}-Q_{1}\left( \slashed{D}1\right) -\left( \func{Im}%
_{\mathbb{O}}\hat{Q}\right) _{aj}\delta ^{i}\left( \delta ^{j}T_{i}^{\left(
X\right) }\right)   \notag \\
&=&\func{curl}\hat{Q}_{2}-\left( \func{div}\hat{Q}_{2}\right) \hat{1}%
+K_{2}^{\left( Y\right) }  \label{DirTY2}
\end{eqnarray}%
where $K_{2}^{\left( X\right) }$ also contains no derivatives of $Q_{2}$.
Overall, from (\ref{TVcomp2}) we have, 
\begin{eqnarray}
\slashed{D}\left( T^{\left( Y\right) }W\right) -\left( \slashed{D}T^{\left(
X\right) }\right) ^{t}W &=&\left( D\slashed{D}1\right) ^{t}W-D\left( \left( 
\NEG{D}^{\left( W\right) }1\right) W\right)  \\
&&+\left( \left( \hat{\partial}\ln \left\vert W\right\vert \right) \left(
T_{j}^{\left( Y\right) }W\right) \right) W^{-1}+\left( \partial _{j}\ln
\left\vert W\right\vert \right) \slashed{D}^{\left( W\right) }1  \notag
\end{eqnarray}%
and hence, 
\begin{equation}
\func{curl}\hat{Q}_{2}+\left( \func{curl}\hat{Q}_{2}\right) ^{t}\left\vert
W\right\vert ^{2}-\left( \func{div}\hat{Q}_{2}\right) \hat{1}+\left\langle
\left( \func{div}\hat{Q}_{2}\right) \bar{W},\delta \right\rangle W=K_{3}
\label{curleq1}
\end{equation}%
where 
\begin{eqnarray}
K_{3} &=&\left( D\slashed{D}1\right) ^{t}W-D\left( \left( \slashed{D}%
^{\left( W\right) }1\right) W\right)  \\
&&+\left( \left( \hat{\partial}\ln \left\vert W\right\vert \right) \left(
T_{j}^{\left( Y\right) }W\right) \right) W^{-1}+\left( \partial _{j}\ln
\left\vert W\right\vert \right) \slashed{D}^{\left( W\right)
}1-K_{2}^{\left( Y\right) }+K_{2}^{\left( X\right) }W  \notag
\end{eqnarray}%
and also does not depend on derivatives of $Q_{2}.$ Taking the real part of (%
\ref{curleq1}), we find an expression for $\func{div}Q_{2}$:%
\begin{eqnarray}
\func{div}Q_{2} &=&-\frac{1}{6}GG\left( w\right) -\frac{3}{2}\left(
3+\left\vert W\right\vert ^{2}\right) Q_{2}\left( d\Delta \right) -2\left(
2w_{0}\lambda _{1}-\lambda _{2}\right) e^{-\Delta }Q_{1}-\frac{3}{28}%
d\left\langle w,d\Delta \right\rangle   \label{divQ2} \\
&&-\frac{3}{56}d\Delta \left( \left( 3\left\vert W\right\vert ^{2}+65\right)
\left\langle w,d\Delta \right\rangle -108w_{0}\mu e^{-4\Delta }+\left(
154+120w_{0}\lambda _{2}-50\left\vert W\right\vert ^{2}\lambda _{1}\right)
e^{-\Delta }\right)   \notag \\
&&+\frac{1}{14}w\left( 7\left\vert G\right\vert ^{2}+12w_{0}\lambda
_{1}\left( \lambda _{2}-\mu e^{-3\Delta }\right) e^{-2\Delta }+\left(
126\lambda _{1}^{2}-6\lambda _{2}^{2}\right) e^{-2\Delta }+12\lambda _{2}\mu
e^{-5\Delta }\right) .  \notag
\end{eqnarray}%
Using (\ref{divQ2}), we rewrite (\ref{curleq1}) as 
\begin{equation}
\left\vert W\right\vert ^{2}\left( \func{curl}\hat{Q}_{2}\right) ^{t}+\func{%
curl}\hat{Q}_{2}=K_{4}  \label{RicCond}
\end{equation}%
where $K_{4}$ does not contain derivatives of $Q_{2}$. Taking the transpose
of (\ref{RicCond}), and solving for $\func{curl}\hat{Q}_{2}$ we get%
\begin{equation}
\left( 1-\left\vert W\right\vert ^{4}\right) \left( \func{curl}\hat{Q}%
_{2}\right) =K_{4}-\left\vert W\right\vert ^{2}K_{4}^{t}.  \label{nabslashQ2}
\end{equation}%
Thus we see that if $\left\vert W\right\vert ^{2}\neq 1,$ we can obtain a
solution for $\func{curl}\hat{Q}_{2}$. In general, it's a very long
expression, which we will not write down.

Now using the expressions for $\func{div}Q_{2}$ and $\func{curl}\hat{Q}_{2}$
in (\ref{RicDT}) or (\ref{RicTV}), we can obtain (after some manipulations)
the expression for $\func{Ric}$ (\ref{ric7cond}) that comes from Einstein's
equations. Moreover, it can be seen that the expression for $\func{div}Q_{2}$
and $\func{curl}\hat{Q}_{2}$, together with (\ref{ric7cond}) and (\ref%
{1cond2}), do in fact imply (\ref{RicDT}) and (\ref{RicTV}). Hence we do not
obtain any other conditions.

Also, schematically, we can write 
\begin{subequations}%
\label{schemeom} 
\begin{eqnarray}
\ast dG &\sim &\func{Skew}\left( \func{curl}Q_{2}\right) +\text{terms
without derivatives of }Q_{2} \\
d\ast G &\sim &\iota _{\psi }\left( \func{Sym}\left( \func{curl}Q_{2}\right)
\right) +\left( \func{div}Q_{2}\right) \wedge \varphi \ +\text{terms without
derivatives of }Q_{2}.
\end{eqnarray}%
\end{subequations}%
It can be shown that from the expressions for $\func{curl}Q_{2}$ and $\func{%
div}Q_{2},$ together with (\ref{1cond2}), we do get $dG=0$ and $d\ast G$
satisfies the equation of motion (\ref{eom7D}). Moreover, since (\ref%
{schemeom}) also uniquely determine $\func{curl}Q_{2}$ and $\func{div}Q_{2}$%
, together with Einstein's equations (\ref{ric7cond}) and (\ref{1cond2}),
they are equivalent to the integrability conditions (\ref{RicDT}) and (\ref%
{RicTV}).

\section{Concluding remarks}

The main results in this paper were to reformulate $N=1$ Killing spinor
equations for compactifications of $11$-dimensional supergravity on $\func{%
AdS}_{4}$ in terms of the torsion of $G_{2}$-structures on $7$-manifolds. We
obtain either a single complexified $G_{2}$-structure that corresponds to
complex spinor that solves the $7$-dimensional Killing spinor equation, or
correspondingly 2 real $G_{2}$-structures. To obtain these results we used
the octonion bundle structure that was originally developed in \cite%
{GrigorianOctobundle}, which is a structure that is specific to $7$
dimensions. In the process, we also found that on a compact $7$-manifold,
the defining vector field $w$ that relates the two $\mathbb{C}$-real
octonions (or equivalently, spinors) is necessarily vanishing at some point.
This shows that the $SU\left( 3\right) $-structure defined by $w$ is not
globally defined, and is hence only local. This is in sharp contrast with
the $N=2$ case studied in \cite{SparksFlux}, where there is always a
nowhere-vanishing vector field on the $7$-manifold. 

In order to obtain the torsion of the complexified $G_{2}$-structure we
expressed the Killing spinor equation solutions as a complexified octonion
section $Z$. This had very different properties depending on whether it
could be a zero divisor or not. It is interesting to understand the physical
significance of the submanifold where $Z$ is a zero divisor. In the $N=2$
supersymmetry case, we would need to have $4$ $\mathbb{C}$-real octonion
sections, which could be combined together to give a \emph{quaternionic
octonion}, i.e. objects that pointwise lie in $\mathbb{H}\otimes \mathbb{O}.$
This should then define some kind of a quaternionic $G_{2}$-structure.
However, even from a pure algebraic point of view it would be challenging to
understand this, because this is not as simple as just changing the
underlying field from $\mathbb{R}$ to $\mathbb{C}$, as is the case for
complexified $G_{2}$-structures. However such an approach could give an
alternative point of view to the results in \cite{SparksFlux}.

\appendix

\section{Proofs of identities}

\begin{lemma}
\label{LemNablaZZst}\setcounter{equation}{0}Suppose $Z=X+iY$ satisfies the
equations (\ref{susyeqoct1a2a}), then 
\begin{equation}
\left\vert X\right\vert ^{2}+\left\vert Y\right\vert ^{2}=k_{1}e^{\Delta }
\label{xysol}
\end{equation}%
for an arbitrary real constant $k$.
\end{lemma}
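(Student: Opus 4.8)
Proof proposal. The plan is to show that the real-valued function $\langle Z,Z^{\ast}\rangle=|X|^{2}+|Y|^{2}$ (recall $\langle Z,Z^{\ast}\rangle$ is $\mathbb{C}$-real by (\ref{cxinprod}), and equals $|Z|^{2}>0$ everywhere since $Z=j_{\xi}(\theta)$ with $\theta$ nowhere-vanishing) satisfies
\[
d\langle Z,Z^{\ast}\rangle=\left(d\Delta\right)\langle Z,Z^{\ast}\rangle ,
\]
which integrates to $|X|^{2}+|Y|^{2}=k_{1}e^{\Delta}$. First I would use metric compatibility of $D$, i.e. (\ref{DXmet}) extended $\mathbb{C}$-bilinearly, together with the fact that $D$ is a real connection extended complex-linearly and so commutes with $\mathbb{C}$-conjugation, giving $D_{a}(Z^{\ast})=(D_{a}Z)^{\ast}$; combined with the symmetry of $\langle\cdot,\cdot\rangle$ this yields $\nabla_{a}\langle Z,Z^{\ast}\rangle=\langle D_{a}Z,Z^{\ast}\rangle+\langle Z,(D_{a}Z)^{\ast}\rangle=2\func{Re}_{\mathbb{C}}\langle D_{a}Z,Z^{\ast}\rangle$.

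Next I would substitute the internal equation (\ref{susyeqoct2a}) for $D_{a}Z$ and take $\func{Re}_{\mathbb{C}}$ term by term. The key point is that several pairings are \emph{purely $\mathbb{C}$-imaginary} and hence drop out. Using $\langle AB,C\rangle=\langle B,\bar{A}C\rangle$, $\bar{\delta}_{a}=-\delta_{a}$, symmetry of $\langle\cdot,\cdot\rangle$, and the reality of $\delta_{a},G^{(4)},G^{(3)}$, one checks $(\langle\delta_{a}Z,Z^{\ast}\rangle)^{\ast}=-\langle\delta_{a}Z,Z^{\ast}\rangle$, so the $\mu e^{-4\Delta}$ term contributes nothing; and the self-adjointness of $G_{a}^{(3)}$ (Lemma \ref{LemG3G4}) makes $\langle G_{a}^{(3)}(Z),Z^{\ast}\rangle$ $\mathbb{C}$-real, so $12i\langle G_{a}^{(3)}(Z),Z^{\ast}\rangle$ is purely imaginary and also drops. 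For the remaining term I would expand $\langle\delta_{a}G^{(4)}(Z),Z^{\ast}\rangle$ by the first and fourth identities of Lemma \ref{LemG3G4}; the $G_{a}^{(3)}$ piece produced is again purely imaginary by self-adjointness, leaving exactly $\langle\delta_{a}Z,G^{(4)}(Z^{\ast})\rangle$.

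This is where the external equation enters: conjugating (\ref{susyeqoct1a}) gives $G^{(4)}(Z^{\ast})=-iMZ^{\ast}+\lambda^{\ast}e^{-\Delta}Z$. The $\lambda^{\ast}$-piece vanishes because $\langle\delta_{a}Z,Z\rangle=-\langle\delta_{a}Z,Z\rangle=0$, so one is reduced to $\func{Re}_{\mathbb{C}}\langle\delta_{a}Z,MZ^{\ast}\rangle$. Writing $M=\mu e^{-4\Delta}\hat{1}+\tfrac12\widehat{\partial\Delta}$ and using $\delta_{i}\delta_{j}=(-g_{ij},\varphi_{ij}^{\ \ \alpha})$ from Lemma \ref{lemDelProps} to get $\widehat{\partial\Delta}\,\delta_{a}=-(\partial_{a}\Delta)\hat{1}+u_{a}$ with $u_{a}$ a $\mathbb{C}$-real imaginary octonion, and noting once more that $\langle u_{a}Z,Z^{\ast}\rangle$ and $\langle\delta_{a}Z,Z^{\ast}\rangle$ are purely imaginary, a short computation gives $\func{Re}_{\mathbb{C}}\langle\delta_{a}Z,MZ^{\ast}\rangle=\tfrac12(\partial_{a}\Delta)\langle Z,Z^{\ast}\rangle$. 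Hence $\nabla_{a}\langle Z,Z^{\ast}\rangle=(\partial_{a}\Delta)\langle Z,Z^{\ast}\rangle$, and dividing by the positive function $\langle Z,Z^{\ast}\rangle$ gives $d\ln(|X|^{2}+|Y|^{2})=d\Delta$, as desired.

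I expect the only delicate point to be the bookkeeping that distinguishes octonionic conjugation $\bar{\phantom{x}}$ from complex conjugation ${}^{\ast}$, and in particular the careful verification of which inner products are purely $\mathbb{C}$-imaginary; each individual identity is elementary, but it is easy to lose a sign. Note that this argument is uniform in $\mu$ and $\lambda$, so in particular it recovers (\ref{normxsqysq}) in the case $\mu=0$, where (\ref{normxsqysq}) did not follow from (\ref{muXsqYsq}).
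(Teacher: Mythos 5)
Your proposal is correct and follows essentially the same route as the paper's proof: both compute $\nabla\left\langle Z,Z^{\ast }\right\rangle$ from the internal equation, kill the $\mu$- and $G^{\left( 3\right)}$-terms using $\left\langle \delta _{a}A,A\right\rangle =0$ and the adjointness properties of Lemma \ref{LemG3G4}, and then feed in the external equation (the paper via $G^{\left( 4\right) }\left( Z\right)$ decomposed into the $X,Y$ equations, you via its $\mathbb{C}$-conjugate) to reach the same key identity $\nabla \left\langle Z,Z^{\ast }\right\rangle =\left( \partial \Delta \right) \left\langle Z,Z^{\ast }\right\rangle$, i.e. (\ref{nablazzst}). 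The only step to watch is the re-association implicit in using $\widehat{\partial \Delta }\,\delta _{a}$, since octonion multiplication is non-associative; the resulting associator term pairs to zero in the $\mathbb{C}$-real part (as $\left\langle \left[ A,B,C\right] ,C\right\rangle =0$), so your computation goes through as claimed.
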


\begin{proof}
Let us consider the derivative of $\left\langle Z,Z^{\ast }\right\rangle $: 
\begin{eqnarray}
\nabla \left\langle Z,Z^{\ast }\right\rangle &=&\left\langle DZ,Z^{\ast
}\right\rangle +\left\langle Z,DZ^{\ast }\right\rangle  \notag \\
&=&\left\langle -\frac{1}{2}\mu e^{-4\Delta }\delta Z+i\delta G^{\left(
4\right) }\left( Z\right) +12iG^{\left( 3\right) }\left( Z\right) ,Z^{\ast
}\right\rangle  \label{nabzzst1} \\
&&+\left\langle Z,-\frac{1}{2}\mu e^{-4\Delta }\delta Z^{\ast }-i\delta
G^{\left( 4\right) }\left( Z^{\ast }\right) -12iG^{\left( 3\right) }\left(
Z^{\ast }\right) \right\rangle  \notag
\end{eqnarray}%
However from the properties of $G^{\left( 3\right) }$ and $G^{\left(
4\right) }$ from Lemma \ref{LemG3G4}, we get 
\begin{eqnarray*}
\left\langle Z,i\delta G^{\left( 4\right) }\left( Z^{\ast }\right)
\right\rangle &=&-i\left\langle \delta G^{\left( 4\right) }\left( Z\right)
,Z^{\ast }\right\rangle -8i\left\langle G^{\left( 3\right) }\left( Z\right)
,Z^{\ast }\right\rangle \\
\left\langle Z,12iG^{\left( 3\right) }\left( Z^{\ast }\right) \right\rangle
&=&\left\langle 12iG^{\left( 3\right) }\left( Z\right) ,Z^{\ast
}\right\rangle .
\end{eqnarray*}%
Also, 
\begin{eqnarray*}
\left\langle \delta Z,Z^{\ast }\right\rangle +\left\langle Z,\delta Z^{\ast
}\right\rangle &=&\left\langle \delta ,Z^{\ast }\bar{Z}+Z\overline{Z^{\ast }}%
\right\rangle \\
&=&\left\langle \delta ,2\func{Re}_{\mathbb{C}}\left( Z^{\ast }\bar{Z}%
\right) \right\rangle =0
\end{eqnarray*}%
where we have used the property $\func{Re}_{\mathbb{C}}\left( Z^{\ast }\bar{Z%
}\right) =\func{Re}_{\mathbb{O}}\left( Z^{\ast }\bar{Z}\right) $ from (\ref%
{ZZstbar}).Thus, (\ref{nabzzst1}) becomes 
\begin{eqnarray}
\nabla \left\langle Z,Z^{\ast }\right\rangle &=&2i\left\langle \delta
G^{\left( 4\right) }\left( Z\right) ,Z^{\ast }\right\rangle +8i\left\langle
G^{\left( 3\right) }\left( Z\right) ,Z^{\ast }\right\rangle  \notag \\
&=&-2\left\langle \delta \left( MZ-i\lambda Z^{\ast }\right) ,Z^{\ast
}\right\rangle +8i\left\langle G^{\left( 3\right) }\left( Z\right) ,Z^{\ast
}\right\rangle  \notag \\
&=&2\left\langle M,\left( \delta Z^{\ast }\right) \bar{Z}\right\rangle
+8i\left\langle G^{\left( 3\right) }\left( Z\right) ,Z^{\ast }\right\rangle
\label{nabzzst2}
\end{eqnarray}%
where we used the property $\left\langle \delta Z^{\ast },Z^{\ast
}\right\rangle =\left\langle \delta ,Z^{\ast }\overline{Z^{\ast }}%
\right\rangle =0$. Decomposing (\ref{susyeqoct1a}) into $\mathbb{C}$-real
and $\mathbb{C}$-imaginary parts gives us 
\begin{subequations}%
\label{mxmy} 
\begin{eqnarray}
G^{4}\left( X\right) &=&-MY+\lambda _{1}e^{-\Delta }X+\lambda _{2}e^{-\Delta
}Y \\
G^{4}\left( Y\right) &=&MX+\lambda _{2}e^{-\Delta }X-\lambda _{1}e^{-\Delta
}Y.
\end{eqnarray}%
\end{subequations}
Using this decomposition, we get 
\begin{eqnarray}
8\left\langle G^{\left( 3\right) }\left( Z\right) ,Z^{\ast }\right\rangle
&=&8\left\langle G^{\left( 3\right) }\left( X\right) ,X\right\rangle
+8\left\langle G^{\left( 3\right) }\left( Y\right) ,Y\right\rangle  \notag \\
&=&-2\left\langle \delta G^{\left( 4\right) }\left( X\right) ,X\right\rangle
-2\left\langle \delta G^{\left( 4\right) }\left( Y\right) ,Y\right\rangle 
\notag \\
&=&2\left\langle \delta \left( MY-\lambda _{1}X-\lambda _{2}Y\right)
,X\right\rangle -2\left\langle \delta \left( MX+\lambda _{2}X-\lambda
_{1}Y\right) ,Y\right\rangle  \label{G3zzst} \\
&=&2\left\langle \delta \left( MY\right) ,X\right\rangle -2\left\langle
\delta \left( MX\right) ,Y\right\rangle -2\lambda _{2}\left\langle \delta
Y,X\right\rangle -2\lambda _{2}\left\langle \delta X,Y\right\rangle \\
&=&-2\left\langle M,\left( \delta X\right) \bar{Y}\right\rangle
+2\left\langle M,\left( \delta Y\right) \bar{X}\right\rangle
\end{eqnarray}%
Also note that 
\begin{eqnarray}
\left( \delta Z^{\ast }\right) \bar{Z} &=&\left( \delta \left( X-iY\right)
\right) \left( \bar{X}+i\bar{Y}\right)  \notag \\
&=&\delta \left( \left\vert X\right\vert ^{2}+\left\vert Y\right\vert
^{2}\right) +i\left( \left( \left( \delta X\right) \bar{Y}\right) -\left(
\delta Y\right) \bar{X}\right)  \label{Mdelzstz}
\end{eqnarray}%
Thus, substituting (\ref{G3zzst}) and (\ref{Mdelzstz}) into (\ref{nabzzst2}%
), we may conclude that 
\begin{equation}
\nabla \left\langle Z,Z^{\ast }\right\rangle =2\left\langle M,\delta
\right\rangle \left( \left\vert X\right\vert ^{2}+\left\vert Y\right\vert
^{2}\right) =\partial \Delta \left( \left\vert X\right\vert ^{2}+\left\vert
Y\right\vert ^{2}\right)  \label{nablazzst}
\end{equation}%
and hence, (\ref{xysol}).
\end{proof}

\begin{lemma}
Let $A=$ $a_{0}+\hat{\alpha}$ be a (bi)octonion section and let $G$ be a $4$%
-form with a decomposition with respect to a $G_{2}$-structure $\varphi $
given by (\ref{Gdecomp}). Then, $G^{\left( 4\right) }\left( A\right) $
satisfies the following relation 
\begin{equation}
G^{\left( 4\right) }\left( A\right) =-\left( 
\begin{array}{c}
\frac{7}{6}a_{0}Q_{0}+\frac{2}{3}\left\langle Q_{1},\alpha \right\rangle \\ 
\frac{2}{3}a_{0}Q_{1}-\frac{1}{6}Q_{0}\alpha -\frac{2}{3}Q_{2}\left( \alpha
\right)%
\end{array}%
\right)  \label{Q0Q1rel}
\end{equation}
\end{lemma}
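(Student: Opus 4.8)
The plan is to exploit the $\mathbb{C}$-linearity of the operator $G^{(4)}$ together with the two base cases $G^{(4)}(1)$ and $G^{(3)}(1)$ already recorded in Lemma~\ref{LemG4G3}. By its definition (\ref{defG4}), $G^{(4)}$ is a composition of the $\mathbb{R}$-linear left-multiplication maps $L_{\delta^{b}}$ with real-function coefficients $G_{bcde}$; since the imaginary unit commutes with octonions, $G^{(4)}$ is $\mathbb{C}$-linear. Writing $A=a_{0}\hat{1}+\hat{\alpha}$ with $\hat{\alpha}=\alpha^{m}\delta_{m}$, this gives
\[
G^{(4)}(A)=a_{0}\,G^{(4)}(1)+\alpha^{m}\,G^{(4)}(\delta_{m}),
\]
so the first term is immediate from (\ref{G41}) and the whole problem reduces to computing $G^{(4)}(\delta_{m})$ and contracting with $\alpha^{m}$.

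For $G^{(4)}(\delta_{m})$ I would use the first identity of Lemma~\ref{LemG3G4} with $A=\hat{1}$. Since $\delta_{m}\hat{1}=\delta_{m}$, it reads $G^{(4)}(\delta_{m})=\delta_{m}\bigl(G^{(4)}(1)\bigr)+8\,G^{(3)}_{m}(1)$, and both terms on the right are now explicit: $G^{(3)}_{m}(1)$ is (\ref{G31}), while $\delta_{m}(G^{(4)}(1))$ is obtained by substituting $G^{(4)}(1)=\bigl(\tfrac{7}{6}Q_{0},\tfrac{2}{3}Q_{1}\bigr)$ into the left-multiplication formula (\ref{deltaA}) of Lemma~\ref{lemDelProps}, namely $\delta_{m}(b_{0},\beta)=\bigl(-\beta_{m},\ b_{0}\delta_{m}-(\beta\lrcorner\varphi)_{m}\bigr)$. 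Adding the two contributions, collecting the $\delta_{m}$, $(Q_{1}\lrcorner\varphi)_{m}$, and $(Q_{2})_{m}$ pieces, and then contracting with $\alpha^{m}$ produces $G^{(4)}(\hat{\alpha})$: the scalar slot gives a $\langle Q_{1},\alpha\rangle$ term, the $\delta_{m}$-piece gives a $Q_{0}\alpha$ term, the $(Q_{2})_{m}$-piece gives $Q_{2}(\alpha)$, and the $\varphi$-contracted piece gives a cross product of the form $Q_{1}\times\alpha$. Adding $a_{0}G^{(4)}(1)$ then assembles the claimed identity (\ref{Q0Q1rel}).

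The main obstacle is the bookkeeping of coefficients, and in particular checking that the $\varphi$-contracted (cross product) terms coming from $\delta_{m}(G^{(4)}(1))$ and from $8G^{(3)}_{m}(1)$ cancel, so that the $\mathbb{O}$-imaginary part of the final answer involves only $Q_{0}\alpha$ and $Q_{2}(\alpha)$; this forces one to use the precise coefficients of (\ref{G41}) and (\ref{G31}) and the exact sign convention in (\ref{deltaA}). As a partial cross-check, the $\mathbb{O}$-real slot can be read off without any computation from self-adjointness of $G^{(4)}$ (Lemma~\ref{LemG3G4}): $\langle G^{(4)}(A),1\rangle=\langle A,G^{(4)}(1)\rangle$ determines the scalar component of $G^{(4)}(A)$ directly from (\ref{G41}). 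An alternative, self-contained route would avoid $G^{(3)}$ entirely: expand $\delta^{e}A$, then $\delta^{d}(\delta^{e}A)$, and so on using properties 1--3 of Lemma~\ref{lemDelProps}, contract against $G_{bcde}$, and reduce the resulting $\varphi$- and $\psi$-contractions with the identity (\ref{phiphi1}) and the double cross product identity (\ref{doublecrossprod}); this is longer but needs no prior formula for $G^{(4)}(1)$.
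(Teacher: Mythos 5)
Your route is exactly the paper's own proof: decompose $A=a_{0}\hat{1}+\alpha^{m}\delta_{m}$ by linearity, use the first identity of Lemma~\ref{LemG3G4} at $A=1$ to get $G^{(4)}(\delta_{m})=\delta_{m}\bigl(G^{(4)}(1)\bigr)+8G^{(3)}_{m}(1)$, and substitute the explicit expressions of Lemma~\ref{LemG4G3}; the only cosmetic difference is that the paper packages $a_{0}G^{(4)}(1)+\alpha^{m}\delta_{m}\bigl(G^{(4)}(1)\bigr)$ as the single octonion product $AG^{(4)}(1)$ before invoking Lemma~\ref{LemG4G3}. The coefficient bookkeeping you flag (including the cancellation of the $\varphi$-contracted terms, and your self-adjointness check of the $\mathbb{O}$-real slot) is precisely what the paper's closing sentence leaves implicit, so the two arguments coincide.
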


\begin{proof}
We have 
\begin{eqnarray}
G^{\left( 4\right) }\left( A\right) &=&a_{0}G^{\left( 4\right) }\left(
1\right) +\alpha ^{i}G^{\left( 4\right) }\left( \delta _{i}\right)  \notag \\
&=&a_{0}G^{\left( 4\right) }\left( 1\right) +\alpha ^{i}\left( \delta
^{i}G^{\left( 4\right) }\left( 1\right) +8G_{i}^{3}\left( 1\right) \right) 
\notag \\
&=&AG^{\left( 4\right) }\left( 1\right) +8\alpha ^{i}G_{i}^{\left( 3\right)
}\left( 1\right)  \label{G4G3}
\end{eqnarray}%
where we have used properties of $G^{\left( 4\right) }$ from Lemma \ref%
{LemG3G4}. Now, using Lemma \ref{LemG4G3} gives us (\ref{Q0Q1rel}).
\end{proof}

\begin{lemma}
\label{LemG4G3FA}Let $A=$ $a_{0}+\hat{\alpha}$ be a (bi)octonion section,
then the quantities $G^{4}\left( A\right) $ and $G^{\left( 3\right) }\left(
A\right) $ satisfy the following relation 
\begin{equation}
\delta \left( G^{\left( 4\right) }\left( A\right) \right) +12G^{\left(
3\right) }\left( A\right) =F\left( A\right) +\left( \hat{Q}_{2}\bar{A}%
\right) ^{T}  \label{G4G3FA}
\end{equation}%
where 
\begin{equation}
F\left( A\right) =\left( 
\begin{array}{c}
-\frac{4}{3}a_{0}Q_{1}+\frac{11}{6}\alpha Q_{0}-\frac{2}{3}Q_{2}\left(
\alpha \right) +Q_{1}\times \alpha \\ 
\left( \frac{1}{3}\left\langle Q_{1},\alpha \right\rangle +\frac{5}{6}%
a_{0}Q_{0}\right) \delta +\left( -\frac{1}{6}Q_{0}\alpha -\frac{1}{3}%
a_{0}Q_{1}+\frac{1}{3}Q_{2}\left( \alpha \right) \right) \lrcorner \varphi
+\alpha Q_{1}%
\end{array}%
\right)  \label{FA}
\end{equation}
\end{lemma}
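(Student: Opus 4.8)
The plan is to reduce the whole statement to the single identity (\ref{Q0Q1rel}), which already expresses $G^{(4)}$ of an arbitrary (bi)octonion $A=a_{0}+\hat{\alpha}$ in terms of $Q_{0},Q_{1},Q_{2}$. The key observation is that part (a) of Lemma \ref{LemG3G4}, namely $\delta_{a}(G^{(4)}(A))=G^{(4)}(\delta_{a}A)-8G_{a}^{(3)}(A)$, lets one eliminate $G^{(3)}$ entirely: solving for $G_{a}^{(3)}(A)$ and substituting gives
\begin{equation}
\delta_{a}\bigl(G^{(4)}(A)\bigr)+12\,G_{a}^{(3)}(A)=\tfrac{3}{2}\,G^{(4)}(\delta_{a}A)-\tfrac{1}{2}\,\delta_{a}\bigl(G^{(4)}(A)\bigr).
\end{equation}
So the whole computation is driven by one formula, applied to the octonion $A$ and to the octonion $\delta_{a}A$ (for each fixed $a$), together with the left-multiplication rule (\ref{deltaA}) of Lemma \ref{lemDelProps}. (Alternatively one could invoke linearity in $A$ and check the cases $A=1$ and $A=\delta_{j}$ separately, but eliminating $G^{(3)}$ this way is cleaner.)

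Concretely, I would first write $G^{(4)}(A)=(g_{0},\gamma)$ with $g_{0},\gamma$ read off from (\ref{Q0Q1rel}) and apply (\ref{deltaA}) to obtain $\delta_{a}(G^{(4)}(A))=(-\gamma_{a},\,g_{0}\delta_{a}-(\gamma\lrcorner\varphi)_{a})$; then use (\ref{deltaA}) again to put $\delta_{a}A$ in octonion normal form, $\delta_{a}A=(-\alpha_{a},\,a_{0}\delta_{a}-(\alpha\lrcorner\varphi)_{a})$, and feed this into (\ref{Q0Q1rel}). Expanding, the only genuinely new ingredients are $\langle Q_{1},(\alpha\lrcorner\varphi)_{a}\rangle$, $Q_{0}\,(\alpha\lrcorner\varphi)_{a}$, $Q_{2}((\alpha\lrcorner\varphi)_{a})$, and the $(\gamma\lrcorner\varphi)_{a}$ piece, with $\gamma$ itself containing $a_{0}Q_{1}$, $Q_{0}\alpha$ and $Q_{2}(\alpha)$. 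Contractions of $\varphi$ with $\alpha$ are cross products by (\ref{vcrossdef2}), which accounts for the $Q_{1}\times\alpha$ term in $F(A)$, and the $\varphi$–$\varphi$ contractions of $Q_{2}$ are reduced using (\ref{phiphi1}) and (\ref{doublecrossprod}); the $\psi$-terms produced either cancel in pairs or vanish because $Q_{2}$ is symmetric.

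Collecting $\mathbb{O}$-real and $\mathbb{O}$-imaginary parts should then produce $F(A)$ exactly as displayed in (\ref{FA}), plus a residual built solely from $Q_{2}$ (and $a_{0},\alpha$). To recognise this residual I would expand $(\hat{Q}_{2}\bar{A})^{T}$ directly via the octonion product and the transpose (\ref{Atranspose}): with $\bar{A}=(a_{0},-\alpha)$ and $\hat{Q}_{2}$ the $\mathbb{O}M$-valued $1$-form with vanishing $\mathbb{O}$-real part and $\mathbb{O}$-imaginary part $Q_{2}$, one gets $(\hat{Q}_{2}\bar{A})_{m}=(Q_{2}(\alpha)_{m},\,a_{0}(Q_{2})_{m}-(Q_{2})_{m}\times\alpha)$ (with $(Q_{2})_{m}$ the vector $(Q_{2})_{m}{}^{\beta}$), and the transpose swaps the remaining free index; since $Q_{2}$ is symmetric this matches the residual. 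As a consistency check, setting $A=1$ collapses the claim to $\delta_{a}(G^{(4)}(1))+12G_{a}^{(3)}(1)=F(1)+\hat{Q}_{2}$, which is immediate from the explicit $G^{(4)}(1)$, $G^{(3)}(1)$ in Lemma \ref{LemG4G3}.

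The hard part is the $Q_{2}$ bookkeeping: among all the $\varphi$–$\varphi$ contractions generated by substituting $\delta_{a}A$ into (\ref{Q0Q1rel}) and by the $(\gamma\lrcorner\varphi)_{a}$ term, one must isolate precisely the combination that reassembles into $(\hat{Q}_{2}\bar{A})^{T}$, while keeping track of the signs introduced by octonion conjugation and by the transpose. Once that piece is peeled off correctly, the remaining scalar and vector contributions match $F(A)$ termwise by direct comparison with (\ref{FA}), so no further structural input beyond the contraction identities of Section \ref{secg2struct} is needed.
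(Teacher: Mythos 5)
Your argument is correct, but it takes a genuinely different route from the paper's own proof. You eliminate $G^{(3)}(A)$ at the outset via Lemma \ref{LemG3G4}, writing $\delta_a\bigl(G^{(4)}(A)\bigr)+12G_a^{(3)}(A)=\tfrac{3}{2}G^{(4)}(\delta_a A)-\tfrac{1}{2}\delta_a\bigl(G^{(4)}(A)\bigr)$, and then only need (\ref{Q0Q1rel}) applied twice, to $A$ and to $\delta_a A$ put in normal form by (\ref{deltaA}). The paper instead keeps $G^{(3)}(A)$ and expands it through the auxiliary operator $G^{(2)}_{ab}=\tfrac{1}{144}G_{abcd}\delta^c\delta^d$, using $G^{(3)}_a(\delta_b)=-\delta_b G^{(3)}_a-6G^{(2)}_{ab}$, an explicit computation of $\alpha\lrcorner G^{(2)}$ (which is where the $\psi$-terms and transposed cross products enter), and the relation between $G^{(3)}(1)$ and $\delta G^{(4)}(1)$, before reassembling everything including the $\tfrac{3}{2}\bar{A}\,\delta G^{(4)}(1)$ piece. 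Your version is leaner than you anticipate: carrying it out, no $\varphi$--$\varphi$ contractions or $\psi$-terms arise at all, since every term is a single contraction, $\langle Q_1,(\alpha\lrcorner\varphi)_a\rangle=(Q_1\times\alpha)_a$ gives the cross-product term, and the residual $(\hat{Q}_2\bar{A})^t$ drops out directly from $Q_2\bigl((\alpha\lrcorner\varphi)_a\bigr)_\mu=\varphi_{amn}(Q_2)_\mu^{\ m}\alpha^n=\bigl((Q_2\times\alpha)^t\bigr)_{a\mu}$ together with the $a_0 Q_2$ term, so (\ref{phiphi1}) and (\ref{doublecrossprod}) are never needed. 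One caveat: your consistency check at $A=1$ against the explicit values (\ref{G41})--(\ref{G31}) will not close as stated, because the paper carries a sign discrepancy between Lemma \ref{LemG4G3} and (\ref{Q0Q1rel}) (correspondingly, the main-text version (\ref{QA1})--(\ref{FAexp}) and the appendix statement (\ref{G4G3FA})--(\ref{FA}) differ by an overall sign); the check should be run against (\ref{Q0Q1rel}) itself, which is the input your derivation uses and with which the statement as displayed is exactly consistent.
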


\begin{proof}
The expression (\ref{Q0Q1rel}) gives us a relation that expresses $G^{\left(
4\right) }\left( A\right) $ in terms of $A$ and components of $G$ with
respect to $\varphi $. Similarly, let us re-express $G^{\left( 3\right)
}\left( A\right) .$ Using (\ref{octoenvelopCliff}), we can show that 
\begin{equation}
G_{a}^{\left( 3\right) }\left( \delta _{b}\right) =-\delta _{b}G_{a}^{\left(
3\right) }-6G_{ab}^{\left( 2\right) }  \label{G3G2}
\end{equation}%
where $G_{ab}^{\left( 2\right) }:=\frac{1}{144}G_{abcd}\delta ^{c}\delta
^{d}.$ Therefore, 
\begin{eqnarray}
G_{a}^{\left( 3\right) }\left( A\right) &=&a_{0}G_{a}^{\left( 3\right)
}+\alpha ^{i}G_{a}^{\left( 3\right) }\left( \delta _{i}\right)  \notag \\
&=&a_{0}G_{a}^{\left( 3\right) }-\alpha G_{a}^{\left( 3\right) }+6\left(
\alpha \lrcorner G^{\left( 2\right) }\right) _{a}  \notag \\
&=&\bar{A}G_{a}^{\left( 3\right) }+6\left( \alpha \lrcorner G^{\left(
2\right) }\right) _{a}
\end{eqnarray}%
Note however, that if we compare $G^{\left( 3\right) }$ and $\delta
G^{\left( 4\right) }$ in (\ref{G4G3Q}), we find 
\begin{equation}
G_{a}^{\left( 3\right) }=-\frac{1}{8}\left( \delta _{a}G^{\left( 4\right) }+%
\frac{2}{3}\left( Q_{1}\right) _{a}-\frac{1}{6}Q_{0}\delta _{a}-\frac{2}{3}%
\left( \hat{Q}_{2}\right) _{a}\right)
\end{equation}%
where $\hat{Q}_{a}=\left( Q_{2}\right) _{ab}\delta ^{b}$. So overall, 
\begin{equation*}
G_{a}^{\left( 3\right) }\left( A\right) =-\frac{1}{8}\bar{A}\left( \delta
_{a}G^{\left( 4\right) }+\frac{2}{3}\left( Q_{1}\right) _{a}-\frac{1}{6}%
Q_{0}\delta _{a}-\frac{2}{3}\left( \hat{Q}_{2}\right) _{a}\right) +6\left(
\alpha \lrcorner G^{\left( 2\right) }\right) _{a}
\end{equation*}%
It's not difficult to work out $\left( \alpha \lrcorner G^{\left( 2\right)
}\right) $ explicitly. In fact, note that $G^{\left( 2\right) }$ is $\mathbb{%
O}$-imaginary, since $\delta ^{c}\delta ^{d}$ has a symmetric $\mathbb{O}$%
-real part. So, 
\begin{equation*}
\left( \alpha \lrcorner G^{\left( 2\right) }\right) _{a}^{\ m}=\frac{1}{144}%
\alpha ^{b}G_{bacd}\varphi ^{cdm}
\end{equation*}%
Just plugging in the expression (\ref{Gdecomp}) for $G$, we find 
\begin{eqnarray*}
\alpha \lrcorner G^{\left( 2\right) } &=&\frac{1}{72}\left( 2Q_{0}\alpha
-Q_{1}\times \alpha +Q_{2}\left( \alpha \right) \right) \lrcorner \varphi \\
&&+\frac{1}{36}\left\langle Q_{1},\alpha \right\rangle \delta +\frac{1}{72}%
\left( Q_{2}\times \alpha +\left( Q_{2}\times \alpha \right) ^{t}+\alpha
Q_{1}+Q_{1}\alpha \right) \\
&=&\frac{1}{36}\left\langle Q_{1},\alpha \right\rangle \delta +\frac{1}{72}%
\left( 2Q_{0}\alpha +Q_{2}\left( \alpha \right) \right) \lrcorner \varphi -%
\frac{1}{72}\psi \left( Q_{1},\alpha \right) \\
&&-\frac{1}{72}\left( Q_{2}\times \alpha +\left( Q_{2}\times \alpha \right)
^{t}\right) -\frac{1}{36}Q_{1}\alpha
\end{eqnarray*}%
Now, overall, 
\begin{eqnarray*}
\delta _{a}G^{\left( 4\right) }\left( A\right) +12G_{a}^{\left( 3\right)
}\left( A\right) &=&\delta _{a}\left( G^{\left( 4\right) }\left( A\right)
\right) -\frac{3}{2}\bar{A}\left( \delta _{a}G^{\left( 4\right) }\right)
-\left( Q_{1}\right) _{a}\bar{A}+\frac{1}{4}Q_{0}\left( \bar{A}\delta
_{a}\right) +\bar{A}\left( \hat{Q}_{2}\right) _{a} \\
&&+2\left\langle Q_{1},\alpha \right\rangle \delta _{a}+\left( 2Q_{0}\alpha
+Q_{2}\left( \alpha \right) \right) \lrcorner \varphi _{a}-\psi \left(
Q_{1},\alpha \right) _{a}-2\left( Q_{1}\right) _{a}\alpha \\
&&-\left( Q_{2}\times \alpha +\left( Q_{2}\times \alpha \right) ^{t}\right)
_{a}
\end{eqnarray*}%
Note however that 
\begin{equation*}
\bar{A}\hat{Q}_{2}=\left( 
\begin{array}{c}
Q_{2}\left( \alpha \right) \\ 
a_{0}Q_{2}+Q_{2}\times \alpha%
\end{array}%
\right)
\end{equation*}%
and similarly, 
\begin{equation*}
\hat{Q}_{2}\bar{A}=\left( 
\begin{array}{c}
Q_{2}\left( \alpha \right) \\ 
a_{0}Q_{2}-Q_{2}\times \alpha%
\end{array}%
\right)
\end{equation*}%
Hence,%
\begin{eqnarray*}
\left( \hat{Q}_{2}\bar{A}\right) ^{t} &=&\left( 
\begin{array}{c}
Q_{2}\left( \alpha \right) \\ 
a_{0}Q_{2}-\left( Q_{2}\times \alpha \right) ^{t}%
\end{array}%
\right) \\
&=&\bar{A}\hat{Q}_{2}-\left( 
\begin{array}{c}
0 \\ 
Q_{2}\times \alpha +\left( Q_{2}\times \alpha \right) ^{t}%
\end{array}%
\right)
\end{eqnarray*}%
Therefore, we can write 
\begin{eqnarray*}
\delta _{a}G^{\left( 4\right) }\left( A\right) +12G_{a}^{\left( 3\right)
}\left( A\right) &=&\delta _{a}\left( G^{\left( 4\right) }\left( A\right)
\right) -\frac{3}{2}\bar{A}\left( \delta _{a}G^{\left( 4\right) }\right)
-\left( Q_{1}\right) _{a}\bar{A}+\frac{1}{4}Q_{0}\left( \bar{A}\delta
_{a}\right) \\
&&+2\left\langle Q_{1},\alpha \right\rangle \delta _{a}+\left( 2Q_{0}\alpha
+Q_{2}\left( \alpha \right) \right) \lrcorner \varphi _{a}-\psi \left(
Q_{1},\alpha \right) _{a}-2\left( Q_{1}\right) _{a}\alpha \\
&&+\left( \hat{Q}_{2}\bar{A}\right) _{a}^{t}
\end{eqnarray*}%
Now using the expression (\ref{Q0Q1rel}) for $G^{\left( A\right) }\left(
A\right) $, we find, 
\begin{equation*}
\delta _{a}\left( G^{\left( 4\right) }\left( A\right) \right) =\left( 
\begin{array}{c}
\frac{2}{3}a_{0}Q_{1}-\frac{1}{6}Q_{0}\alpha -\frac{2}{3}Q_{2}\left( \alpha
\right) \\ 
-\left( \frac{7}{6}a_{0}Q_{0}+\frac{2}{3}\left\langle Q_{1},\alpha
\right\rangle \right) \delta +\left( \frac{2}{3}a_{0}Q_{1}-\frac{1}{6}%
Q_{0}\alpha -\frac{2}{3}Q_{2}\left( \alpha \right) \right) \lrcorner \varphi%
\end{array}%
\right) .
\end{equation*}%
Using (\ref{G41}) for $G^{\left( 4\right) }\left( 1\right) $ we also get%
\begin{equation*}
\delta \left( G^{\left( 4\right) }\left( 1\right) \right) =\left( 
\begin{array}{c}
Q_{1} \\ 
-\frac{7}{4}Q_{0}\delta +Q_{1}\lrcorner \varphi%
\end{array}%
\right)
\end{equation*}%
and hence, 
\begin{eqnarray*}
\frac{3}{2}\bar{A}\left( \delta G^{\left( 4\right) }\left( 1\right) \right)
&=&\left( 
\begin{array}{c}
a_{0} \\ 
-\alpha%
\end{array}%
\right) \left( 
\begin{array}{c}
Q_{1} \\ 
-\frac{7}{4}Q_{0}\delta +Q_{1}\lrcorner \varphi%
\end{array}%
\right) \\
&=&\left( 
\begin{array}{c}
a_{0}Q_{1}-\frac{7}{4}Q_{0}\alpha -Q_{1}\times \alpha \\ 
\left( -\frac{7}{4}a_{0}Q_{0}+\left\langle Q_{1},\alpha \right\rangle
\right) \delta +\left( a_{0}Q_{1}+\frac{7}{4}Q_{0}\alpha \right) \lrcorner
\varphi -\psi \left( Q_{1},\alpha \right) -\alpha Q_{1}-Q_{1}\alpha%
\end{array}%
\right) .
\end{eqnarray*}%
Thus, combining everything, we do indeed obtain (\ref{FA}).
\end{proof}

\begin{lemma}
\label{LemSkewWPWt}Suppose $\hat{P}$ is a symmetric $\func{Im}\mathbb{O}$%
-valued $1$-form and $W=w_{0}+\hat{w}\in \Gamma \left( \mathbb{O}M\right) $.
Then, 
\begin{equation}
\func{Skew}\left( \func{Im}\left( W\left( \hat{P}\bar{W}\right) ^{t}\right)
\right) =\frac{1}{2}P\left( w\right) \wedge w
\end{equation}
\end{lemma}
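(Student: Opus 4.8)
The plan is to verify the identity by a direct computation in components with respect to the background $G_{2}$-structure $\varphi$, using only the octonion product formula (\ref{octoproddef}), the definitions (\ref{Atranspose}) and (\ref{Asymdef}) of the transpose and of $\func{Skew}$, the index form (\ref{vcrossdef2}) of the cross product, and --- crucially --- the antisymmetry of $\varphi$ together with the symmetry $P_{ab}=P_{ba}$. The contraction identity (\ref{phiphi1}) and the $4$-form $\psi$ are not needed.

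First I would write $\hat{P}_{a}=(0,P_{a})$ with $P_{ab}=P_{ba}$, $W=(w_{0},w)$, $\bar{W}=(w_{0},-w)$, and apply (\ref{octoproddef}) to get $\hat{P}_{a}\bar{W}=(\langle P_{a},w\rangle,\ w_{0}P_{a}-P_{a}\times w)$. By (\ref{Atranspose}) the transpose leaves the $\func{Re}$ part $P(w)_{a}:=P_{ab}w^{b}$ untouched and interchanges the two lower indices of the imaginary part, so $(\hat{P}\bar{W})^{t}_{a}=(P(w)_{a},\ V_{a})$ with $V_{ab}=w_{0}P_{ab}-\varphi_{acd}P_{b}{}^{c}w^{d}$ (the symmetry of $P$ is already used here). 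A second application of (\ref{octoproddef}), now with $W$ on the left, gives the $\func{Im}\mathbb{O}$-part of $W(\hat{P}\bar{W})^{t}$ as the $TM$-valued $1$-form $a\mapsto w_{0}V_{a}+P(w)_{a}\,w+w\times V_{a}$; expanding $w\times V_{a}$ with (\ref{vcrossdef2}) yields the $2$-tensor
\[
R_{ab}=w_{0}^{2}P_{ab}-w_{0}\varphi_{acd}P_{b}{}^{c}w^{d}+P(w)_{a}w_{b}+w_{0}\varphi_{bcd}w^{c}P_{a}{}^{d}-\varphi_{bcd}\varphi_{aef}w^{c}P^{de}w^{f}.
\]

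It then remains to antisymmetrize $R_{ab}$ in $a\leftrightarrow b$, since $\func{Skew}$, viewed through the identification of an $\func{Im}\mathbb{O}M$-valued $1$-form with a $2$-tensor via $\delta$, is precisely $\tfrac{1}{2}$ of the antisymmetric part (this follows from (\ref{Asymdef}), the transpose swapping the two indices). The $w_{0}^{2}$-term is symmetric and drops out; the two $w_{0}$-linear terms, after relabelling the dummy indices contracted against $w$ and using $\varphi_{acd}=-\varphi_{adc}$, pair up and cancel; and the purely quadratic term $\varphi_{bcd}\varphi_{aef}w^{c}P^{de}w^{f}$ is \emph{symmetric} under $a\leftrightarrow b$ --- swapping $a$ with $b$, exchanging the two slots of the symmetric tensor $P^{de}$, and relabelling shows it is unchanged --- so it contributes nothing to the skew part. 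What survives is exactly $P(w)_{a}w_{b}-P(w)_{b}w_{a}=(P(w)\wedge w)_{ab}$, and the factor $\tfrac{1}{2}$ from $\func{Skew}$ gives the asserted equality.

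The only delicate points are the cancellation of the $w_{0}$-linear terms and the symmetry of the $\varphi\varphi$ term; both are pure index bookkeeping, but one must keep careful track of which slot of each $\varphi$ is contracted against $w$ and which against $P$, and invoke $P_{ab}=P_{ba}$ at the right moment. Everything else is mechanical substitution.
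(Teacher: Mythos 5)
Your computation is correct and follows essentially the same route as the paper: a direct expansion of $W(\hat{P}\bar{W})^{t}$ from the definitions of the octonion product and the transpose, with the symmetry of $P$ and the antisymmetry of $\varphi$ eliminating everything except $P(w)_{[a}w_{b]}$. The paper merely packages the whole $W$-quadratic piece as $P^{bc}\langle\delta_{c}\bar{W},\delta_{a}\rangle\langle\delta_{b}\bar{W},\delta_{d}\rangle$, so that its symmetry in $a\leftrightarrow d$ follows at once from $P^{bc}=P^{cb}$, which is exactly what your three separate cancellations (the $w_{0}^{2}$, $w_{0}$-linear, and $\varphi\varphi$ terms) establish in expanded form.
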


\begin{proof}
Using the definition (\ref{Atranspose}) of the transpose for sections of $%
\Omega ^{1}\left( \mathbb{O}M\right) $, we have 
\begin{eqnarray*}
\left( \hat{P}\bar{W}\right) _{a}^{t} &=&P\left( w\right) _{a}+\left\langle 
\hat{P}_{b}\bar{W},\delta _{a}\right\rangle \delta ^{b} \\
&=&P\left( w\right) _{a}+P^{bc}\left\langle \delta _{c}\bar{W},\delta
_{a}\right\rangle \delta _{b}
\end{eqnarray*}%
Hence, 
\begin{equation*}
W\left( \hat{P}\bar{W}\right) ^{t}=P\left( w\right) _{a}W+P^{bc}\left\langle
\delta _{c}\bar{W},\delta _{a}\right\rangle W\delta _{b}
\end{equation*}%
So, 
\begin{eqnarray*}
\func{Im}\left( W\left( \hat{P}\bar{W}\right) ^{t}\right) _{ad}^{\ }
&=&P\left( w\right) _{a}\left\langle W,\delta _{d}\right\rangle
+P^{bc}\left\langle \delta _{c}\bar{W},\delta _{a}\right\rangle \left\langle
W\delta _{b},\delta _{d}\right\rangle \\
&=&P\left( w\right) _{a}w_{d}+P^{bc}\left\langle \delta _{c}\bar{W},\delta
_{a}\right\rangle \left\langle \delta _{b}\bar{W},\delta _{d}\right\rangle
\end{eqnarray*}%
Thus, skew-symmetrizing, we get 
\begin{equation*}
\func{Im}\left( W\left( \hat{P}\bar{W}\right) ^{t}\right) _{[ad]}^{\
}=P\left( w\right) _{[a}w_{d]}.
\end{equation*}
\end{proof}

\begin{lemma}
\label{LemOctSol}Suppose $\hat{P}$ is a symmetric $\func{Im}\mathbb{O}$%
-valued $1$-form and $W\in \Gamma \left( \mathbb{O}M\right) $. Then, for any 
$\func{Im}\mathbb{O}$-valued $1$-form $K$, the equation 
\begin{equation}
\hat{P}\bar{W}-\left\vert W\right\vert ^{2}\left( \hat{P}\bar{W}\right)
^{t}-2w_{0}\hat{P}=K  \label{QKeq}
\end{equation}%
has a unique solution if and only if $W^{2}\neq -1.$ In that case, the
solution is 
\begin{equation}
\hat{P}=-\frac{\left( K+\left\vert W\right\vert ^{2}K^{t}\right) \left( \bar{%
W}+\left\vert W\right\vert ^{2}W\right) }{\left( 1+\left\vert W\right\vert
^{2}\right) \left\vert W\right\vert ^{2}\left\vert 1+W^{2}\right\vert ^{2}}
\label{QKsol}
\end{equation}
\end{lemma}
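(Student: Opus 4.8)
The plan is to treat (\ref{QKeq}) as a linear equation for $\hat{P}$ and solve it by the transpose-and-combine manoeuvre that already appears in Section~\ref{secInteq}. Write the left-hand side as $L(\hat{P})=\hat{P}\bar{W}-|W|^{2}(\hat{P}\bar{W})^{t}-2w_{0}\hat{P}$. First I would apply the transpose $(\cdot)^{t}$ of (\ref{Atranspose}) to the whole equation; since $(\cdot)^{t}$ is an involution that commutes with multiplication by scalar functions, and since $\hat{P}^{t}=\hat{P}$ (the imaginary $1$-form $\hat{P}$ being built from a symmetric $2$-tensor, by (\ref{Atranspose})), this produces the companion equation $(\hat{P}\bar{W})^{t}-|W|^{2}\hat{P}\bar{W}-2w_{0}\hat{P}=K^{t}$. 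Adding $|W|^{2}$ times the companion equation to (\ref{QKeq}) cancels the $(\hat{P}\bar{W})^{t}$ terms and leaves
\[
(1+|W|^{2})\bigl[(1-|W|^{2})\hat{P}\bar{W}-2w_{0}\hat{P}\bigr]=K+|W|^{2}K^{t}.
\]

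The next step is to recognise the bracket as a single right multiplication. Using $2w_{0}=W+\bar{W}$, $|W|^{2}=W\bar{W}$ and alternativity of the octonions, one gets $(1-|W|^{2})\bar{W}-2w_{0}=-(W+|W|^{2}\bar{W})=-W\,\overline{1+W^{2}}$; denote this octonion by $\Xi$, so the combined equation reads $(1+|W|^{2})\,\hat{P}\,\Xi=K+|W|^{2}K^{t}$. Since $W$ is a $\mathbb{C}$-real octonion and $\mathbb{O}$ is a division algebra, $\Xi$ is invertible exactly when $W\neq 0$ and $1+W^{2}\neq 0$, and multiplicativity of $N$ gives $N(\Xi)=N(W)\,N(1+W^{2})=|W|^{2}|1+W^{2}|^{2}$. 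When $W^{2}\neq -1$ I would right-multiply by $\Xi^{-1}$, using $[\hat{P},\Xi,\Xi^{-1}]=0$, to obtain $\hat{P}=\tfrac{1}{1+|W|^{2}}(K+|W|^{2}K^{t})\Xi^{-1}=\tfrac{(K+|W|^{2}K^{t})\overline{\Xi}}{(1+|W|^{2})N(\Xi)}$; substituting $\overline{\Xi}=-(\bar{W}+|W|^{2}W)$ and the value of $N(\Xi)$ yields precisely (\ref{QKsol}). Uniqueness is then immediate: applying the same combination to the homogeneous equation $L(\hat{P})=0$ forces $(1+|W|^{2})\hat{P}\Xi=0$, hence $\hat{P}=0$.

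For the "only if" direction I would note that $W^{2}=-1$ forces $w_{0}=0$ and $|W|^{2}=1$ (write $W=w_{0}+\hat{w}$ and compare real and imaginary parts of $W^{2}=(w_{0}^{2}-|w|^{2})+2w_{0}\hat{w}$ with $-1$); then $\Xi=0$, the combined equation degenerates to a constraint $K+K^{t}=0$ on $K$ alone, and (\ref{QKeq}) itself becomes $\hat{P}\bar{W}-(\hat{P}\bar{W})^{t}=K$, which — because $\bar{W}$ is invertible, so left-multiplication by it is a bijection — has either no solution or a positive-dimensional affine family of symmetric solutions. In neither case is the solution unique.

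The one point requiring genuine care rather than bookkeeping is that the passage from (\ref{QKeq}) to $(1+|W|^{2})\hat{P}\Xi=K+|W|^{2}K^{t}$ is visibly reversible only away from $|W|^{2}=1$. I would therefore close the argument by substituting the candidate (\ref{QKsol}) directly back into (\ref{QKeq}) and checking the identity uniformly; this is the main computational burden. The non-routine inputs it needs are $\Xi=-W\,\overline{1+W^{2}}$, $N(\Xi)=|W|^{2}|1+W^{2}|^{2}$, the product–transpose relation flowing from (\ref{deltaAtr}) and (\ref{Atranspose}), and the expansions of $\hat{P}\bar{W}$ and $\bar{W}\hat{P}$ in terms of $w_{0}$, $w$ and the symmetric tensor underlying $\hat{P}$ that were already exploited in Lemma~\ref{LemSkewWPWt}.
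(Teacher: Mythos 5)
Your proposal is correct and follows essentially the same route as the paper's own proof: transpose the equation, add $\left\vert W\right\vert ^{2}$ times the transposed copy to eliminate $\left( \hat{P}\bar{W}\right) ^{t}$, recognize the resulting left-hand side as right multiplication by $-\left( W+\left\vert W\right\vert ^{2}\bar{W}\right) =-W\left( 1+\bar{W}^{2}\right) $, and invert using $N\left( \Xi \right) =\left\vert W\right\vert ^{2}\left\vert 1+W^{2}\right\vert ^{2}$. Your added remarks (alternativity for the right division, the explicit analysis of the degenerate case $W^{2}=-1$, and the proposed back-substitution check) are embellishments beyond what the paper records but do not change the argument.
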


\begin{proof}
Take the transpose of (\ref{QKeq}), then we get 
\begin{equation}
-\left\vert W\right\vert ^{2}\left( \hat{P}\bar{W}\right) +\left( \hat{P}%
\bar{W}\right) ^{T}-2w_{0}\hat{P}=K^{T}  \label{QKeqT}
\end{equation}%
Then, multiplying (\ref{QKeqT}) by $\left\vert W\right\vert ^{2}$ and adding
it to (\ref{QKeq}), we eliminate $\left( \hat{P}\bar{W}\right) ^{T},$ and
obtain%
\begin{equation*}
\left( 1-\left\vert W\right\vert ^{4}\right) \left( \hat{P}\bar{W}\right)
-2w_{0}\left( 1+\left\vert W\right\vert ^{2}\right) \hat{P}=K+\left\vert
W\right\vert ^{2}K^{T}
\end{equation*}%
and thus, 
\begin{equation*}
\hat{P}\left( \left( 1-\left\vert W\right\vert ^{2}\right) \bar{W}%
-2w_{0}\right) =\frac{K+\left\vert W\right\vert ^{2}K^{T}}{1+\left\vert
W\right\vert ^{2}}
\end{equation*}%
However 
\begin{equation*}
\left( 1-\left\vert W\right\vert ^{2}\right) \bar{W}-2w_{0}=-\left(
W+\left\vert W\right\vert ^{2}\bar{W}\right)
\end{equation*}%
and 
\begin{eqnarray*}
W+\left\vert W\right\vert ^{2}\bar{W} &=&W\left( 1+\left\vert W\right\vert
^{2}\bar{W}W^{-1}\right) \\
&=&W\left( 1+\bar{W}^{2}\right)
\end{eqnarray*}%
Hence, 
\begin{equation*}
\left\vert W+\left\vert W\right\vert ^{2}\bar{W}\right\vert ^{2}=\left\vert
W\right\vert ^{2}\left\vert 1+W^{2}\right\vert ^{2}
\end{equation*}%
Therefore indeed, there is a unique solution if and only if $W^{2}\neq -1,$
and that solution is given by (\ref{QKsol}).
\end{proof}

\bibliographystyle{jhep2}
\bibliography{refs2}

\end{document}